\newcommand{\name}{\textsf{EIFFeL}}
\newcommand{\DP}{\mathcal{D}_P}
\newcommand{\Ser}{\mathcal{S}}
\newcommand{\Cl}{\mathcal{C}}
\newcommand{\CA}{\mathcal{C^*}}
\newcommand{\Ver}{\mathcal{V}}
\newcommand{\Pro}{\mathcal{P}}
\newcommand{\Ul}{\mathcal{U}}
\newcommand{\Bl}{\mathcal{B}}
\newcommand{\Vl}{\textsf{Valid}}
\newcommand{\Fl}{\mathbb{F}}
\newcommand{\Ml}{\textsf{M}}
\newtheorem{theorem}{Theorem}
\newtheorem{definition}{Definition}
\newcommand{\larrow}{\leftarrow}
\newcommand{\randarrow}{\xleftarrow{\$}}
 \newcommand{\SAIV}{\textsf{SAVI}}
 \newcommand\numberthis{\addtocounter{equation}{1}\tag{\theequation}}
\newtheorem{corollary}{Corollary}[theorem]
\newtheorem{lemma}[theorem]{Lemma}
 \newcommand{\squishlist}{
	\begin{list}{$\bullet$}
		{
			\setlength{\itemsep}{0pt}
			\setlength{\parsep}{0.5pt}
			\setlength{\topsep}{0.5pt}
			\setlength{\partopsep}{0pt}
			\setlength{\leftmargin}{1em}
			\setlength{\labelwidth}{1em}
			\setlength{\labelsep}{0.5em} } }
\newcommand{\squishend}{
\end{list}  }
  \newcommand{\squishlistdash}{
	\begin{list}{$-$}
		{
			\setlength{\itemsep}{0pt}
			\setlength{\parsep}{0.5pt}
			\setlength{\topsep}{0.5pt}
			\setlength{\partopsep}{0pt}
			\setlength{\leftmargin}{1em}
			\setlength{\labelwidth}{1em}
			\setlength{\labelsep}{0.5em} } }
\newcommand{\squishenddash}{
\end{list}  }
\newcommand*{\img}[1]{%
    \raisebox{-.3\baselineskip}{%
        \includegraphics[
        height=\baselineskip,
        width=\baselineskip,
        keepaspectratio,
        ]{#1}%
    }%
}
 \newcommand*{\imag}[1]{%
    \raisebox{-.04\baselineskip}{%
        \includegraphics[
        height=0.8\baselineskip,
        width=\baselineskip,
        keepaspectratio,
        ]{#1}%
    }%
}
\definecolor{aliceblue}{rgb}{0.94, 0.97, 1.0}
\begin{document}
\title{\textsf{E}\hspace{-0.057cm}\scalebox{0.85}{\imag{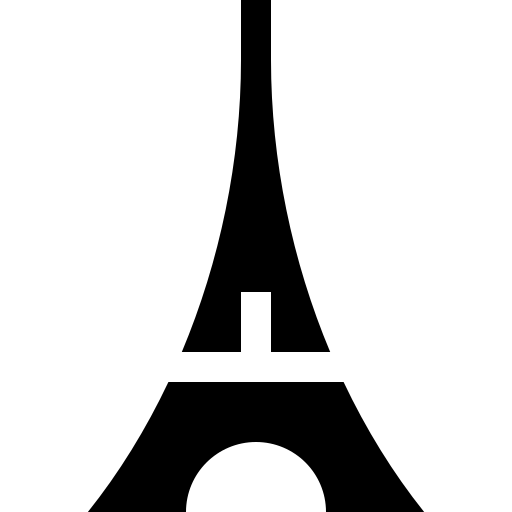}}\hspace{-0.057cm}\textsf{FFeL}: Ensuring Integrity For Federated Learning}


\settopmatter{authorsperrow=4}
\author{Amrita Roy Chowdhury}\authornote{Work done during internship at Meta AI}
\affiliation{UW Madison \country{}}
\author{Chuan Guo}
\affiliation{Meta AI \country{}}
\author{Somesh Jha}
\authornote{Employed part-time at Meta during this work}
\affiliation{UW Madison \country{}}
\author{Laurens van der Maaten}
\affiliation{Meta AI \country{}}
\renewcommand{\shortauthors}{Roy Chowdhury et al.}



\begin{abstract}
Federated learning (FL) enables clients to collaborate with a server to train a machine learning model. To ensure privacy, the server performs secure aggregation of updates from the clients. 
Unfortunately, this prevents verification of the well-formedness (integrity) of the updates  as the updates are masked. Consequently, malformed updates  designed to poison the model can be injected without detection. In this paper, we formalize the problem of ensuring \textit{both} update privacy and integrity in FL and present a new system, \name, that enables secure aggregation of \textit{verified} updates. \name~is a general framework that can  enforce \textit{arbitrary} integrity checks and remove  malformed updates from the aggregate, without violating privacy. Our empirical evaluation demonstrates the practicality of \name. For instance, with $100$ clients and $10\%$ poisoning, \name~can train an \textsf{MNIST} classification model to the same accuracy as that of a non-poisoned federated learner in just $2.4$s per iteration.
\end{abstract}
\begin{CCSXML}
<ccs2012>
   <concept>
       <concept_id>10002978.10002979</concept_id>
       <concept_desc>Security and privacy~Cryptography</concept_desc>
       <concept_significance>500</concept_significance>
       </concept>
   <concept>
       <concept_id>10002978.10002991.10002995</concept_id>
       <concept_desc>Security and privacy~Privacy-preserving protocols</concept_desc>
       <concept_significance>500</concept_significance>
       </concept>
 </ccs2012>
\end{CCSXML}

\ccsdesc[500]{Security and privacy~Cryptography}
\ccsdesc[500]{Security and privacy~Privacy-preserving protocols}
\keywords{Poisoning Attacks, Input Integrity, Secure 
Aggregation}

\maketitle
\section{Introduction}\label{sec:intro}
  \begin{figure}\centering
  \begin{minipage}[b]{1.0\linewidth}
    \centering
   
 \raisebox{\dimexpr -0.5\height}{\includegraphics[width=0.8\textwidth]{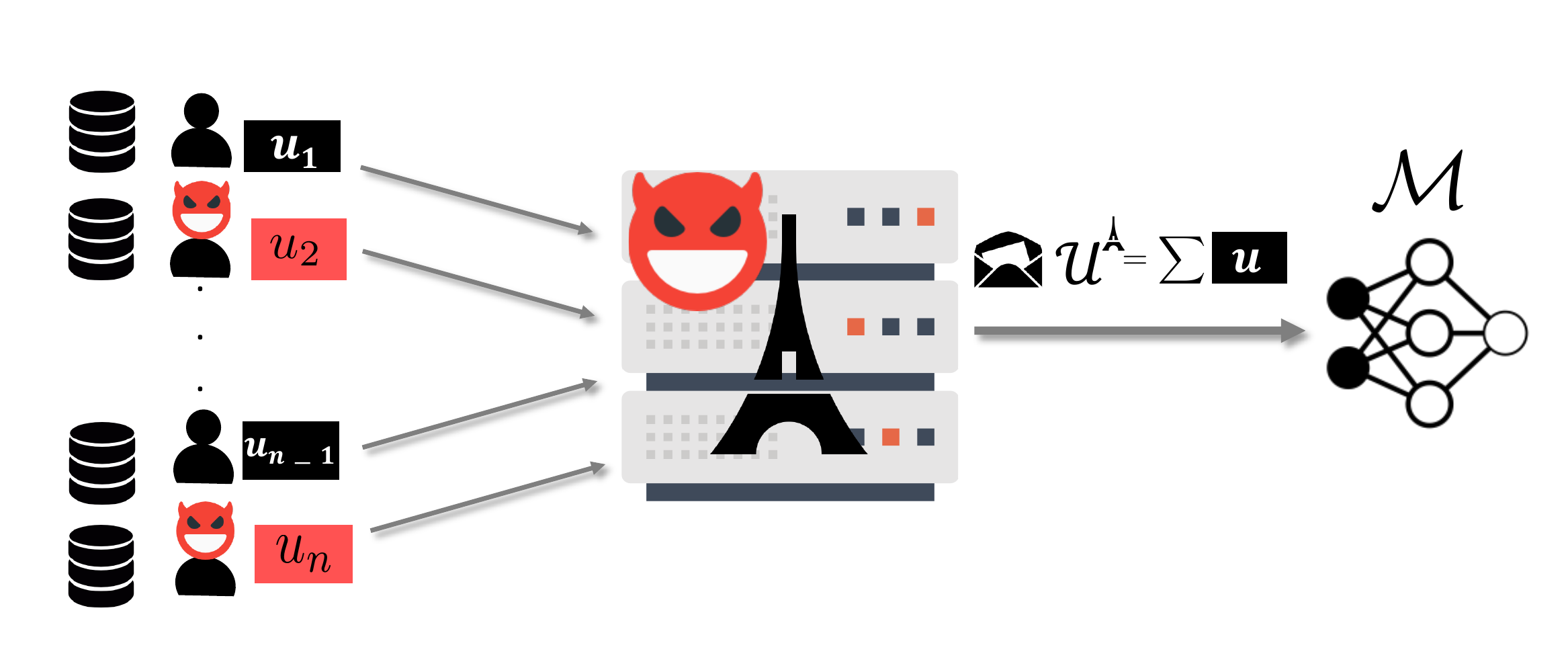}}
    \\
   \resizebox{0.8\columnwidth}{!}{\begin{tabular}{lc}
      \toprule
      \bf Security Goal & \bf Cryptographic Primitive\\\midrule
      Input Privacy & Shamir's Threshold  Secret Sharing Scheme~\cite{Shamir79}\\
      ~&~\\
         \multirow{2}{*}{Input Integrity} & Secret-Shared Non-Interactive Proof~\cite{corrigan2017prio} \\&Verifiable Secret Shares~\cite{Feldman87} \\
      \bottomrule
    \end{tabular}}
  \end{minipage}\vspace{-0.1cm}\caption{\name~performs secure aggregation of \emph{verified} inputs in FL. The table lists its security goals and the cryptographic primitives we adopt to achieve them.}\label{fig:name:overview}\vspace{-0.4cm}
  \end{figure}
Federated learning (FL; \cite{mcmahan2017communication}) is a learning paradigm for decentralized data in which multiple clients collaborate with a server to train a machine-learning (ML) model. 
Each client computes an update on its \textit{local} training data and shares it with the server; the server aggregates the local updates into a \textit{global} model update.
This allows the clients to contribute to model training without sharing their private data. However, the local updates can still reveal information about a client's private data \cite{Melis2019,Bhowmick2018ProtectionAR,Zhu2019DeepLF,Yin2021SeeTG,nasr2019}. FL addresses this by using \emph{secure aggregation}: clients mask the updates they share, and the server can recover \emph{only} the final aggregate in the clear.

A major challenge in FL is that it is vulnerable to Byzantine attacks. 
In particular, malicious clients can inject poisoned updates into the learner with the goal of reducing the global model accuracy \cite{biggio2021poisoning,mei2015teaching,Fang2020LocalMP,bhagoji2019analyzing,kairouz2019advances} or implanting backdoors in the model that can be exploited later \cite{chen2017targeted,bagdasaryan2018backdoor,Xie2020DBADB}.
Even a single malformed update can significantly alter the trained model~\cite{blanchard2017machine}. 
Thus, ensuring the well-formedness of the updates, \emph{i.e.}, upholding their \textit{integrity}, is essential for ensuring robustness in FL. This problem is especially challenging in the context of secure aggregation as the individual updates are masked from the server, which prevents audits on them. 

These challenges in FL  lead to the research question: \emph{How can a federated learner efficiently verify the integrity of clients' updates without violating their privacy?}

We formalize this problem by proposing \textit{secure aggregation of verified inputs} (\SAIV) protocols that: $(1)$ securely verify the integrity of each local  update, $(2)$ aggregate \textit{only} well-formed updates, and $(3)$ release only the final aggregate in the clear. 
 A \SAIV~protocol allows for checking the well-formedness of updates \textit{without observing them}, thereby ensuring \textit{both} the privacy and integrity of updates.  

We demonstrate the feasibility of \SAIV~ by proposing \name: a system that instantiates a \SAIV~protocol that can perform \textit{any integrity check that can be expressed as an arithmetic circuit with public parameters}.  This provides \name~the flexibility to implement a plethora of modern ML approaches that ensure robustness to Byzantine attacks by checking the integrity of per-client updates before aggregating them \cite{sun2019backdoor,steinhardt2017certified,xie2020zeno,xie2019zeno,Li20,Damaskinos2018AsynchronousBM,bagdasaryan2018backdoor,Shejwalkar2021ManipulatingTB}. \name~is a general framework that empowers a federated learner to deploy (multiple) \textit{arbitrary} integrity checks of their choosing on the ``masked'' updates.

\name~uses secret-shared non-interactive proofs (SNIP;~\cite{corrigan2017prio}) which are a type of zero-knowledge proofs that are optimized for the client-server setting.  
SNIP, however, requires multiple honest verifiers to check the proof. \name~extends SNIP to a \textit{malicious} threat model by carefully \textit{co-designing its architectural and  cryptographic components}. 
Moreover, we develop a suite of optimizations that improve \name's performance by at least $2.3\times$. 
Our empirical evaluation of
\name~demonstrates its practicality for real-world usage. For instance, with $100$ clients and a poisoning rate of $10\%$, \name~can train an \textsf{MNIST} classification model to the same accuracy as that of a non-poisoned federated learner in just $2.4s$ per iteration.

\section{Problem Overview}\label{sec:overview}
\begin{figure*}[ht]
    \begin{subfigure}{0.23\linewidth}
        \hspace{-0.1cm} \includegraphics[width=0.9\linewidth,left]{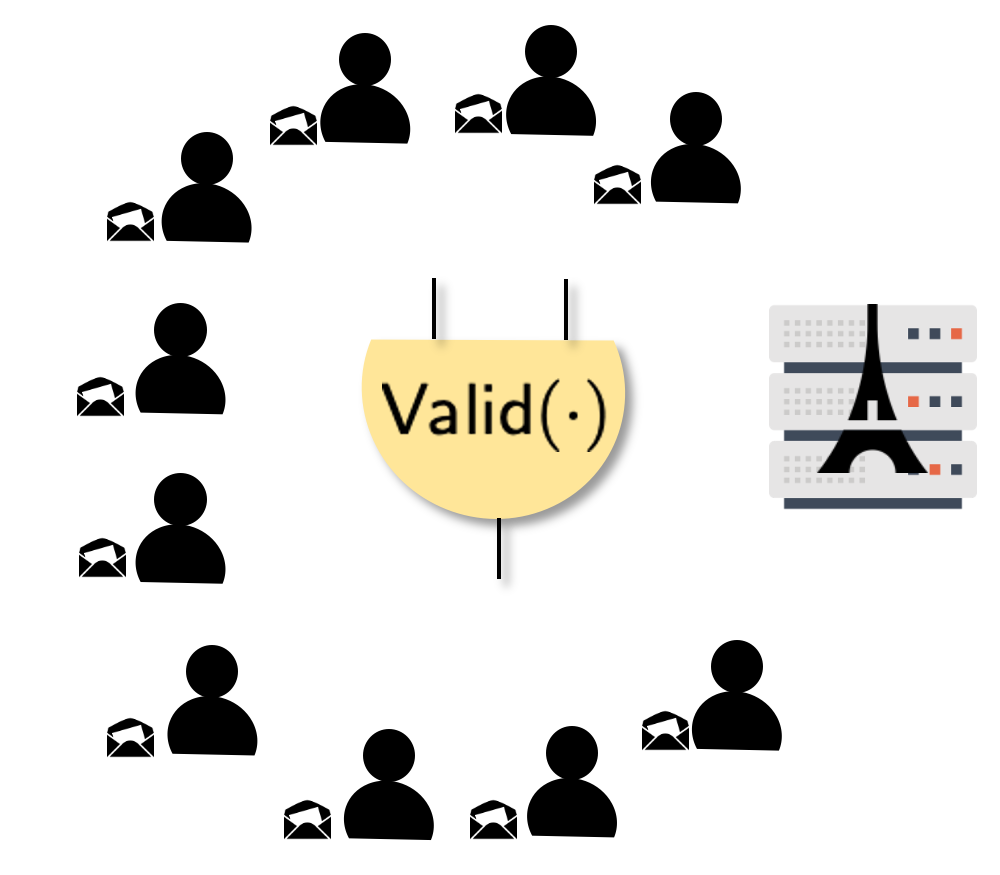}
       \caption{ \name~consists of multiple clients $\Cl$ and a server $\Ser$ with a public validation predicate $\Vl(\cdot)$ that defines the integrity check. A client $\Cl_i$ needs to provide a proof $\pi_i$ for $\Vl(u_i)=1$ (Round 1).}
        \label{fig:Eiffel1}
    \end{subfigure}
    ~~~~~
    \hspace{0.1cm}\begin{subfigure}{0.23\linewidth}
  \includegraphics[width=0.9\linewidth]{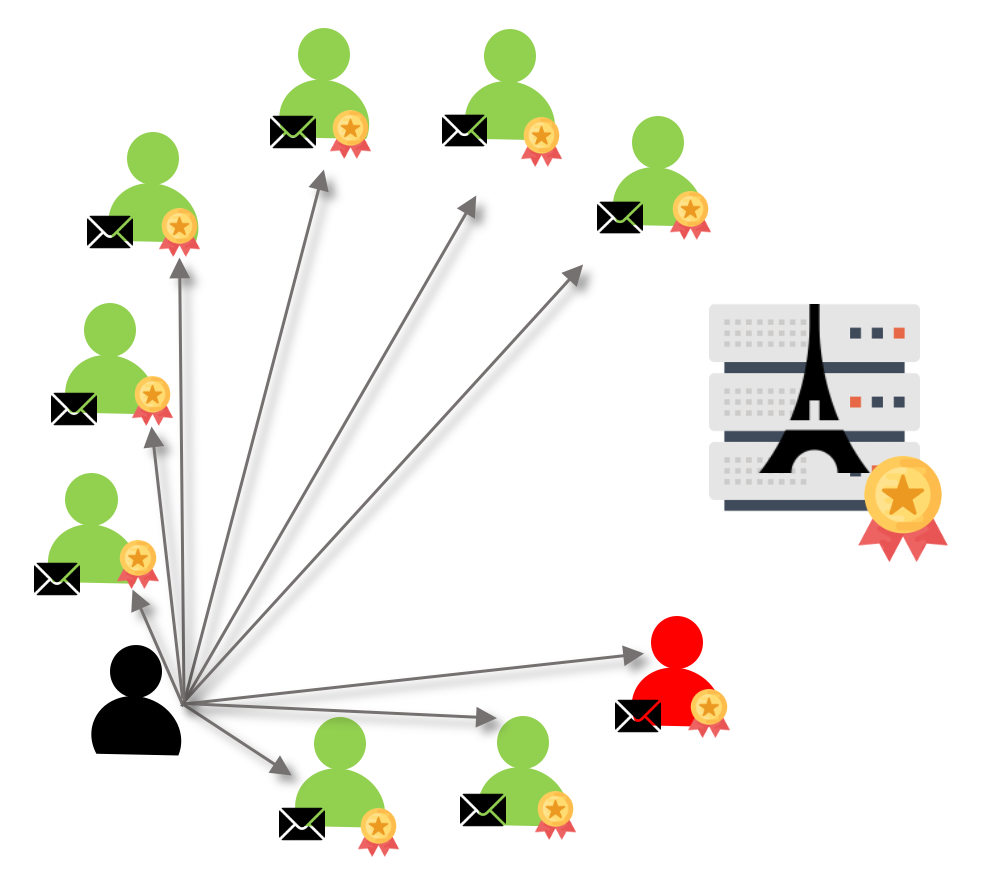}   
 \caption{For checking the proof $\pi_i$, all other clients $\Cl_{\setminus i}$ act as the verifiers under the supervision of $\Ser$. $C_i$ splits its update $u_i$ and proof $\pi_i$ using Shamir's  scheme with threshold $m+1$   and shares it with $\Cl_{\setminus i}$ (Round 2).}
        \label{fig:Eiffel2}\end{subfigure}
        ~~~~
         \hspace{0.1cm} \begin{subfigure}
         {0.23\linewidth}
    \includegraphics[width=0.9\linewidth]{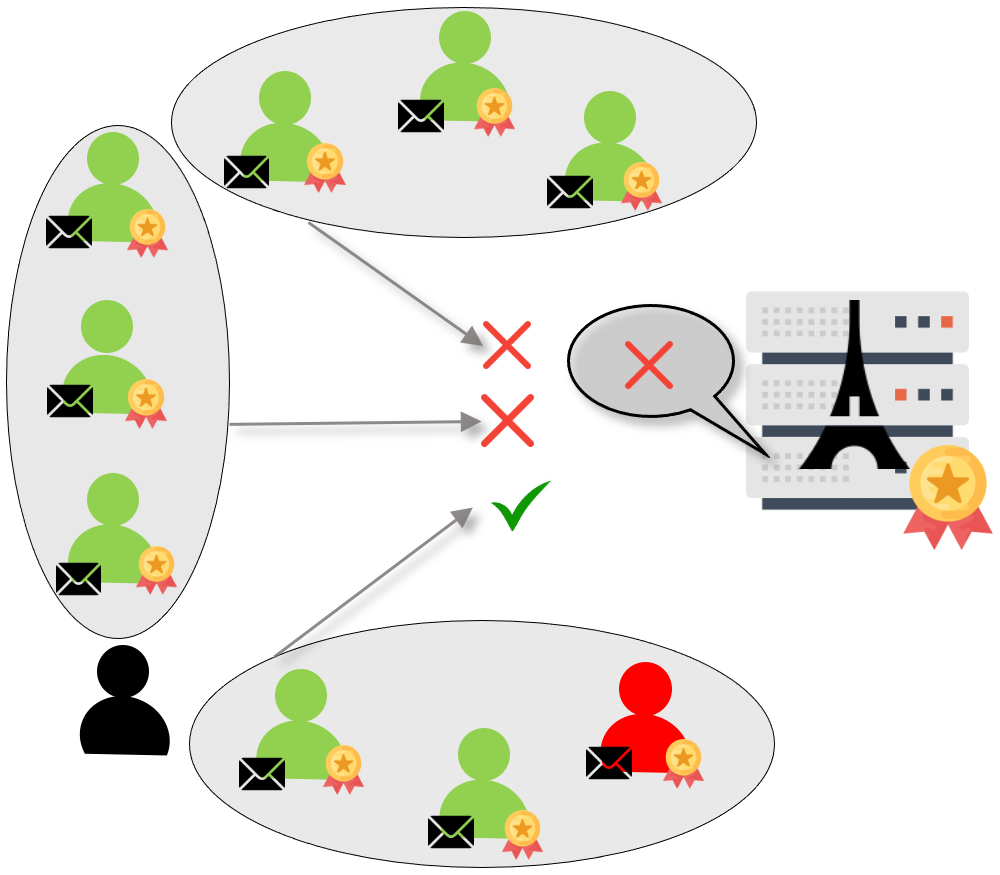}   
 \caption{Conceptually, any set of $m+1$ clients in $\Cl_{\setminus i}$ can emulate the SNIP verification protocol. The server uses this redundancy to \textit{robustly} verify the proof (Round 3).\\~}
        \label{fig:Eiffel3}\end{subfigure}
        ~~~~ \hspace{0.1cm} 
         \begin{subfigure}{0.23\linewidth}
    \raggedright\hspace{0.05cm} \includegraphics[width=0.9\linewidth,right]{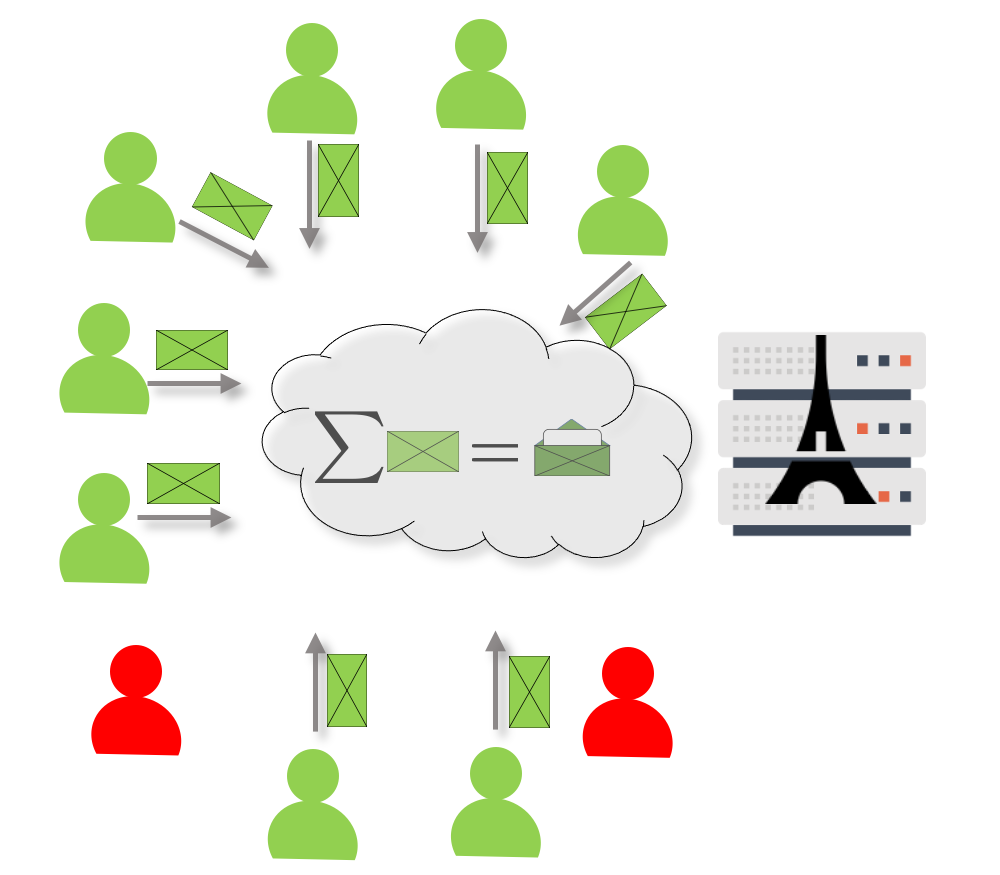}   
 \caption{ The clients only aggregate the shares of well-formed updates and the resulting aggregate is revealed to the server (Round 4).\\~\\}
        \label{fig:Eiffel4}\end{subfigure}
  \vspace{-0.3cm} \caption{High-level overview of \name. See Sec.~\ref{sec:solution_overview} for key ideas, and Sec.~\ref{sec:system:workflow} for a detailed description of the system.}
   \label{fig:name}
   \vspace{-0.3cm}
\end{figure*}
In this section, we introduce the problem setting, followed by its threat analysis and an overview of our solution. 
\vspace{-0.1cm}
\subsection{Problem Setting}\label{sec:overview:setting}
In FL, multiple parties with distributed data jointly train a \emph{global model}, $\mathcal{M}$,  without explicitly disclosing their data to each other. FL has two types of actors:
\vspace{-0.25cm}
\squishlist
\item \textbf{Clients.} There are $n$ clients where each client, $\Cl_i, i \in [n]$, owns a private dataset, $D_i$. The raw data is never shared, instead, every client computes a local update for $\mathcal{M}$, such as the average gradient, over the private dataset $D_i$.
\item \textbf{Server.} There is a single \textit{untrusted} server, $\Ser$, who coordinates the updates from different clients to train $\mathcal{M}$. 
 \squishend
 \vspace{-0.25cm}
A single training iteration in FL consists of the following steps:

\vspace{-0.2cm}
\squishlist
\item \textbf{Broadcast.} The server broadcasts the current parameters of the model $\mathcal{M}$ to all the clients. 
\item \textbf{Local computation.} Each client $\Cl_i$ locally computes an update, $u_i$, on its dataset $D_i$. 
\item \textbf{Aggregation.} The server $\Ser$ collects the client updates and aggregates them, $\Ul = \sum_{i\in [n]} u_i$.
\item \textbf{Global model update.} The server $\Ser$ updates the global model $\mathcal{M}$ based on the aggregated update $\Ul$. 
\squishend
\vspace{-0.2cm}
In settings where there is a large number of clients, it is typical to subsample a small subset of clients to participate in a given iteration. We assume $n$ to denote the number of clients that participate in each iteration and $\Cl$ denotes the subset of these $n$ clients, which the server announces at the beginning of the iteration.
\vspace{-0.2cm}

\subsection{Security Goals}\label{sec:overview:goals}
\squishlist\item \textbf{Input Privacy (Client's Goal).} The first goal is to ensure privacy for all \textit{honest} clients. That is, no party should be able learn anything about the raw input (update) $u_i$ of an honest client \scalebox{0.9}{$C_i$}, other than what can be learned from the final aggregate \scalebox{0.9}{$\Ul$}.  
\item \textbf{Input Integrity (Server's Goal).} The server $\Ser$ is motivated to ensure that the individual updates from each client are well-formed. Specifically, the server has a \textit{public} validation predicate, \scalebox{0.9}{$\textsf{Valid}(\cdot)$}, that defines a syntax for the inputs (updates). An input (update) $u$ is considered valid and, hence, passes the integrity check iff \scalebox{0.9}{$\textsf{Valid}(u)=1$}. For instance, any per-client update check, such as  Zeno++~\cite{xie2020zeno}, can be a good candidate for \scalebox{0.9}{$\Vl(\cdot)$} (we evaluate four state-of-the-art validation predicates in Sec. \ref{sec:eval:models}). 
\squishend
\vspace{-0.2cm}
We assume that the honest clients, denoted by \scalebox{0.9}{$\Cl_H$}: $(1)$ follow the protocol correctly, \textit{and} $(2)$ have well-formed inputs. We require the second condition because, in case the input of an honest client does not pass the integrity check (which can be verified locally since \scalebox{0.9}{$\Vl(\cdot)$} is public), the client has no incentive to participate in the training iteration.

\vspace{-0.1cm}
\subsection{Threat Model}
We consider a \textit{malicious adversary} threat model:
\squishlist\vspace{-0.2cm}
\item \textbf{Malicious Server.} We consider a malicious server that can deviate from the protocol arbitrarily with the aim of recovering the raw updates $u_i$ for $i \in [n]$ (see Remark 1 later for more details). 
\item \textbf{Malicious Clients.}   We also consider a set of $m$ malicious clients, $\Cl_M$. Malicious clients can arbitrarily deviate from the protocol with the goals of: (1) sending malformed inputs to the server and thus, compromising the final aggregate; (2) failing the integrity check of an honest client that submits well-formed updates;  (3) violating the privacy of an honest client, potentially in collusion with the server. 
\squishend

\subsection{Solution Overview}
\label{sec:solution_overview}
Prior work has mostly focused on ensuring input privacy via secure aggregation, \emph{i.e.}, securely computing the aggregate \scalebox{0.9}{$\Ul=\sum_{\Cl_i\in\Cl}u_i$}.
Motivated by the above problem setting and threat analysis, we introduce a new type of FL protocol, called \textit{secure aggregation with verified inputs} (\SAIV), that ensures \textit{both} input privacy and integrity. The goal of a \SAIV~protocol is to securely aggregate \textit{only} well-informed inputs.  \\In order to demonstrate the feasibility of \SAIV, we propose \name: a system that instantiates a \SAIV~protocol for any \scalebox{0.9}{$\Vl(\cdot)$} that can be expressed as an arithmetic circuit with public parameters (Fig. \ref{fig:name:overview}). \name~ensures input privacy by using Shamir's threshold secret sharing scheme~\cite{Shamir79} (Sec. \ref{sec:system:block}). 
Input integrity is guaranteed via SNIP and verifiable secret shares (VSS) which validates the correctness of the secret shares (Sec. \ref{sec:system:block}).  
The key ideas are:
\vspace{-0.2cm}
\squishlist \item SNIP requires multiple honest verifiers. \name~enables this in a single-server setting by having the clients act as the verifiers for each other under the supervision of the server (in Fig. \ref{fig:Eiffel2}, verifiers are marked by \img{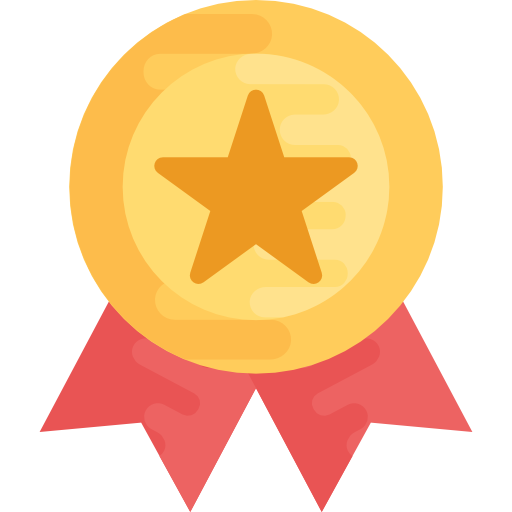}).
\item \name~extends SNIP to the malicious threat model to account for the malicious clients (verifiers). Our key observation is that using a threshold secret sharing scheme creates multiple subsets of clients that can emulate the SNIP verification protocol. The server uses this redundancy to robustly verify the proofs and aggregate updates with verified proofs \textit{only} (Fig. \ref{fig:Eiffel3} and \ref{fig:Eiffel4}).  \squishend

\section{Secure Aggregation with Verified Inputs} 
\label{sec:overview:savi}

Below, we provide the formal definition of a \emph{secure aggregation with verified inputs} (\SAIV) protocol.

    \begin{definition}Given a public validation predicate \scalebox{0.9}{$\Vl(\cdot)$} and security parameter \scalebox{0.9}{$\kappa$}, a protocol \scalebox{0.9}{$\Pi(u_1,\cdots,u_n)$} is a secure aggregation with verified inputs (\SAIV) protocol if: \vspace{-0.1cm}\squishlist \item \textbf{Integrity.} The output of the protocol, $\textsf{out}$, returns the aggregate of a subset of clients, \scalebox{0.9}{$\Cl_\Vl$}, such that all clients in \scalebox{0.9}{$\Cl_\Vl$} have  well-formed inputs. \vspace{-0.2cm} \begin{gather*} \mathrm{Pr}\big[\textsf{out}=\Ul_{\Vl}\big] \geq 1 -\mathrm{negl}(\kappa) \mbox{ where }\Ul_\Vl=\sum_{\Cl_i\in \Cl_\Vl}u_i \\ \mbox{for all } \Cl_i \in \Cl_\Vl \mbox{ we have } \Vl(u_i)=1 \\ \Cl_H\subseteq\Cl_\Vl\subseteq\Cl.   \numberthis\label{eq:SAVI:integrity}\vspace{-0.5cm}\end{gather*} \vspace{-0.5cm}

    \item \textbf{Privacy.} For a set of malicious clients \scalebox{0.9}{$\Cl_M$} and a malicious server \scalebox{0.9}{$\Ser$}, there exists a probabilistic polynomial-time (P.P.T.) simulator \scalebox{0.9}{$\mathrm{Sim}(\cdot)$} such that: \begin{gather*}\vspace{-0.3cm}\mathrm{Real}_{\Pi}\big(
    \{u_{\Cl_H}\}, \Omega_{\Cl_M\cup\Ser}\big)\equiv_C\mathrm{Sim}\big(\Ul_H,\Cl_H,\Omega_{\Cl_M\cup \Ser}\big)\\\mbox{where }\Ul_H=\sum_{\Cl_i\in \Cl_H}u_i.\numberthis\label{eq:SAVI:privacy}\vspace{-0.5cm}\end{gather*}  \scalebox{0.9}{$\{u_{\Cl_H}\}$} denotes the input of all the honest clients, \scalebox{0.9}{$\mathrm{Real}_\Pi$} denotes a random variable representing the joint view of all the parties in $\Pi$'s execution,  \scalebox{0.9}{$\Omega_{\Cl_M\cup\Ser}$} indicates a polynomial-time algorithm implementing
the “next-message” function of the parties in \scalebox{0.9}{$\Cl_M\cup\Ser$} (see App. \ref{app:security}), 
and \scalebox{0.9}{$\equiv_C$} denotes computational indistinguishability. \squishend \label{def:SAVI}\end{definition}
From Def. \ref{def:SAVI}, the output of a \SAIV~protocol is of the form:\vspace{-0.1cm} \begin{gather}\Ul_{valid}=\underbrace{\Ul_H}_{\substack{\text{well-formed updates of}\\\textit{all }\text{honest clients }\Cl_H}}+\underbrace{\sum_{\Cl_i\in \Cl_\Vl\setminus\Cl_H}u_i.}_{\substack{\text{well-formed updates of}\\\text{some malicious clients}}}\label{eq:SAVI:U}\end{gather} 
The clients in \scalebox{0.9}{$\Cl_\Vl\setminus \Cl_H$} are clients who have submitted well-formed inputs but can behave maliciously otherwise (\emph{e.g.}, by violating input privacy/integrity of honest clients).

The privacy constraint of the \SAIV~protocol means that a simulator \textsf{Sim} can generate the views of all parties with just access to the list of the honest clients \scalebox{0.9}{$\Cl_H$} and their aggregate \scalebox{0.9}{$\Ul_H$}. Note that \textsf{Sim} takes \scalebox{0.9}{$\Ul_H$} as an  input instead of the protocol output \scalebox{0.9}{$\Ul_\Vl$}. This is because the clients in \scalebox{0.9}{$\Cl_\Vl\setminus\Cl_H$}, by virtue of being malicious, can behave arbitrarily and announce their updates to reveal \scalebox{0.9}{$\Ul_H=\Ul_\Vl-\sum_{\Cl_i\in \Cl_\Vl\setminus\Cl_H} u_i$}. 
Thus, \SAIV~ensures that nothing can be learned about the input $u_i$ of an honest client \scalebox{0.9}{$\Cl_i \in \Cl_H$} except:
    \squishlist
        \item that $u_i$ is well-formed, \emph{i.e.}, \scalebox{0.9}{$\Vl(u_i)=1$},
        \item anything that can be learned from the aggregate $\Ul_H$.
    \squishend
    
 \begin{tcolorbox}[sharp corners, breakable]\vspace{-0.2cm}\textbf{Remark 1.} Note that we consider a malicious server only for input privacy and the reason is as follows. For input integrity, a malicious server can do the following:
\squishlist \item \textbf{Case 1.}  Mark the input of an honest client as invalid and not include it in the final aggregate.

\item \textbf{Case 2.} Mark the (invalid) input of a malicious client as valid.
\squishend
\name~prevents Case 1 from happening since, from Def. \ref{def:SAVI}, \name~is guaranteed to output the aggregate of \textit{all} honest clients (Lemma \ref{lemma:3}). If we consider a malicious server even for data integrity, the only thing that can happen is Case 2. However, the server’s primary goal is to ensure that each input is well-formed. Hence, Case 2, i.e., marking the (invalid) input of a malicious client as valid, is at odds with the server’s goal.  Therefore, it is unnecessary to protect against this behavior in our setting and we consider the server to be honest for the purposes of input integrity. 

\end{tcolorbox}
    \begin{tcolorbox}[sharp corners]\vspace{-0.2cm} \textbf{Remark 2.} The integrity constraint of \SAIV~requires the protocol to detect \textit{and} remove \textit{all} malformed inputs before computing the final aggregate. Note that there is a fundamental difference between the design choice of just detection of a malformed input versus detection \textit{and} removal. 
In the former, the server can only abort the current round even when just a \textit{single} malformed input is detected. 
This allows an adversary to stage a denial-of-service attack that renders the server incapable of training the model.  
When the protocol can both detect and remove malformed inputs, such denial-of-service attacks are prevented as the server can train the model using just the valid updates.
\vspace{-0.2cm}\end{tcolorbox}
    
\vspace{-0.2cm}\section{\textsf{EIFF}\texorpdfstring{\MakeLowercase{e}}{e}\textsf{L}~System Description}\label{sec:system}
This section introduces \name: the system we propose to perform secure aggregation of verified inputs.  
\vspace{-0.3cm}\subsection{Cryptographic Building Blocks}\label{sec:system:block}
\textbf{Arithmetic Circuit.} An arithmetic circuit, \scalebox{0.9}{$\mathcal{C}: \Fl^k\mapsto \Fl$}, represents a computation over a finite field \scalebox{0.9}{$\Fl$}.
Conceptually, it is similar to a Boolean circuit but it uses finite field addition, multiplication and multiplication-by-constant instead of \textsf{OR}, \textsf{AND}, and \textsf{NOT} gates. 

\noindent\textbf{Shamir's $t$-out-of-$n$ Secret Sharing Scheme \cite{Shamir79}} allows distributing a secret $s$ among $n$ parties such that: (1) the complete secret can be reconstructed from any combination of $t$ shares; (2) any set of 
$t - 1$ or fewer shares reveals no information about $s$ where $t$ is the \textit{threshold} of the secret sharing scheme. The scheme is parameterized over a finite field $\Fl$ and consists of  two algorithms:
\squishlist\item \scalebox{0.9}{$\{(i,s_i)\}_{i\in P}\randarrow \textsf{SS.share}(s,P,t)$}. Given a secret $s  \in \Fl$, a set of $n$ unique field elements \scalebox{0.9}{$P \in\Fl^n$} and a threshold $t$ with $t \leq n$, this algorithm constructs $n$ shares. The algorithm chooses a random polynomial \scalebox{0.9}{$p\in \Fl[X]$} such that \scalebox{0.9}{$p(0)=s$} and generates the shares as \scalebox{0.9}{$(i,p(i)), i \in P$}.

\item \scalebox{0.9}{$s\leftarrow \textsf{SS.recon}(\{(i,s_i)_{i\in Q}\})$}. Given the  shares corresponding to a subset \scalebox{0.9}{$Q\subseteq P, |Q|\geq t$}, the reconstruction algorithm recovers the secret $s$.  
\squishend
\vspace{-0.1cm}
Shamir's secret sharing scheme is \emph{linear}, which means a party can \textit{locally} perform: $(1)$ addition of two shares, $(2)$ addition of a constant, and $(3)$ multiplication by a constant. 
\\Shamir's secret sharing scheme is closely related to Reed-Solomon error correcting codes \cite{lin2004error}, which is a group of polynomial-based error correcting codes. The share generation is similar to (non-systemic) message encoding in these codes which can successfully recover a message even in the presence of errors and erasures (message dropouts). Consequently, we can leverage Reed-Solomon decoding for robust reconstruction of Shamir's secret shares:
\squishlist \vspace{-0.2cm}\item \scalebox{0.9}{$s\leftarrow\textsf{SS.robustRecon}(\{(i,s_i)\}_{i\in Q})$}. Shamir's secret sharing scheme results in a \scalebox{0.9}{$[n,t,n-t+1]$} Reed-Solomon code that can tolerate up to \scalebox{0.9}{$q$} errors and \scalebox{0.9}{$e$} erasures (message dropouts) such that \scalebox{0.9}{$2q+e< n-t+1$}. Given any subset of $n\!-\!e$ shares \scalebox{0.9}{$Q \subseteq P, |Q|\geq n-e$} with up to \scalebox{0.9}{$q$} errors, any standard Reed Solomon decoding algorithm~\cite{Blahut1983} can robustly reconstruct \scalebox{0.9}{$s$}. \name~uses Gao's decoding algorithm~\cite{Gao2003}.
\squishend
\textit{Verifiable secret sharing scheme} is a related concept where the scheme has an additional property of \textit{verifiability}. Given a share of the secret, a party must be able to check whether it is indeed a valid share. If a share is valid, then there exists a unique secret which will be the output of the reconstruction algorithm when run on any $t$ distinct valid shares. Formally:\vspace{-0.1cm}
\squishlist\item \scalebox{0.9}{$1/0\leftarrow\textsf{SS.verify}((i,v),\Psi))$}.  The verification algorithm inputs a share and a check string $\Psi_s$ such that \begin{gather*}\vspace{-0.2cm}
\small\forall \thinspace V \subset \Fl\times\Fl \mbox{ where } |V|=t, \exists s  \in \Fl \mbox{ s.t. }\\ \hspace{-0.4cm}\small (\forall (i,v) \in V,\textsf{SS.verify}((i,v),\Psi_s)=1)  \implies \textsf{SS.recon}(V)=s \vspace{-0.3cm}\end{gather*} The share construction algorithm is augmented to output the check string as \scalebox{0.9}{$(\{(i,s_i)_{i \in P}\},\Psi_s)\larrow \textsf{SS.share}(s,P,t)$}. \squishend \vspace{-0.2cm}
For \name, we use the non-interactive verification scheme by Feldman \cite{Feldman87} (details in App. \ref{app:background}).

\textbf{Key Agreement Protocol.} 
    A key agreement protocol  consists of a tuple of the following three algorithms:
\squishlist \vspace{-0.1cm}
\item \scalebox{0.9}{$(pp)\randarrow \textsf{KA.param}(1^\kappa)$}. The parameter generation algorithm samples a set of public parameters $pp$ with security parameter $\kappa$.
\item\scalebox{0.9}{$(pk,sk)\randarrow\textsf{KA.gen}(pp)$}. The key generation algorithm samples a public/secret key pair from the public parameters. 
\item \scalebox{0.9}{$sk_{ij}\leftarrow \textsf{KA.agree}(pk_i,sk_j)$}. The key agreement protocol receives a public key $pk_i$ and a secret key $sk_j$ as input and generates  the shared key $sk_{ij}$. 
\squishend

\noindent\textbf{Authenticated Encryption} provides confidentiality and integrity guarantees for messages exchanged between
two parties. It consists of a tuple of three algorithms as follows: \squishlist\vspace{-0.2cm}\item \scalebox{0.9}{$k\randarrow \textsf{AE.gen}(1^\kappa)$}.  The key generation algorithm that
outputs a private key   $k$ where $\kappa$ is the security parameter.
\item \scalebox{0.9}{$\overline{x}\randarrow \textsf{AE.enc}(k,x)$}. The encryption algorithm takes as input a key $k$ and a message $x$, and outputs a ciphertext $\overline{x}$.
\item \scalebox{0.9}{$x\leftarrow\textsf{AE.dec}(k,\overline{x})$}. The decryption algorithm takes as
input a ciphertext and a key and outputs either the original plaintext, or a special error symbol $\bot$ on failure.\vspace{-0.2cm} \squishend

\noindent\textbf{Secret-shared Non-interactive Proofs.}
\begin{figure*}
    \begin{subfigure}{0.28\linewidth}
        \centering
         \includegraphics[width=.7\linewidth, height=2.3cm]{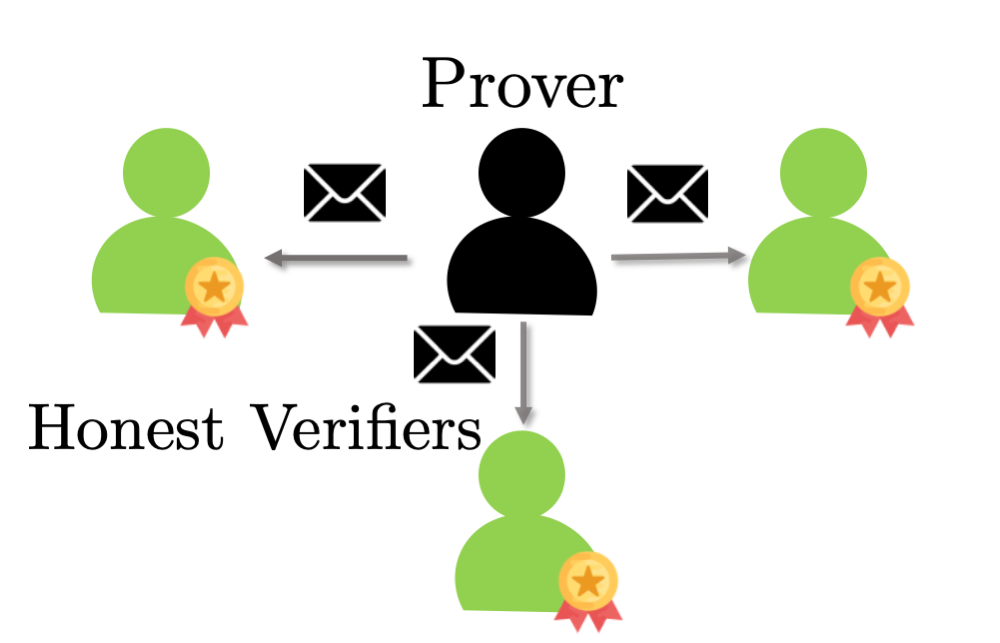}
      \vspace{-0.2cm}\caption{  Prover sends secret shares of its input  and the SNIP proof to multiple verifiers.}
        \label{fig:Snip1}
    \end{subfigure}
    \hspace{1cm}\begin{subfigure}{0.28\linewidth}\centering
  \includegraphics[width=0.7\linewidth,height=2.3cm]{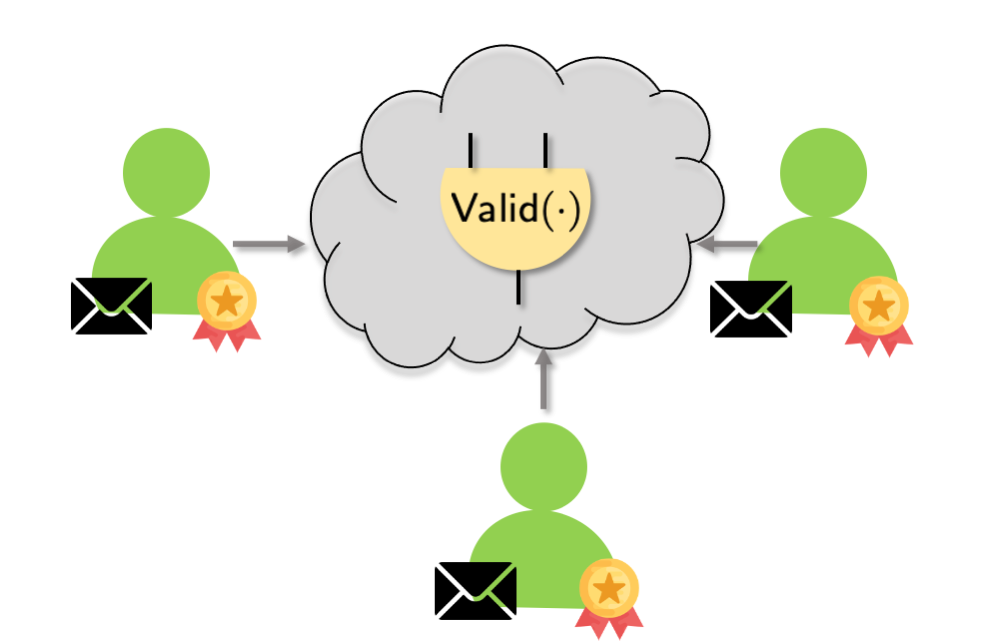}   
  \vspace{-0.2cm}\caption{The verifiers gossip among themselves and check the proof.}
        \label{fig:Snip2}\end{subfigure}
        \hspace{1cm}
         \begin{subfigure}{0.28\linewidth}\centering
 \includegraphics[width=0.7\linewidth,height=2.3cm]{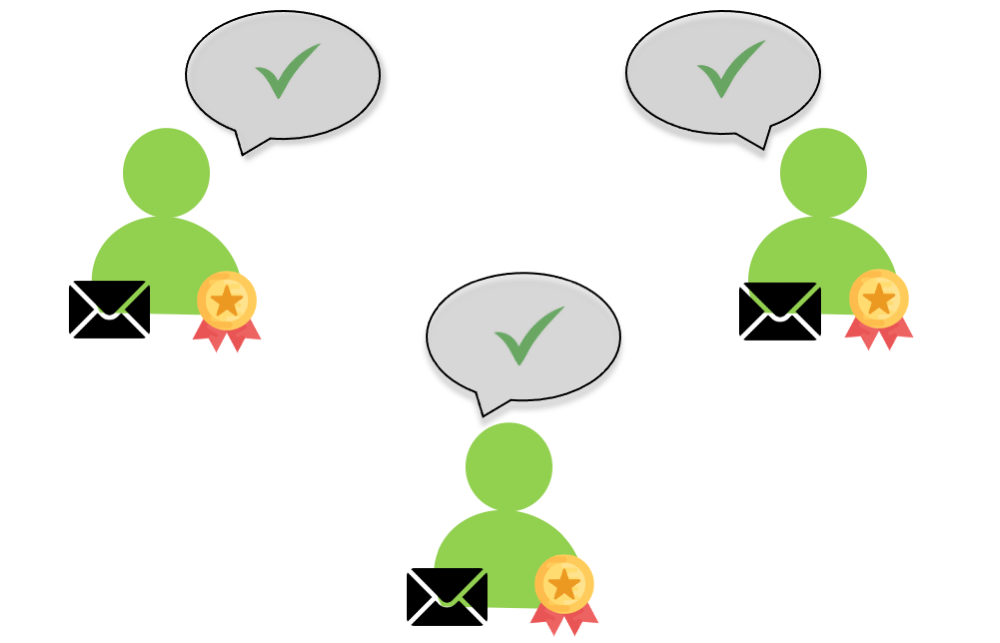}   
 \vspace{-0.2cm} \caption{The check passes successfully if all verifiers are honest.}
        \label{fig:Snip3}\end{subfigure}
  \vspace{-0.37cm} \caption{High-level overview of a secret-shared non-interactive proof (SNIP;~\cite{corrigan2017prio}).} 
   \label{accuracy}\vspace{-0.5cm}\label{fig:SNIP}
\end{figure*}
The secret-shared non-interactive proof (SNIP) \cite{corrigan2017prio} is an information-theoretic zero-knowl- edge proof for distributed data (Fig. \ref{fig:SNIP}). SNIP is designed for a multi-verifier setting where the private data is distributed or secret-shared among the verifiers. Specifically, SNIP relies on an additive secret sharing scheme over a field $\Fl$ as described below.  A secret $s \in \Fl$ is split into $k$ random shares $([s]_1,\cdots,[s]_k)$ such that $\sum_{i=1}^k[s]_i=s$. A subset of up to $k-1$ shares reveals \textit{no} information about the secret $s$. The additive secret-sharing scheme is linear as well.

\textit{SNIP Setting.} SNIP considers \scalebox{0.9}{$k\geq 2$} verifiers \scalebox{0.9}{$\{\Ver_i\}, i \in [k]$} and a prover \scalebox{0.9}{$\Pro$} with a private vector \scalebox{0.9}{$x\in\Fl^d$}. All parties also hold a \textit{public} arithmetic circuit representing a
validation predicate \scalebox{0.9}{$\Vl: \Fl^d\mapsto \Fl$
}. Let \scalebox{0.9}{$\Ml$} be the number of multiplication gates in \scalebox{0.9}{$\Vl(\cdot)$}.  \scalebox{0.9}{$\Fl$} is chosen such that \scalebox{0.9}{$2\Ml \ll |\Fl|$}. The prover \scalebox{0.9}{$\Pro$} splits \scalebox{0.9}{$x$} into \scalebox{0.9}{$k$} shares \scalebox{0.9}{$\{[x_1],\cdots, [x_k]\}$}. Next, they generate $k$ proof strings \scalebox{0.9}{$[\pi]_i, i \in [k]$} based on \scalebox{0.9}{$\Vl(\cdot)$} and shares \scalebox{0.9}{$( [x_i], [\pi]_i)$} with every verifier \scalebox{0.9}{$\Ver_i$} (Fig. \ref{fig:Snip1}).
\\The prover's goal is to convince the verifiers that, indeed, \scalebox{0.9}{$\Vl(x) =
1$}. The prover does so via proof strings \scalebox{0.9}{$[\pi]_i, i\in [k]$}, that do not reveal anything else about $x$. After
receiving the proof, the verifiers gossip with each other to conclude either that \scalebox{0.9}{$\Vl(x) = 1$}
(the verifiers “\scalebox{0.9}{$\textsf{Accept } x$}") or not (“\scalebox{0.9}{$\textsf{Reject } x$}”, Figs. \ref{fig:Snip2} and \ref{fig:Snip3}). Formally, SNIP satisfies the following security properties:\squishlist\item \textit{Completeness.} If all parties are honest and \scalebox{0.9}{$\Vl(x)=1$}, then the verifiers will
accept $x$. \begin{gather*}\vspace{-0.4cm}\small\forall x \in \Fl \mbox{ s.t. } \Vl(x)=1: \mathrm{ Pr}_{\pi}[\textsf{Accept } x]=1. \label{eq:SNIP:complete}\vspace{-0.5cm}\end{gather*}
\item \textit{Soundness}. If all verifiers are honest, and if \scalebox{0.9}{$\Vl(x)=0$},
then for all malicious provers, the verifiers will reject $x$ with
overwhelming probability. 
\begin{gather*}\vspace{-0.4cm}\small\forall x \in \Fl \mbox{ s.t. } \Vl(x)=0:  \thinspace\thinspace
\mathrm{ Pr}_{\pi
}\big[\textsf{Reject } x\big] \geq 1- \nicefrac{(2M-2)}{|\Fl|}.\label{eq:SNIP:sound}\vspace{-0.4cm}\end{gather*}\vspace{-0.4cm}
\item \textit{Zero knowledge.} If the prover and at least one verifier are
honest, then the verifiers learn nothing about $x$, except
that \scalebox{0.9}{$\Vl(x) = 1$}. Formally, when \scalebox{0.9}{$\Vl(x)=1$}, there exists a simulator \scalebox{0.9}{$\mathrm{Sim}(\cdot)$} that can simulate the view of
the protocol execution for every proper subset of verifiers:\vspace{-0.3cm}
\begin{gather*}\vspace{-0.3cm}\small\forall x \mbox{ s.t. } \Vl(x)=1\mbox{ and } \forall \bar{\Ver}\subset \bigcup_{i=1}^k\Ver_i \mbox{ we have}\\ \vspace{-0.5cm}\small  \mathrm{ Sim}_\pi(\Vl(\cdot),\{([x]_i,[\pi]_i)\}_{i\in \bar{\Ver}})\equiv \mathrm{View}_{\pi,\bar{\Ver}}(\Vl(\cdot),x).
\vspace{-0.2cm}
\end{gather*}
\squishend 
Thus, SNIP allows the verifiers  to collaboratively check -- without ever accessing the prover’s private data in the clear -- that the prover's
submission is, indeed, well-formed. 
SNIP works in two stages as follows: \\
$(1)$\textit{ Generation of Proof.} For generating the proof, the prover $\Pro$ first evaluates the 
circuit \scalebox{0.9}{$\Vl(\cdot)$} on its input $x$ to obtain the value of every wire in the arithmetic circuit corresponding to the computation of \scalebox{0.9}{$\Vl(x)$}. 
Using these wire values, $\Pro$  constructs 
three polynomials \scalebox{0.9}{$f$}, \scalebox{0.9}{$g$}, and \scalebox{0.9}{$h$} of the lowest possible degrees such that \scalebox{0.9}{$h=f\cdot g$} and \scalebox{0.9}{$f(j),g(j)$} and \scalebox{0.9}{$h(j), j \in \big[\Ml\big]$} encode the values of the two input wires and one output wire of the $j$-th multiplication gate, respectively. 
$\Pro$ also samples a single set of Beaver's multiplication triples \cite{Beaver91}: \scalebox{0.9}{$(a,b,c)\in \Fl^3$} such that \scalebox{0.9}{$a\cdot b = c \in \Fl$}. Finally, it generates the shares of the proof, \scalebox{0.9}{$[\pi]_i =\big( [h]_i, ([a]_i,[b]_i,[c]_i)\big)$}, which consists of:\vspace{-0.2cm}
 \squishlist\item shares of the coefficients of the polynomial $h$, denoted by $[h]_i$, \item shares of the  Beaver's triples, \scalebox{0.9}{$([a]_i,[b]_i,[c]_i)\in \Fl^3$}. \squishend The prover then sends the respective shares of the input and the proof \scalebox{0.9}{ $( [x]_i, [\pi]_{i})$} to each of the verifiers $\Ver_i$.
 
 \noindent$(2)$\textit{ Verification of Proof}. To verify that \scalebox{0.9}{$\Vl(x)=1$} and hence, accept the input $x$, the verifiers need to check two things:  \squishlist\vspace{-0.2cm} \item check that the value of final output wire of the computation, \scalebox{0.9}{$\Vl(x)$}, denoted by $w^{out}$ is indeed $1$, and \item check the consistency of $\Pro$'s computation of \scalebox{0.9}{$\Vl(x)$}. \vspace{-0.2cm}\squishend 
 
To this end, each verifier \scalebox{0.9}{$\Ver_i$} \textit{locally} constructs the shares of every wire in \scalebox{0.9}{$\Vl(x)$} 
 via affine operations
on the shares of the private input \scalebox{0.9}{$[x]_i$} and \scalebox{0.9}{$[h]_i$}. Next, \scalebox{0.9}{$\Ver_i$} broadcasts a summary \scalebox{0.9}{$[\sigma]_i=([w^{out}]_i,[\lambda]_i)$}, where \scalebox{0.9}{$[w^{out}]_i$} is \scalebox{0.9}{$\Ver_i$}'s share of the output wire of the circuit and \scalebox{0.9}{$[\lambda]_i$} is a share of a random digest that the verifier computes from the shares of the other wire values and the proof  string $[\pi]_i$. Using these  summaries, the verifiers check the proof as follows:
\squishlist\vspace{-0.2cm}\item For checking the output wire, the verifiers can reconstruct its exact value from all the broadcasted shares \scalebox{0.9}{$w^{out}=\sum_{i=1}^k [w^{out}]_i$} and check whether \scalebox{0.9}{$w^{out}=1$}. This would imply that  \scalebox{0.9}{$\Vl(x)=1$}. \item The circuit consistency check is more involved and is performed using the random digest \scalebox{0.9}{$\lambda$}. 
First, \scalebox{0.9}{$\Ver_i$} \textit{locally} computes the shares of the polynomials $f$ and $g$ (denoted as \scalebox{0.9}{$[f]_i$} and \scalebox{0.9}{$[g]_i$}). 
To verify the  consistency of the circuit evaluation, the verifiers need to check that the shares  \scalebox{0.9}{$[h]_i$} sent by the prover $\Pro$ are of the correct polynomial, \emph{i.e.}, confirm that \scalebox{0.9}{$f\cdot g =h$}. For this, SNIP uses the  Schwartz-Zippel polynomial identity test \cite{Schwartz80, Zippel79}. Specifically, verifiers reconstruct \scalebox{0.9}{$\lambda = \sum_{1=1}^k[\lambda]_i$} from the broadcasted shares and test whether \scalebox{0.9}{$\lambda=r(f(r) \cdot g(r)-h(r))=0$} on a randomly selected $r\in \Fl$.  The computation of the share of the random digest \scalebox{0.9}{$[\lambda]_i$}  
uses the shares of Beaver's triples \scalebox{0.9}{$([a]_i,[b]_i,[c]_i)$}.  
\squishend \vspace{-0.2cm}
A more detailed description of the SNIP protocol is in App. \ref{app:background}.
\vspace{-0.2cm}\subsection{System Building Blocks}\vspace{-0.05cm}

\noindent\textbf{Public Validation Predicate.} \name~requires a public validation predicate \scalebox{0.9}{$\Vl(\cdot)$}, expressed by an arithmetic circuit, that captures the notion of update well-formedness. In principle, any per-client update robustness test \cite{sun2019backdoor,steinhardt2017certified,xie2020zeno,Li20,Damaskinos2018AsynchronousBM,bagdasaryan2018backdoor,Shejwalkar2021ManipulatingTB} from the ML literature can be a suitable candidate. The parameters of the test (for instance, threshold \scalebox{0.9}{$\rho$} for a norm bound check \scalebox{0.9}{$\Vl(u) = \mathbb{I}[ \lVert u\rVert_2 < \rho$}) can be computed from a clean, public dataset $\DP$ that is available to the server $\Ser$. This assumption of a clean, public dataset is common in both ML \cite{xie2020zeno,Cao2021FLTrustBF,kairouz2019advances} as well as privacy literature \cite{liu2021leveraging,Bassily2020PrivateQR,Beimel2020ThePO}. The dataset can be small and obtained by manual labeling \cite{Google17}.
\\
\noindent\textbf{Public Bulletin Board.} \name~assumes the availability of a public bulletin board $\Bl$ that is accessible to all the parties, similar to prior work \cite{Roth2019Honeycrisp, bonawitz2017practical, kairouz2019advances}. In practice, the bulletin $\Bl$ can be implemented as an append-only log hosted at a public web address where every message and its sender is visible.  Every party in \name~has read/write access to it. We use the bulletin $\Bl$ as a tool for broadcasting~\cite{Bracha1985,Crosby2009Efficient}.
\vspace{-0.6cm}\subsection{\name~Design Goals}\label{sec:goals} \vspace{-0cm}
In terms of the design, \name~should:
\squishlist \vspace{-0.2cm}
\item provide  \textit{flexibility in the choice of integrity checks.} 
\item be \textit{compatible with the existing FL infrastructure in deployment.}
\item be \textit{efficient} in performance. \squishend\vspace{-0.2cm}
\vspace{-0.1cm}
\subsection{\name~Workflow}\label{sec:system:workflow}
The  goal of \name~is to instantiate a secure aggregation with verified inputs (\SAIV) protocol in FL. For a given public validation predicate $\Vl(\cdot)$, \name~checks the integrity of every client update using SNIP and outputs the aggregate of \textit{only} well-formed updates, \emph{i.e.}, \scalebox{0.9}{$\Vl(u)=1$}. 
To implement SNIP for our setting, \name~introduces two main ideas:
\begin{tcolorbox}[sharp corners, breakable,left=8pt,right=8pt,bottom=0.8pt]\vspace{-0.2cm}
\textbf{Main Ideas.}
\squishlist
\item In \name, the clients act as the verifiers for each other. Specifically, for every client \scalebox{0.9}{$\Cl_i, i \in \Cl$}, all of the other $n\!-\!1$ clients, \scalebox{0.9}{$\Cl_{\setminus i}$}, and the server \scalebox{0.9}{$\Ser$} jointly acts as the verifiers. This is different from Prio~\cite{corrigan2017prio} (the original SNIP deployment setting) that rely on multiple \textit{honest servers} to perform verification. 
\item In \name, verification can be performed even with malicious verifiers. This is essential in our setting since we have $m$ malicious clients (\emph{i.e.}, verifiers).\footnote{
The server works honestly towards verifying the integrity (its primary goal) but could behave maliciously  (possibly colluding with other malicious clients) to violate the privacy of the honest clients (see Sec. \ref{sec:overview:goals}).} For this, \name~uses Shamir's $t$-out-of-$n$ threshold scheme for the entire protocol. This allows any cohort of $t$ verifiers to reconstruct a secret and, hence, instantiate a SNIP protocol. If $t\!<\!n$, we have multiple such instantiations and can use the redundancy to perform the integrity check even with some of the verifiers being malicious. 
\squishend
 \end{tcolorbox} 

The full protocol is presented in Fig. \ref{fig:name:protocol}. The protocol involves a setup phase followed by four rounds.
 
\noindent\textbf{Setup Phase.} In the setup phase, all parties are initialized with the system-wide parameters, namely the security parameter $\kappa$, the number of clients $n$ out of which \textit{only} \scalebox{0.9}{$m < \lfloor\frac{n-1}{3} \rfloor$} can be malicious, public parameters for the key agreement protocol \scalebox{0.9}{$pp\randarrow \textsf{KA.param}(\kappa)$}, and a field \scalebox{0.9}{$\Fl$} where \scalebox{0.9}{$|\Fl|\geq 2^{\kappa}$}. 
\name~works in a synchronous protocol between the server $\Ser$ and the $n$ clients in four rounds. To prevent the server from simulating an arbitrary number
of clients, the clients register themselves with a specific user ID on the public bulletin board $\Bl$ and are authenticated with the help of standard public key infrastructure (PKI). 
The bulletin board $\Bl$ allows parties to register IDs only for themselves, preventing impersonation. More concretely, the PKI enables the clients to register identities (public keys), and sign messages using their identity (associated secret keys),
such that others can verify this signature, but cannot
impersonate them \cite{Katz}. We omit this detail for the ease of exposition. For notational simplicity, we assume that each client $\Cl_i$ is assigned a unique logical ID in the form of an integer $i$ in $[n]$. Each client holds as input a $d$-dimensional vector $u_i\in \Fl^d$ representing its local update. All clients have a private, authenticated communication channel with the server $\Ser$. 
Additionally, every party (clients and server) has read and write access to the public bulletin $\Bl$ via authenticated channels.
For every client $\Cl_i$, the server $\Ser$ maintains a list, \scalebox{0.9}{$\textsf{Flag}[i]$}, of all clients that have flagged $\Cl_i$ as malicious. All \scalebox{0.9}{$\textsf{Flag}[i]$} lists are initialized to be empty lists.

\noindent\textbf{Round 1 (Announcing Public Information).} In the first round, all the parties announce their public information relevant to the protocol on the public bulletin $\Bl$. Specifically, each client $\Cl_i$ generates its key pair  \scalebox{0.9}{$({pk}_i,{sk}_i)\randarrow\textsf{KA.gen}(pp)$} and advertises the public key $pk_i$ on the public bulletin $\Bl$. The server $\Ser$ publishes the validation predicate \scalebox{0.9}{$\Vl(\cdot)$} on $\Bl$.

\noindent\textbf{Round 2 (Generate and Distribute Proofs).} Every client generates shares of its private update $u_i$ and the proof $\pi_i$, and distributes these shares to the other clients \scalebox{0.9}{$\Cl_{\setminus i}$}. 
First, client \scalebox{0.9}{$\Cl_i$} generates a common pairwise encryption key \scalebox{0.9}{$sk_{ij}$} for every other client \scalebox{0.9}{$\Cl_j\in \Cl_{\setminus i}$} using the key agreement protocol, \scalebox{0.9}{$sk_{ij}\larrow \textsf{KA.agree}(sk_i,pk_j)$}. Next, the client generates the secret shares of its private update \scalebox{0.9}{$\{ (1,u_{i1}), \cdots, (n,u_{in}),\Psi_{u_i}\}\randarrow \textsf{SS.share}(u,[n],m+1)$}. The sharing of $u_i$ is performed dimension-wise; we abuse notations and denote the $j$-th such share by \scalebox{0.9}{$(j,u_{ij}), j \in [n]$}. Note that the client $\Cl_i$ generates a share $(i,u_{ii})$ for \textit{itself} as well which will be used later in the protocol. 
Next, the client $\Cl_i$ generates the proof  for the computation \scalebox{0.9}{$\Vl(u_i)\!=\!1$}. Specifically, it computes the polynomials $f_i, g_i$, and $h_i=f_i\cdot g_i$ and samples a set of Beaver's multiplication triples \scalebox{0.9}{$(a_i,b_i,c_i)\in \Fl^3, a_i\cdot b_i=c_i \in \Fl$}. Since the other clients will verify the proof,  client $\Cl_i$ then splits the proof to generate shares \scalebox{0.9}{$\pi_{ij}=\big( (j,h_{ij}),$} \scalebox{0.9}{ $(j,a_{ij}),(j,b_{ij}),(j,c_{ij}) \big )$} for every other client \scalebox{0.9}{$\Cl_j \in \Cl_{\setminus i}$}. The shares themselves are generated via \scalebox{0.9}{$\{(1,h_{i1}),\cdots,(i-1,h_{i(i-1)}),(i+1,h_{i(i+1)}),$}  \scalebox{0.9}{$\cdots, (n,h_{in}),\Psi_{h_i}\}$}\scalebox{0.8}{$\randarrow$}\scalebox{0.9}{$\textsf{SS.share}(h_i, [n] \setminus i, m+1)$}, and so on. Finally, the client encrypts the proof strings (shares of the update $u_i$ and the proof $\pi_i$) using the corresponding pairwise secret key, \scalebox{0.9}{$\overline{(j,u_{ij})||(j,\pi_{ij})}$}
\scalebox{0.8}{$\randarrow$}\scalebox{0.9}{$\textsf{AE.enc}\big(sk_{ij},(j,u_{ij})||(j,\pi_{ij})\big)$}, and publishes the encrypted proof strings on the public bulletin $\Bl$. The client also publishes the check strings \scalebox{0.9}{$\Psi_{u_i}$} and \scalebox{0.9}{$\Psi_{\pi_i}=(\Psi_{h_i},\Psi_{a_i},\Psi_{b_i},\Psi_{c_i})$} for verifying the validity of the shares of $u_i$ and $\pi_i$, respectively.

\noindent\textbf{Round 3 (Verify Proof)}. In this round, every client $\Cl_i$ partakes in the verification of the proofs $\pi_j$ of all other clients $\Cl_j\in \Cl_{\setminus i}$, under the supervision of the server $\Ser$. The goal of the server is to identify the malicious clients, $\Cl_M$. To this end, the server maintains a (partial) list, $\CA$ (initialized as an empty list), of clients it has so far identified as malicious. 
The proof-verification round consists of three phases as follows:

\noindent$(i)$\textit{ Verifying the validity of the secret shares}.
First, every client $\Cl_i$ downloads and decrypts their shares from the bulletin \scalebox{0.9}{$\Bl$}, \scalebox{0.85}{$(i,u_{ji})||(i,\pi_{ji})$} \scalebox{0.9}{$\larrow\textsf{AE.dec}\big(sk_{ij},\overline{(i,u_{ji})||(i,\pi_{ji})}\big), \forall \Cl_j \in \Cl_{\setminus i}$}. Additionally, $\Cl_i$ downloads the check strings \scalebox{0.9}{$(\Psi_{{u}_i},\Psi_{{\pi}_i})$} and verifies the validity of the shares. If the shares from any client $\Cl_j$: \squishlist  \vspace{-0.2cm}
\item fail to be decrypted, \emph{i.e.}, \scalebox{0.9}{$\textsf{AE.dec}(\cdot)$} outputs $\bot$, OR 
\item fail to pass the verification, \emph{i.e.}, \scalebox{0.9}{$\textsf{SS.verify}(\cdot)$} returns $0$,  \vspace{-0.2cm}
\squishend 
 \scalebox{0.9}{$\Cl_i$} flags \scalebox{0.9}{$\Cl_j$} on the bulletin \scalebox{0.9}{$\Bl$}. 
Every time a client \scalebox{0.9}{$\Cl_i$} flags another client \scalebox{0.9}{$\Cl_j$}, the server updates the corresponding list \scalebox{0.8}{$\textsf{Flag}[j]\hspace{-0.1cm}\leftarrow\textsf{Flag}[j]\cup \Cl_i$}. If \scalebox{0.9}{$|\textsf{Flag}[j]|\geq m+1$}, the server $\Ser$ marks $\Cl_j$ as malicious: \scalebox{0.9}{$\CA\leftarrow\CA\cup \Cl_j$}. The server can do so because the pigeon hole principle implies that $\Cl_j$ must have sent an invalid share to at least one honest client; hence, the correctness of the value recovered from that client's shares cannot be guaranteed. In case \scalebox{0.9}{$1\leq |\textsf{Flag}[j]|\leq m$,} the server supervises the following actions.  Suppose client $\Cl_i$ has flagged client $\Cl_j$. Client $\Cl_j$ then reveals the shares for $\Cl_i$, \scalebox{0.9}{$\big((i,u_{ji}),(i,\pi_{ji})\big)$} in the clear (on bulletin $\Bl$) for the server $\Ser$ (or anyone else) to verify using \scalebox{0.9}{$\textsf{SS.Verify}(\cdot)$}. If that verification passes, $\Cl_i$ is instructed by the server to use the released shares for its computations. Otherwise, $\Cl_j$ is marked as malicious by the server $\Ser$. Note that this does not lead to privacy violation for an honest client since at most $m$ shares corresponding to the $m$ malicious clients would be revealed (see Sec. \ref{sec:security_analysis}). If a client $\Cl_i$ flags $\geq m+1$ other clients, $\Ser$ marks  $\Cl_i$ as malicious.  Thus, at this point every client on the list $\CA$ has either \vspace{-0.2cm}
\squishlist\item provided invalid shares to at least one honest client, OR 
\item flagged an honest client.   \vspace{-0.2cm} \squishend
In other words, every client who is \textit{not} in \scalebox{0.9}{$\CA$}, \scalebox{0.9}{$\Cl_i\in \Cl\setminus \CA$}, is guaranteed to have submitted at least \scalebox{0.9}{$n\!-\!m\!-\!1$} valid shares for the honest clients in \scalebox{0.9}{$\Cl_H\setminus\Cl_i$} (see Sec. \ref{sec:security_analysis} for details). Additionally, the server cannot be tricked into marking an honest client as malicious, \emph{i.e.}, \name~ensures \scalebox{0.9}{$\CA\cap\Cl_H=\varnothing$} (see Sec. \ref{sec:security_analysis}). The server $\Ser$ publishes $\CA$ on the bulletin $\Bl$.

\noindent$(ii)$ \textit{Computation of proof summaries by clients.}  For this phase, the server \scalebox{0.9}{$\Ser$} advertises a random value \scalebox{0.9}{$r \in \Fl$} on the  bulletin $\Bl$.
Next, a client \scalebox{0.9}{$\Cl_i$} proceeds to distill the proof strings of all clients \textit{not} in \scalebox{0.9}{$\CA$} to generate summaries for the server \scalebox{0.9}{$\Ser$}. Specifically, client \scalebox{0.9}{$\Cl_i$ }prepares a proof summary \scalebox{0.9}{$\sigma_{ji}=\big((i,w^{out}_{ji}),(i,\lambda_{ji})\big)$ }for \scalebox{0.9}{$\forall\Cl_j \in \Cl\setminus(\CA\cup \Cl_i)$} as per the description in the previous section, and publishes it on  \scalebox{0.9}{$\Bl$}. 

\noindent $(iii)$ \textit{Verification of proof summaries by the server.} Next, the server moves to the last step of verifying the proof summaries \scalebox{0.9}{$\sigma_i=(w^{out}_i,\lambda_i )$} for all clients not in $\CA$. Recall from the discussion in Sec. \ref{sec:system:block} that this involves recovering the values \scalebox{0.9}{$w^{out}_i$} and \scalebox{0.9}{$\lambda_i$} from the shares of $\sigma_i$ and checking whether  \scalebox{0.9}{$w^{out}_{i}=1$} and \scalebox{0.9}{$\lambda_i=0$}. However, we cannot simply use the naive share reconstruction algorithm from Sec. \ref{sec:system:block} since some of the shares might be incorrect (submitted by the malicious clients). To address this issue, \name~performs a 
\textit{robust reconstruction} of the shares as follows.
A naive strategy would be sampling multiple subsets of $m+1$ shares (each subset can emulate a SNIP setting), reconstructing the secret for each subset, and taking the  majority vote. 
However, we can do much better by exploiting the connections between Shamir's secret shares and Reed-Solomon error correcting codes (Sec. \ref{sec:system:block}). Specifically, the Shamir's secret sharing scheme used by \name~is a \scalebox{0.9}{$[n\!-\!1,m+1,n\!-\!m]$} Reed-Solomon code that can correct up to $q$ errors and $e$ erasures (message dropouts) where \scalebox{0.9}{$2q+e<n\!-\!m\!-\!1$}. The server $\Ser$ can, therefore, use \scalebox{0.9}{$\textsf{SS.robustRecon}(\cdot)$} to reconstruct the secret when \scalebox{0.9}{$m<\lfloor\frac{n-1}{3}\rfloor$}.  

After the robust reconstruction of the proof summaries, the server $\Ser$ verifies them and updates the list $\CA$ with \textit{all} malicious clients with malformed updates.  Specifically:    \vspace{-0.1cm}\begin{gather*}\small \forall \Cl_i \in \Cl\setminus \CA\\\vspace{-0.5cm}
\hspace{-1.8cm} \small
\Big(\textsf{SS.robustRecon}(\{(j,w^{out}_{ij})\}_{\Cl_j\in \Cl\setminus\{\CA\cup\Cl_i\}})\neq 1 \thinspace\thinspace \vee\\  \vspace{-0.5cm}\small \textsf{SS.robustRecon}(\{(j,\lambda_{ij})\}_{\Cl_j\in \Cl\setminus\{\CA\cup\Cl_i\}}) \neq 0\Big) \\\hspace{5cm}\small\implies \CA\leftarrow\CA \cup \Cl_i. \vspace{-0.5cm}\end{gather*}
Additionally, if  a client $\Cl_i$ withholds some of the shares of the proof summaries for other clients, $\Cl_i$ is marked as malicious as well by the server. Thus, in addition to the malicious clients listed above, the list $\CA$ now has all clients that have either:
\squishlist  \vspace{-0.2cm}\item failed the proof verification, \emph{i.e.}, provided malformed updates, OR \item withheld shares of proof summaries of other clients (malicious message dropout).
\squishend   \vspace{-0.2cm}
To conclude the round, the server publishes the updated list $\CA$ on the public bulletin $\Bl$.
\vspace{-0.1cm}

\noindent\textbf{Round 4 (Compute Aggregate).} This is the final round of \name~where the  aggregate of the well-formed updates is computed. If a client $\Cl_i$ is on $\CA$ wrongfully, it can dispute its malicious status by showing the other clients the transcript of the robust reconstruction from all the shares of $\sigma_i$ (publicly available on bulletin $\Bl$).
If any client $\Cl_i \in \Cl$ successfully raises a dispute, all clients abort the protocol because they conclude that the server $\Ser$ has acted maliciously by trying to withhold a verified well-formed update from the aggregation. If no client raises a successful dispute, every client \scalebox{0.9}{$\Cl_i \in \Cl\setminus\CA$} generates its share of the aggregate, \scalebox{0.9}{$(i,\Ul_{i})$} with \scalebox{0.9}{$\Ul_{i}=\sum_{\Cl_j\in \Cl\setminus \CA}u_{ji}$}, and sends that share to the server $\Ser$. Note that, herein, $\Cl_i$ uses its own share of the update, \scalebox{0.9}{$(i,u_{ii})$}, as well.
\\The server recovers the aggregate \scalebox{0.9}{$\Ul=\sum_{\Cl_i\in \Cl\setminus \CA} \Ul_j$} using robust reconstruction: \scalebox{0.9}{$\Ul\leftarrow \textsf{SS.robustRecon}(\{(i,\Ul_{i})\}_{\Cl_i\in \Cl\setminus \CA})$}.

\textbf{Discussion.} \name~ meets the design goals of Sec. \ref{sec:goals} as follows.\vspace{-0.3cm}\\\\
\textit{Flexibility of Integrity Checks.} SNIP supports arbitrary arithmetic circuits for \scalebox{0.9}{$\Vl(\cdot)$}. The server \scalebox{0.9}{$\Ser$} can choose a different  \scalebox{0.9}{$\Vl(\cdot)$} for every iteration (the protocol described above corresponds to a single iteration of model training in FL). Additionally, \scalebox{0.9}{$\Ser$} can hold 
multiple \scalebox{0.9}{$\Vl_1(\cdot),\cdots,\Vl_k(\cdot)$} and want
to check whether the client’s update passes them all. For this, we have \scalebox{0.9}{$\Vl_i(\cdot)$} return zero
(instead of one) on success. If \scalebox{0.9}{$w^{out}_i$} is the value on the output wire of the circuit \scalebox{0.9}{$\Vl_i(\cdot)$}, the server chooses random values \scalebox{0.9}{$(l_1, \cdots ,l_k) \in \Fl^k$} and recovers the sum \scalebox{0.9}{$\sum_{i=1}^kl_i\cdot w^{out}_i$} in Round 3. If any \scalebox{0.9}{$w^{out}_i=0$}, then the sum will be non-zero with high probability and $\Ser$ will reject.
\vspace{-0.2cm}\\\\\textit{Compatibility with FL's Infrastructure.} Current deployments of FL involves a \textit{single} server who wants to train the global model. Hence, as explained above, we design SNIP to be compatible with a single server in \name. Solutions involving two or more non-colluding servers are unrealistic for FL. For instance, currently the server can be owned by Meta who wants to train privately on the data of its user base. For a two-server model here, the second server has to be owned by an independent party. Moreover, both the servers have to do an equal amount of computation for model training (verification, aggregation etc) since SNIP uses secret shares. This would make sense only if \textit{both} the servers are interested in training the model. For instance, if Meta and Google collaborate to train a model on their joint user base which is an unrealistic scenario. 
\vspace{-0.2cm}\\\\
\textit{Efficiency.} \name's usage of SNIP as the underlying ZKP is made from the efficiency point of view. SNIP is a light-weight ZKP system that is \textit{specialized for the server-client settings} resulting in good performance. For instance, its performance is about three-orders of magnitude better than that of zkSNARKs~\cite{corrigan2017prio}. Instead of using ZKPs, one alternative is to use standard secure multi-party computation (MPC) for the entire aggregation to directly compute \scalebox{0.9}{$\Ul_{valid}=\sum_{\Cl_i}\Vl(u_i)\cdot u_i$}. However, doing the entire aggregation under MPC would result in a massive circuit with \scalebox{0.9}{$O(nd)$} multiplication gates where \scalebox{0.9}{$d$} is the data dimension. Multiplications are costly for MPC and each gate requires a round of communication in general making the above computation prohibitively costly. Extending the  computation to the malicious threat model would be even costlier. This is where SNIP proves to be advantageous: SNIP enables the verifiers to check all the multiplication gates very efficiently (in a non-interactive fashion) with just one polynomial identity test (Sec. \ref{sec:system:block}). \vspace{-0.3cm}
\begin{tcolorbox}[sharp corners, breakable]\vspace{-0.2cm}\textbf{Remark 3.} In a nutshell, \name's technical novelty is in providing an \textit{efficient} extension of SNIP to a $(1)$ fully malicious threat model in a $(2)$ single server setting.

\name's problem setting is different from that of the original SNIP proposal in the following ways:
\squishlist
\item The original SNIP deployment setting (Prio) uses $\geq 2$  non-colluding servers as the verifiers; 
\name~requires a single server.
\item Originally, SNIP considers honest verifiers; \name~supports a fully malicious threat model (provers and verifiers).
\squishend
The above changes cannot be supported in SNIP as is and are necessary to capture the constraints of a realistic FL setting.

The core essence of the technical differences between the original deployment of SNIP and \name~is that the roles of the parties are changed: the former has a clear distinction between the prover (clients) and verifiers ($k\geq 2$ honest servers), whereas the clients and the single server jointly act as the (malicious) verifiers in \name.

Consequently, SNIP's interaction pattern is different in \name~which required the following technical changes.

\squishlist
\item Originally, SNIP uses additive secret shares whereas \name~uses Shamir’s threshold secret sharing.
\item In the original construction of SNIP, the reconstruction (a fundamental operation in the protocol) of shares is very simple: just add (+) the individual shares. In \name, we propose a robust reconstruction technique, $\textsf{SS.robustRecon}(\cdot)$, based on Reed-Solomon codes. This is key in ensuring robust verification even with malicious verifiers.
\item \name, in addition, requires verifiable secret shares: shares are augmented with a check string $\Psi$ which is integrated into the protocol.
\item Messages (shares, proofs) are distributed in the clear in the original SNIP deployment; messages are encrypted in \name.
\item We propose novel optimizations for \name~(Sec. \ref{sec:opt:new} targets the new operations for verifying secret shares and robust reconstruction while Sec. \ref{sec:opt:general} provides general improvements for SNIP).
\squishend

\vspace{-0.2cm} \end{tcolorbox}
\begin{figure*}\small\fbox
{%
    \begin{varwidth}{0.99\textwidth}\begin{itemize}\item\textbf{Setup Phase.} \begin{itemize}\item All parties are given the security parameter $\kappa$, the number of clients $n$ out of which at most \scalebox{0.9}{$m<\lfloor \frac{n-1}{3}\rfloor$} are malicious,  honestly generated
\scalebox{0.9}{$pp \randarrow \textsf{KA.gen}(\kappa)$} and a field $\Fl$ to be used
for secret sharing. Server initializes lists \scalebox{0.9}{$\textsf{Flag}[i]=\varnothing, i \in [n]$} and \scalebox{0.9}{$\CA=\varnothing$}. 
 \end{itemize}
 \item \textbf{Round 1 (Announcing Public Information).} \\
\textit{Client}: Each client $\Cl_i$
\begin{itemize}\item
Generates its key pair and announces the public key. $(pk_i,sk_i)\randarrow \textsf{KA.gen}(pp)$, $\Cl_i \xrightarrow{pk_i} \Bl$.\end{itemize}

\textit{Server}:
\begin{itemize}\item Publishes the validation predicate $\Vl(\cdot)$. $\Ser \xrightarrow{\Vl(\cdot)}\Bl$  \end{itemize} \item \textbf{Round 2 (Generate and Distribute Proof).}\\\textit{Client}: Each client $C_i$\begin{itemize}\item Computes $n-1$ pairwise keys. $\forall \Cl_j \in \Cl_{\setminus i}, sk_{ij}\larrow \textsf{KA.agree}(pk_j,sk_i)$ \item Generates proof $\pi_{i}=\big( h_i,(a_i,b_i,c_i) \big), h_i \in \Fl[X],$ $(a_i,b_i,c_i)\in \Fl^3,$ $a_i\cdot b_i=c_i$ for the statement $\Vl(u_i)=1$. \item Generates shares of the input $u_i\in \Fl^d$. $\{(1,u_{i1}),\cdots ,(n,u_{in}),\Psi_{u_{i}}\}\randarrow\textsf{SS.share}(u_i,[n],m+1)$ \item Generates shares of the proof $\pi_i$.\vspace{-0.3cm} \begin{gather*}\small\{(1,h_{i1}),\cdots,(n,h_{in}),\Psi_{h_{i}}\}\randarrow \textsf{SS.share}(h_i,[n]\setminus i,m+1), \{(1,a_{i1}),\cdots,(n,a_{in}),\Psi_{a_{i}}\}\randarrow \textsf{SS.share}(a_i,[n]\setminus i,m+1 )\small\\\{(1,b_{i1}),\cdots,(n,b_{in}),\Psi_{b_{i}}\}\randarrow \textsf{SS.share}(b_i,[n]\setminus i,m+1 ), \{(1,c_{i1}),\cdots,(n,c_{in}),\Psi_{c_{i}}\}\randarrow \textsf{SS.share}(c_i,[n]\setminus i,m+1 )\end{gather*}\vspace{-0.35cm}\item Encrypts proof strings for all other clients. $ \forall \Cl_j\in\Cl_{\setminus i}, \overline{(j,u_{ij})||(j,\pi_{ij})}\randarrow \textsf{AE.enc}\big(sk_{ij},(j,u_{ij})||(j,\pi_{ij})\big), \pi_{ij}=h_{ij}||a_{ij}||b_{ij}||c_{ij}$. \item Publishes check strings and the encrypted proof strings on the bulletin. 
$ \forall \Cl_j\in\Cl_{\setminus i}, \Cl_i\xrightarrow{\overline{(j,u_{ij})||(j,\pi_{ij})}}\Bl; \Cl_i\xrightarrow{\Psi_{{u}_i},\Psi_{\pi_i}}\Bl$
\end{itemize}
\item \textbf{Round 3 (Verify Proof)}.
\\(i) \textit{ Verifying validity of secret shares}:
\\\textit{Client}:
Each client $\Cl_i$
\begin{itemize}\item Downloads and decrypts proof strings for all other clients from the public bulletin.  Flags a client in case their decryption fails. \vspace{-0.2cm}\begin{gather*}\vspace{-0.2cm}\small\forall \Cl_j \in \Cl_{\setminus i}, \Cl_i\xleftarrow{\overline{(i,u_{ji})||(i,\pi_{ji})},\Psi_{{u}_j},\Psi_{{\pi}_j}}\Bl, (i,u_{ji})||(i,\pi_{ji}) \leftarrow \textsf{AE.dec}\big(sk_{ij},\overline{(i,u_{ji})||(i,\pi_{ji})}\big)\\\vspace{-0.1cm} \small\bot\leftarrow \textsf{AE.dec}\big(sk_{ij},\overline{(i,u_{ji})||(i,\pi_{ji})}\big) \implies Cl_i\xrightarrow {\text{Flag } \Cl_j} \Bl \end{gather*}  \item Verifies the shares $u_{ji}(\pi_{ji})$ using checkstrings $\Psi_{{u}_j} (\Psi_{{\pi}_j})$ and flags all clients with invalid shares.\vspace{-0.3cm}
\begin{gather*}\forall \Cl_j \in \Cl_{\setminus i}, 0 \leftarrow \big(\textsf{SS.verify}((i,u_{ji}),\Psi_{{u}_{j}})\wedge \textsf{SS.verify}((i,\pi_{ji}),\Psi_{{\pi}_{j}})\big)\implies  \Cl_i \xrightarrow{\text{Flag }\Cl_j}\Bl  \end{gather*}
\end{itemize}
\vspace{-0.8cm}
\textit{Server}:
\begin{itemize}\item If client $\Cl_i$ flags client $\Cl_j$, the server updates $\textsf{Flag}[j]=\textsf{Flag}[j]\cup \Cl_i$ \item  Updates the list of malicious client  $\CA$ as follows: \begin{itemize}[label=\scalebox{0.7}{$\blacktriangleright$}]\item Adds all clients who have flagged $\geq m+1$ other clients. $\forall \Cl_i \mbox{ s. t. } Z=\{j|\Cl_i \in \textsf{Flag}[j]\}, |Z|\geq m+1\implies \CA\leftarrow\CA\cup \Cl_i$\item Adds all clients with more than $m+1$ flag reports. $|\textsf{Flag}[i]|\geq m+1 \implies \CA\leftarrow\CA\cup \Cl_i$ \item For clients with less flag reports, the server obtains the corresponding shares in the clear, verifies them and updates $\CA$ accordingly. $\forall \Cl_j \text{ s.t } 1 \leq |\textsf{Flag}[j]|\leq m, \forall \Cl_i \text{ s.t. } \Cl_i \text{ has flagged } \Cl_j $ \begin{itemize}[label=\scalebox{1}{$-$}] \item $\Cl_j \xrightarrow{(i,u_{ji}),(i,\pi_{ji})} \Bl$ \item $
       \text{if } \big(\textsf{SS.verify}((i,u_{ji}),\Psi_{{u}_{j}}) \wedge \textsf{SS.verify}((i,\pi_{ji}),\Psi_{\pi_{j}})\big)=0 \implies \CA\leftarrow\CA\cup \Cl_j, \text{ otherwise, } \Cl_i$  uses the verified shares to compute its proof summary  $\sigma_{ji}\vspace{-0.1cm} $\end{itemize}
       \end{itemize}
       \item Publishes $\CA$ on the bulletin. $\Ser\xrightarrow{\CA}\Bl$
       \end{itemize}

     (ii)  \textit{ Generation of proof summaries by the clients.}
   \\\textit{Server}: \begin{itemize}\item Server announces a random number $r\in \Fl$. $\Ser \xrightarrow{r}\Bl$\end{itemize}
   \textit{Client}: Each client $\Cl_i \in \Cl\setminus\CA$
     \begin{itemize}
\item Generates a summary $\sigma_{ji}$ of the proof string $\pi_{ji}$ based on $r$, $\forall \Cl_j \in \Cl\setminus (\CA\cup\Cl_i), \Cl_i\xleftarrow{r}\Bl,\sigma_{ji}=\big( (i,w^{out}_{ji}),(i,\lambda_{ji})\big),\Cl_i\xrightarrow{\sigma_{ji}}\Bl$ \end{itemize}

(iii)\textit{ Verification of proof summaries by the server.}\\
\textit{Server}:
\begin{itemize}\item  
 Downloads and verifies the proof for all clients not on $\CA$ via robust reconstruction of the digests and updates $\CA$ accordingly. \scalebox{0.87}{$\forall \Cl_i \in \Cl\setminus\CA, \Ser\xleftarrow{\sigma_{ij}}\Bl, \big( \textsf{SS.robustRecon}(\{(j,w^{out}_{ij})\}_{\Cl_j \in \Cl\setminus(\CA\cup\Cl_i)}) \neq 1 \vee \textsf{SS.robustRecon}(\{(j,\lambda_{ij})\}_{\Cl_j \in \Cl\setminus(\CA\cup\Cl_i)})\neq 0 \big)\implies \CA\leftarrow\CA\cup \Cl_i$} \item Publishes the updated list $\CA$ on the  bulletin. $\Ser \xrightarrow{\CA} \Bl$\end{itemize}

 \item \textbf{Round 4 (Compute Aggregate).}\\\textit{Client}: Each client $\Cl_i$  \begin{itemize}\item If $\Cl_i$ is on $\CA$, $\Cl_i$ raises a dispute by sending the transcript of the reconstruction of $\sigma_i$ that shows \scalebox{0.9}{$\lambda_i=0\wedge w^{out}_j=1$} and aborts, OR\\ $\forall \Cl_j \in \Cl_{\setminus i}, \Cl_i \xleftarrow{\sigma_{ij}}\Bl,$ $ \Cl_i\xrightarrow{\mbox{Transcript of } \textsf{SS.robustRecon}(\{(j,\sigma_{ij})\}_{\Cl_j \in \Cl\setminus(\CA\cup\Cl_i)})}\Bl$ 
 \item Aborts protocol if it sees any other client on $\CA$ successfully raise a dispute,  OR
 \item 
If no client has raised a dispute and $\Cl_i$ is not on $\CA$, sends the aggregate of the shares of clients in \scalebox{0.9}{$\Cl\setminus\CA$} to the server. \scalebox{0.9}{$\Ul_{i}=\underset{\Cl_j\in \Cl\setminus \CA}{\sum u_{ji}}, \Cl_i\xrightarrow{\Ul_i}\Ser$}
 \end{itemize}\vspace{-0.2cm} 
\textit{Server}:\begin{itemize} \item Reconstructs the final aggregate. $\Ul\leftarrow \textsf{SS.robustRecon}(\{(i,\Ul_i)\}_{\Cl_i \in \Cl\setminus\CA })$
\end{itemize}

\end{itemize}
\end{varwidth}}
\caption{\name: Description of the secure aggregation with verified inputs protocol.}\label{fig:name:protocol}\end{figure*}

Table \ref{tab:complexity} analyses the complexity of  \name~in terms of the number of clients $n$, number of malicious clients $m$ and data dimension $d$. We assume that  \scalebox{0.9}{$|\Vl|$} is of the order of \scalebox{0.9}{$O(d)$}. The total number of one-way communication is $12$ and $9$ for the clients and the server, respectively. A per-round analysis is presented in App. \ref{sec:complexity}. 

\begin{table}
\centering
\resizebox{\columnwidth}{!}{\begin{tabular}{lccccl}\toprule
 &  \bf Computation &  \bf Communication\\ 
\\\midrule
\bf Client   & $O(mnd)$ & $O(mnd)$ \\
\bf Server & $O\big((n+d)n\log^2n\log \log n+md\min(n,m^2)\big)$ & $O\big(n^2+md\min(n,m^2)\big)$ \\\bottomrule
\end{tabular}}
\caption{Computational and communication complexity of \name~for the server and an individual client. 
}\label{tab:complexity}\vspace{-0.8cm}
\end{table}


\section{Security Analysis}\label{sec:security_analysis}
In this section, we formally analyze the security of \name. 
\vspace{-0.2cm}
\begin{theorem}For any public validation predicate \scalebox{0.9}{$\Vl(\cdot)$} that can be expressed by an arithmetic circuit, \name~is a \SAIV~protocol (Def. \ref{def:SAVI}) for \scalebox{0.9}{$|\Cl_M|<\lfloor\frac{n-1}{3}\rfloor$} and $\Cl_\Vl=\Cl\setminus\CA$.\vspace{-0.2cm} \label{thm:main}\end{theorem}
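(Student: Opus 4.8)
The plan is to verify the two requirements of Definition~\ref{def:SAVI} separately: \textbf{integrity} (Eq.~\eqref{eq:SAVI:integrity}) and \textbf{privacy} (Eq.~\eqref{eq:SAVI:privacy}), taking $\Cl_\Vl = \Cl\setminus\CA$. Throughout I would lean on the parameter constraint $m = |\Cl_M| < \lfloor\frac{n-1}{3}\rfloor$, which is exactly what makes the Reed--Solomon bound $2q+e < n-m-1$ hold: the $\le m$ malicious verifiers contribute $q$ corrupted shares and $e$ dropouts with $q+e\le m$, so $2q+e \le 2(q+e) \le 2m < n-m-1$, and hence $\textsf{SS.robustRecon}(\cdot)$ always returns the true secret. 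The inclusion $\Cl_\Vl\subseteq\Cl$ is then immediate, so the real work is (i) $\Cl_H\subseteq\Cl_\Vl$, i.e.\ $\CA\cap\Cl_H=\varnothing$, and (ii) every surviving client satisfies $\Vl(u_i)=1$; recall that for integrity the server is honest (Remark~1).

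For integrity I would enumerate the ways a client can enter $\CA$ and show each one spares honest clients while catching malformed updates. An honest prover's Shamir/VSS shares always pass $\textsf{SS.verify}(\cdot)$, so it can only be flagged by the $\le m$ malicious verifiers; since $m < m+1$ it never crosses the $m+1$ threshold, and in the ``$1\le|\textsf{Flag}[j]|\le m$'' branch its revealed share verifies, so the server never marks it malicious. Symmetrically, an honest verifier flags only the $\le m$ clients that actually sent it bad shares, so it never flags $\ge m+1$ clients. For phase~(iii), the crux is that the Feldman check strings commit every surviving prover to a \emph{unique} input and proof (the VSS consistency property), so the summaries the honest verifiers broadcast are exactly those of a genuine SNIP run on that committed input. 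Robust reconstruction recovers the true $w^{out}_i$ and $\lambda_i$ despite malicious verifiers; SNIP \emph{completeness} then gives $w^{out}_i=1,\lambda_i=0$ for honest well-formed provers, while SNIP \emph{soundness} with the Schwartz--Zippel test rejects any committed malformed input except with probability $\le (2\Ml-2)/|\Fl|$, which is $\mathrm{negl}(\kappa)$ since $|\Fl|\ge 2^\kappa$. A union bound over the $\le n$ provers keeps the total error negligible, yielding Eq.~\eqref{eq:SAVI:integrity}.

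For privacy I would exhibit a P.P.T.\ simulator $\mathrm{Sim}(\Ul_H,\Cl_H,\Omega_{\Cl_M\cup\Ser})$ and prove $\equiv_C$ by a hybrid walk from the real view to the simulated one. The hybrids would, in turn: (a) replace the authenticated-encryption ciphertexts exchanged between pairs of honest clients by encryptions of zeros, indistinguishable by semantic security of $\textsf{AE}$ since the adversary lacks the pairwise keys; (b) replace each honest client's shares visible to the corrupted parties (at most $m$ of them, since only the $m$ malicious clients receive honest shares and any share opened during flagging is one they already hold) by uniform field elements, a perfect switch because any $\le m < m+1$ Shamir shares are independent of the secret, simulating the Feldman check strings accordingly (this is the step where indistinguishability is only \emph{computational}, as Feldman commitments hide the secret only under a discrete-log/DDH assumption); (c) replace each honest prover's proof shares by the output of the SNIP zero-knowledge simulator, valid since the adversary holds $\le m$ shares (below threshold) and $\Vl(u_i)=1$. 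Finally, in Round~4 the simulator must emit honest-client aggregate shares whose robust reconstruction matches the real output; given $\Ul_H$ and the malicious clients' opened contributions read from $\Omega_{\Cl_M\cup\Ser}$, it fixes the low-degree interpolation so that $\textsf{SS.robustRecon}$ returns the correct value.

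The step I expect to be the main obstacle is the interface between the cryptographic reduction and the adaptive, fully malicious adversary in phase~(iii): I must argue that the honest verifiers' summaries, computed on \emph{possibly inconsistent} shares dealt by a cheating prover, still correspond to a single well-defined SNIP statement so that soundness applies. This is precisely what the verifiable-secret-sharing layer buys -- the check strings pin down a unique polynomial, so a prover cannot make different honest verifiers evaluate different inputs -- but making that reduction airtight, and checking that a malicious server can bias neither the robust reconstruction nor the public randomness $r$ in a way that lets a malformed input slip through, is the delicate part. A secondary subtlety is privacy consistency: the simulator receives only $\Ul_H$, not the output $\Ul_\Vl$, so I must confirm that every revealed message about honest clients can be made consistent with $\Ul_H$ alone, using the fact that the malicious clients' valid contributions are fully determined by their own (adversarially chosen but readable) shares.
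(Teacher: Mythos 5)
Your proposal tracks the paper's own proof almost lemma-for-lemma. The flagging case analysis (honest provers never cross the $m+1$ threshold, honest verifiers never flag $\geq m+1$ clients) is exactly how the paper establishes $\CA\cap\Cl_H=\varnothing$ and Lemma~\ref{lemma:2} (App.~\ref{app:proof1}); your completeness/soundness split is Lemmas~\ref{lemma:1} and~\ref{lemma:2}; the Reed--Solomon bound you derive from $m<\lfloor\frac{n-1}{3}\rfloor$ is the paper's Fact~2; and your privacy hybrids (a)--(d) correspond to $\textsf{Hyb}_1$ through $\textsf{Hyb}_4$ in App.~\ref{app:security}, including the closing step where the simulator interpolates a polynomial through the adversary's shares with constant term $\Ul_H$ so that $\textsf{SS.robustRecon}$ returns the right aggregate.

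The genuine gap is that you discard the paper's Lemma~\ref{lemma:3} (the Round-4 dispute mechanism) by appealing to Remark~1, and that appeal is circular. Remark~1 does \emph{not} assume the server honest with respect to excluding honest clients: it splits malicious-server misbehavior into Case~1 (wrongly marking an honest client invalid) and Case~2 (wrongly marking a malicious client valid), dismisses only Case~2 on incentive grounds, and explicitly credits Lemma~\ref{lemma:3} with preventing Case~1. The constraint $\Cl_H\subseteq\Cl_\Vl$ in Eq.~\eqref{eq:SAVI:integrity} must therefore survive a server that maliciously publishes $\CA$ with $\CA\cap\Cl_H\neq\varnothing$; the paper handles this by having the wronged honest client exhibit the publicly logged shares of its summary $\sigma_i$ as a dispute transcript, forcing all honest clients to abort, while a client with a malformed update cannot fabricate such a transcript (by the very soundness argument you give). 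Without this piece your proof establishes the theorem only for a server that computes $\CA$ honestly, which is weaker than what the theorem, read against the paper's threat model, asserts. A secondary omission: Eq.~\eqref{eq:SAVI:integrity} also requires applying the robust-reconstruction bound once more in Round~4 itself, since clients in $\Cl\setminus\CA$ that are malicious-but-well-formed may submit corrupted or withheld shares of the final aggregate $\Ul_i$; your opening paragraph states the bound, but your integrity argument stops at the proof summaries and never returns to the output $\textsf{out}=\Ul_\Vl$.
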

We present a proof sketch of the above theorem here; the formal proof is in App. \ref{app:security}.

\textit{Proof Sketch.} The proof  relies on the following two facts. 
\\\textbf{Fact 1.} \textit{Any set of $m$ or less shares in \name~reveals nothing about the secret.}\\
\textbf{Fact 2.} \textit{A $(n,m\!+\!1,n\!-\!m)$ Reed-Solomon error correcting code can correctly construct the message with up to  $q$ errors and $e$ erasures (message dropout), where $2q+e<n-m
+1$. In \name, we have $q+e\!=\!m$ where $q$ is the number of malicious clients that provide erroneous shares and $e$ is the number of clients that withhold a message or are barred from participation (\emph{i.e.}, are in $\Cl^*$).} 

\textit{Integrity.} We prove that \name~satisfies the integrity constraint of the \SAIV~protocol using the following three lemmas. 
\begin{lemma}\vspace{-0.3cm} 
\name~accepts the update of every honest client.\label{lemma:1}
\begin{equation}\vspace{-0.3cm}
\forall \Cl_i \in \Cl_{H}: \Pr_{\name}[\textsf{Accept } u_i]=1. \label{eq:complete}
\end{equation}
\end{lemma}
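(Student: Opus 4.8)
The plan is to observe that \name~\emph{accepts} $u_i$ precisely when the honest client $\Cl_i$ is never added to the malicious list $\CA$, so the lemma reduces to ruling out every mechanism by which a client can land on $\CA$. Tracing Round~3, a client is added to $\CA$ if (a) it accumulates $\ge m+1$ flags, (b) it itself flags $\ge m+1$ other clients, (c) the robust reconstruction of its proof summary returns $w^{out}_i\ne 1$ or $\lambda_i\ne 0$, or (d) it withholds summary shares. I would show that none of (a)--(d) can fire for an honest $\Cl_i$. The ingredients are: honestly generated shares always pass $\textsf{SS.verify}$, an honest party follows the protocol verbatim, there are at most $m$ malicious clients, SNIP \emph{completeness}, and Fact~2 (robust Reed--Solomon reconstruction).

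For the share-validity phase (cases (a) and (b)), I would first argue that because $\Cl_i$ produces its shares honestly via $\textsf{SS.share}$, every honest verifier's $\textsf{SS.verify}$ returns $1$ and its decryption never fails; hence only the $\le m$ malicious clients can ever flag $\Cl_i$, so $|\textsf{Flag}[i]|\le m<m+1$ and case (a) cannot fire. In the residual sub-case $1\le|\textsf{Flag}[i]|\le m$, the protocol has $\Cl_i$ reveal the disputed shares in the clear; since they are valid, the server's re-verification passes and $\Cl_i$ is not convicted. Symmetrically, an honest $\Cl_i$ only flags a client whose share to it was undecryptable or invalid, which happens solely for malicious senders, so $\Cl_i$ flags at most $m<m+1$ clients and case (b) cannot fire either.

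For the proof-summary phase (cases (c) and (d)), I would invoke SNIP completeness: since $\Cl_i$ is honest it has $\Vl(u_i)=1$ and builds $h_i=f_i\cdot g_i$ with valid Beaver triples, so the true summary values are $w^{out}_i=1$ and $\lambda_i=r\big(f_i(r)g_i(r)-h_i(r)\big)=0$ for \emph{every} $r$, i.e.\ deterministically (this is exactly why the bound is $1$ and not $1-\mathrm{negl}$). The remaining point is that the server recovers these exact values even though malicious verifiers may corrupt or drop their summary shares: the $\le m$ malicious verifiers contribute $q$ errors and $e$ erasures with $q+e\le m$, and the threshold $m<\lfloor\frac{n-1}{3}\rfloor$ yields $2q+e\le 2m<n-m-1$, which lies strictly inside the decoding radius of the $[n\!-\!1,\,m\!+\!1,\,n\!-\!m]$ Reed--Solomon code. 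Hence $\textsf{SS.robustRecon}$ returns $w^{out}_i=1$ and $\lambda_i=0$ by Fact~2, so case (c) fails; and since an honest $\Cl_i$ never withholds shares, case (d) fails too.

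The main obstacle I anticipate is the robust-reconstruction step: I must pin down the error-plus-erasure budget contributed by the malicious verifiers and verify it stays strictly within the Reed--Solomon decoding radius, which is precisely where the threshold $m<\lfloor\frac{n-1}{3}\rfloor$ is consumed (via $2q+e\le 2m<n-m-1$). A secondary subtlety is the reveal-and-verify branch for a minority of flags, which must be handled so that a cluster of up to $m$ malicious flaggers cannot frame an honest client. Combining the two phases shows $\Cl_i\notin\CA$ with probability $1$, hence $\Pr_{\name}[\textsf{Accept }u_i]=1$.
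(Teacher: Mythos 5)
Your proof is correct, and its mathematical core coincides with the paper's: honestly generated shares pass \textsf{SS.verify}, at least $n-m-1$ honest verifiers therefore produce correct shares of the summary $\sigma_i=(w^{out}_i,\lambda_i)$, Fact~2 (Reed--Solomon robust reconstruction, with $2q+e\le 2m<n-m-1$ following from $m<\lfloor\frac{n-1}{3}\rfloor$) lets the server recover $\sigma_i$ exactly, and SNIP completeness gives $w^{out}_i=1$, $\lambda_i=0$ deterministically. Where you differ is in the decomposition: the paper's proof jumps directly to the summary-reconstruction step and leans implicitly on the Round-3 discussion (in Sec.~\ref{sec:system:workflow}) that an honest client can never be placed on $\CA$ during share verification, whereas you make acceptance explicitly equivalent to $\Cl_i\notin\CA$ and exhaustively rule out every path onto $\CA$ --- too many flags received, too many flags issued, failed summary reconstruction, and withheld shares --- including the reveal-and-verify branch for $1\le|\textsf{Flag}[i]|\le m$ that the paper's proof never mentions. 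Your version is therefore more self-contained and closes the framing question (can $m$ malicious flaggers convict an honest client?) inside the proof itself; the paper's version is terser because it delegates that point to the protocol description and to the privacy lemmas. Both arguments consume the threshold $m<\lfloor\frac{n-1}{3}\rfloor$ in the same place, and both correctly obtain probability exactly $1$ rather than $1-\mathrm{negl}(\kappa)$, since completeness of SNIP and the erasure/error budget involve no failure probability.
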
\vspace{-0.2cm}
\begin{proof} 
By definition, client $\Cl_i\in \Cl_H$ has well-formed inputs, that is, $\Vl(u_i)\!=\!1$. Additionally, $\Cl_i$, by virtue of being honest, submits valid shares. Hence, at least $n-m-1$ other honest clients $\Cl_H\setminus \Cl_i$ will produce correct shares of the proof summary $\sigma_i=(w^{out}_i,\lambda_i)$. Using Fact 2, the server $\Ser$ is able to correctly reconstruct the value of $\sigma_i$. Eq. \ref{eq:complete} is now implied by the completeness property of SNIP.  \vspace{-0.2cm}
\end{proof}\vspace{-0.4cm}
\begin{lemma} 
All updates accepted by \name~are well-formed with probability $1-\mathrm{negl}(\kappa)$. \label{lemma:2}
\begin{gather*}\vspace{-0.3cm}\small\forall \Cl_i \in \Cl, \Pr_\name\big[\Vl(u_i)=1\big| \textsf{Accept } u_i\big]=1-\mathrm{negl}(\kappa).\numberthis\label{eq:soundness}
\end{gather*}
\vspace{-0.8cm}\label{lemma:4}\end{lemma}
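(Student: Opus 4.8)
The plan is to read this statement as the soundness half of the integrity guarantee and to reduce it to the soundness property of SNIP stated earlier. Concretely, I would show that whenever \name~accepts $u_i$ (equivalently, $\Cl_i \notin \CA$ at the end of Round~3), the server's checks $w^{out}_i = 1$ and $\lambda_i = 0$ are performed on exactly the summary $\sigma_i=(w^{out}_i,\lambda_i)$ that an \emph{all-honest-verifier} execution of SNIP would produce on a single, well-defined pair $(u_i,\pi_i)$; SNIP soundness then forces $\Vl(u_i)=1$ except with negligible probability. The argument rests on three ingredients: the verifiability (binding) property of the VSS, robust reconstruction (Fact~2), and SNIP soundness.

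First I would pin down a \emph{unique} input $u_i$ and proof $\pi_i$. A malicious prover $\Cl_i$ might try to equivocate by handing inconsistent shares to different verifiers. This is ruled out by the check strings: by the \textsf{SS.verify} guarantee, any $m+1$ shares that pass verification reconstruct the \emph{same} secret. Every honest client that receives an invalid share flags $\Cl_i$; if $\Cl_i$ collects $\geq m+1$ flags it is placed in $\CA$ (rejected), and if it collects $\leq m$ flags the disputed shares are revealed and re-verified in the clear, so an accepted $\Cl_i$ must have delivered \emph{valid} shares to every honest client. Since $m<\lfloor\frac{n-1}{3}\rfloor$ gives $n-m-1 > 2m$, so at least $m+1$ honest verifiers hold valid shares, these shares determine a unique $u_i$ and $\pi_i$ (the same values later summed into the aggregate), and hence $\Vl(u_i)$ is well-defined.

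Next I would argue that the server recovers this canonical summary. Each of the $\geq n-m-1$ honest verifiers computes its summary share $(w^{out}_{ij},\lambda_{ij})$ correctly by affine operations on valid shares, while the malicious verifiers contribute at most $q$ erroneous shares and $e$ dropouts with $q+e\leq m$ (Facts~1 and~2). Because $m<\lfloor\frac{n-1}{3}\rfloor$ implies $2q+e<n-m+1$, Reed--Solomon decoding inside \textsf{SS.robustRecon} returns precisely the values $w^{out}_i$ and $\lambda_i$ implied by the honest shares, i.e., the output of an all-honest SNIP run on $(u_i,\pi_i)$. Crucially, the server announces the challenge $r$ only in Round~3(ii), after all shares and proofs are committed on $\Bl$ in Round~2, so $r$ is independent of the prover's proof and the Schwartz--Zippel test applies with its stated error. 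SNIP soundness then yields: if $\Vl(u_i)=0$, an all-honest run rejects (some $w^{out}_i\neq 1$ or $\lambda_i\neq 0$) with probability at least $1-\nicefrac{(2\Ml-2)}{|\Fl|}$. Taking the contrapositive and using $|\Fl|\geq 2^{\kappa}$, acceptance implies $\Vl(u_i)=1$ except with probability $\nicefrac{(2\Ml-2)}{|\Fl|}=\mathrm{negl}(\kappa)$, which, after absorbing the computational binding error of Feldman's VSS (also $\mathrm{negl}(\kappa)$), gives Eq.~\ref{eq:soundness}.

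I expect the main obstacle to be the first step: closing off prover equivocation. The delicate point is that robust reconstruction and the final aggregation must operate on the \emph{same} secret that the honest verifiers used when computing their summaries; otherwise the server could accept a summary attesting to $\Vl(\cdot)=1$ for one value while a different value enters the aggregate. Establishing this requires carefully combining the $m+1$ threshold, the VSS binding property, and the flag-and-reveal sub-protocol to show that an accepted prover is effectively committed to one $(u_i,\pi_i)$ before $r$ is drawn. The SNIP soundness invocation and the Reed--Solomon counting are comparatively routine once this commitment is in place.
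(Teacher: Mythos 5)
Your proposal is correct and follows essentially the same route as the paper's proof: the flag-and-reveal mechanism forces any accepted client to have delivered valid shares to all $n-m-1$ honest verifiers, Fact 2 (Reed--Solomon robust reconstruction) then guarantees the server recovers the correct proof summary $\sigma_i$, and SNIP soundness yields Eq.~\ref{eq:soundness}. The paper's version is terser---it leaves implicit the points you spell out (uniqueness of the committed $(u_i,\pi_i)$ via VSS binding, independence of the challenge $r$ from the committed proof, and the negligible Feldman error)---but these are elaborations of the same argument rather than a different one.
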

The proof relies on the fact that a client will be verified only if it has submitted $\geq n-m-1$ valid shares (see App. \ref{app:proof1}).
\vspace{-0.4cm}
\begin{corollary}\name~rejects all malformed updates with probability $1-\mathrm{negl}(\kappa)$.\vspace{-0.3cm}\end{corollary}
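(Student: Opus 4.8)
The plan is to obtain this corollary as the rejection-side restatement of Lemma~\ref{lemma:2} (essentially its contrapositive). Concretely, I would fix an arbitrary client $\Cl_i$ whose update is malformed, i.e. $\Vl(u_i)=0$, and argue that $\Pr_{\name}[\textsf{Accept } u_i]$ is negligible in $\kappa$, so that $\Pr_{\name}[\textsf{Reject } u_i]=1-\mathrm{negl}(\kappa)$. Since $\Cl_i$ is adversarially chosen and fixed, the only randomness is over the protocol's coins (most importantly the challenge $r\in\Fl$ of the Schwartz--Zippel test), so there is no subtlety with priors: the claim is about a single worst-case malformed input, and a malformed input forces $\Cl_i\in\Cl_M$ because honest clients are assumed to hold well-formed updates.

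The key steps, in order, are as follows. First, for $\Cl_i$ to be accepted it must avoid $\CA$, which in particular requires that the shares $\Cl_i$ distributes pass $\textsf{SS.verify}(\cdot)$ for the honest clients; otherwise $\Cl_i$ accumulates $\geq m+1$ flags and is placed in $\CA$ (rejected). Second, conditioned on those shares being valid, the verifiability of the Feldman VSS guarantees they are consistent with a \emph{unique} secret, so the $\geq n-m-1$ honest verifiers each compute correct shares of the \emph{true} proof summary $\sigma_i=(w^{out}_i,\lambda_i)$. Third, by Fact~2 the server's $\textsf{SS.robustRecon}(\cdot)$ recovers exactly this true $\sigma_i$: with at most $m$ corrupted-or-missing contributions split as $q$ errors and $e$ erasures ($q+e=m$) and parameters satisfying $2q+e<n-m+1$ for $m<\lfloor\frac{n-1}{3}\rfloor$, Reed--Solomon decoding returns the correct value no matter what the malicious verifiers submit. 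Hence the prover cannot corrupt the reconstructed summary; its only avenue is to fool the underlying SNIP check itself. Finally I would invoke SNIP soundness: since $\Vl(u_i)=0$, the honest verification on the true summary rejects (so $w^{out}_i\neq 1$ or $\lambda_i\neq 0$) with probability at least $1-\nicefrac{(2\Ml-2)}{|\Fl|}$. As the setup fixes $|\Fl|\geq 2^{\kappa}$ and $\Ml$ is polynomial in the circuit size, this term is $\mathrm{negl}(\kappa)$, giving $\Pr_{\name}[\textsf{Accept } u_i]\leq\nicefrac{(2\Ml-2)}{|\Fl|}=\mathrm{negl}(\kappa)$, which is exactly the claimed bound.

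The main obstacle is not the soundness step (inherited directly from SNIP) but the middle step: arguing that robust reconstruction always returns the true summary rather than an adversarially forged one. This is precisely where the threshold $m<\lfloor\frac{n-1}{3}\rfloor$ is essential, as one must check that the honest contributions alone exceed the Reed--Solomon decoding radius so that the $\leq m$ malicious or dropped shares cannot flip the decoder's output. Granting Lemma~\ref{lemma:2}, this is already established and the corollary follows immediately; the proposal above simply makes the rejection-side reading explicit and records the negligibility coming from $|\Fl|\geq 2^{\kappa}$.
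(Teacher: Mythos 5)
Your proposal is correct and takes essentially the same route as the paper: the corollary is read off as the contrapositive of Lemma~\ref{lemma:2}, and your chain of steps --- share validity for the $\geq n-m-1$ honest verifiers, robust reconstruction of the \emph{true} summary $\sigma_i$ via Fact~2 under $m<\lfloor\frac{n-1}{3}\rfloor$, then SNIP soundness with $|\Fl|\geq 2^{\kappa}$ giving the $\mathrm{negl}(\kappa)$ bound --- is exactly the paper's proof of that lemma in App.~\ref{app:proof1}. The only compression is your claim that invalid distributed shares automatically yield $\geq m+1$ flags, which elides the paper's Case~2 ($1\leq|\textsf{Flag}[i]|\leq m$), where the flagged client may instead publicly reveal valid shares; since that dispute path also ends in either valid shares for every honest verifier or placement in $\CA$, your conclusion is unaffected.
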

Based on the above lemmas, at the end of Round 3, $\Cl\setminus \CA$ (set of clients whose updates have been accepted) must contain \textit{all} honest clients $\Cl_H$. Additionally, it may contain some clients $\Cl_i$ who have submitted well-formed updates with at least $n-m-1$ valid shares for $\Cl_H$, but may act maliciously for other steps of the protocol (for instance, give incorrect shares of proof summary for other clients or give incorrect shares of the final aggregate). This is acceptable provided that \name~is able to reconstruct the final aggregate containing \textit{only} well-formed updates which is guaranteed by the following lemma. 
\begin{lemma}\vspace{-0.4cm}The aggregate $\Ul$ must contain the updates of \textit{all} honest clients or the protocol is aborted.\label{lemma:3} \begin{gather*}\vspace{-0.5cm} \small\Ul=\Ul_H +\sum_{\Cl_i\in \bar{\Cl}} u_i \mbox{ where }\Ul_H=\sum_{\Cl_i\in \Cl_H}u_i\\\small\bar{\Cl}\subseteq \Cl\setminus\{\CA\cup \Cl_H\}\numberthis \label{eq:aggregate} \vspace{-0.5cm}\end{gather*}\end{lemma}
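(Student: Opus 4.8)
The plan is to factor the claim into two parts: that the aggregation set $\Cl\setminus\CA$ over which Round~4 sums contains \emph{every} honest client, and that the server's robust reconstruction returns the exact sum of the updates held by that set. Granting both, writing $\Cl\setminus\CA=\Cl_H\sqcup\bar{\Cl}$ with $\bar{\Cl}=(\Cl\setminus\CA)\setminus\Cl_H\subseteq\Cl\setminus\{\CA\cup\Cl_H\}$ gives $\Ul=\sum_{\Cl_j\in\Cl\setminus\CA}u_j=\Ul_H+\sum_{\Cl_i\in\bar{\Cl}}u_i$, which is exactly Eq.~\ref{eq:aggregate}; the ``or the protocol is aborted'' clause will cover the single way the first part can fail.

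First I would show that, conditioned on the protocol not aborting, $\Cl_H\cap\CA=\varnothing$. By Lemma~\ref{lemma:1} an honest client passes proof verification, so it can be placed on $\CA$ only through the flagging mechanism; but an honest client distributes valid shares to everyone, so it is never flagged by another honest client and can collect at most $m$ flags from the $m$ malicious clients, one short of the $m+1$ needed to be marked malicious. The residual risk is a deviating server that simply inserts an honest client into $\CA$. This is precisely what the Round~4 dispute step neutralizes: since all shares of each summary $\sigma_i$ are public on $\Bl$, a client wrongly on $\CA$ can post the transcript of $\textsf{SS.robustRecon}$ for its own $\sigma_i$ exhibiting $w^{out}_i=1\wedge\lambda_i=0$. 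An honest client can always produce such a transcript, whereas by SNIP soundness a client with a malformed update cannot; hence a successful dispute is an unforgeable certificate that a well-formed update was being withheld, and all clients abort. Thus either no honest update is excluded, or the protocol aborts.

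Next I would argue the reconstruction is exact. By linearity of Shamir's scheme, each honest $\Cl_i\in\Cl\setminus\CA$ returns $\Ul_i=\sum_{\Cl_j\in\Cl\setminus\CA}u_{ji}$, which is the genuine $i$-th share of $\Ul=\sum_{\Cl_j\in\Cl\setminus\CA}u_j$ (the evaluation at $i$ of the degree-$m$ polynomial $\sum_j p_j$). The only corrupted entries among $\{(i,\Ul_i)\}$ come from the malicious clients in $\Cl\setminus\CA$, each of which either reports a wrong value (an error) or withholds it (an erasure), so the numbers of errors $q$ and erasures $e$ satisfy $q+e\le m$. Invoking Fact~2 with the threshold $m<\lfloor\frac{n-1}{3}\rfloor$ (hence $3m<n$), we get $2q+e=q+(q+e)\le 2m<n-m+1$, so $\textsf{SS.robustRecon}$ decodes $\Ul$ correctly. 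Combined with $\Cl_H\subseteq\Cl\setminus\CA$ this yields the claimed identity.

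I expect the Reed--Solomon bookkeeping to be routine once Fact~2 is in hand; the delicate part is the abort argument, where one must establish \emph{both} directions simultaneously: an honest client wrongly excluded can always force an abort, and a genuinely malicious client already on $\CA$ cannot stage a \emph{spurious} abort (a denial-of-service), since it cannot forge a verifying transcript. Both rely on the public availability of every $\sigma_i$-share on $\Bl$ together with SNIP soundness, so I would make the unforgeability of the dispute certificate the technical crux of the write-up.
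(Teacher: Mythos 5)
Your proof is correct and follows essentially the same route as the paper's: honest clients can only end up excluded from $\Cl\setminus\CA$ through server misbehavior, in which case the public logging of all $\sigma_i$-shares on $\Bl$ lets the wronged client force an abort (while SNIP soundness, via Lemma~\ref{lemma:2}, prevents a malformed-update client from staging a spurious abort), and absent any dispute Eq.~\ref{eq:aggregate} follows from Fact~2. The only difference is presentational: you spell out the flag-counting argument and the explicit error/erasure bound $2q+e\le 2m<n-m+1$, which the paper establishes in its Round~3 discussion and in the statement of Fact~2 rather than inside this proof.
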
\vspace{-0.3cm}
\begin{proof} \vspace{-0.1cm}If the server \scalebox{0.9}{$\Ser$} acts maliciously and publishes a list \scalebox{0.9}{$\CA$} such that  \scalebox{0.9}{$\CA\cap \Cl_H\neq \varnothing$}, an honest client \scalebox{0.9}{$\Cl_i \in \CA\cap \Cl_H$} publicly raises a dispute. This is possible since all the shares of $\sigma_i$ are publicly logged on \scalebox{0.9}{$\Bl$}. If the dispute is successful, all honest clients will abort the protocol. Note that a malicious client with malformed updates cannot force the protocol to abort in this way since it will not be able to produce a successful transcript with high probability (Lemma \ref{lemma:2}). If no clients raise a successful dispute, Eq. \ref{eq:aggregate} follows directly from Fact 2.  \scalebox{0.9}{$\bar{\Cl}$} represents a set of  malicious clients with well-formed updates which corresponds to \scalebox{0.9}{$\Cl_\Vl\setminus\Cl_H$} in Eq. \ref{eq:SAVI:U}. \end{proof} \vspace{-0.3cm}
\textit{Privacy.} The privacy constraint of \SAIV~states that nothing should be revealed about a private update $u_i$ for an honest client $\Cl_i$, except: 
\squishlist\vspace{-0.2cm}
\item $u_i$ passes the integrity check, \emph{i.e.}, $\Vl(u_i)=1$ 
\item anything that can be learned from the aggregate of honest clients, $\Ul_H$.  
\squishend\vspace{-0.2cm}
We prove that \name~satisfies this privacy constraint with the help of the following two helper lemmas.
\begin{lemma}\vspace{-0.2cm}In Rounds 1-3, for an honest client $\Cl_i\in \Cl_H$, \name~reveals nothing about $u_i$ except $\Vl(u_i)=1$.\vspace{-0.4cm}\label{lemma:5}\end{lemma}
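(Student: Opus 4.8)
The plan is to prove Lemma~\ref{lemma:5} by a standard simulation argument: I will exhibit a P.P.T.\ simulator that, given only the fact that $\Vl(u_i)=1$ together with the public structure of the transcript, reproduces the joint view of the adversarial coalition $\Cl_M\cup\Ser$ during Rounds~1--3 up to computational indistinguishability. First I would pin down exactly what the coalition observes about an honest $\Cl_i$'s update: (a)~at most $m$ Shamir shares of $u_i$ and of the proof $\pi_i$ (those delivered to the $m$ malicious verifiers, plus any shares opened in the clear during the flagging sub-protocol of Round~3(i)); (b)~the ciphertexts of the shares routed to the honest clients; (c)~the Feldman check strings $\Psi_{u_i}$ and $\Psi_{\pi_i}$; and (d)~the published proof summaries $\sigma_{ji}$ together with the reconstructed values $w^{out}_i=1$ and $\lambda_i=0$. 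The key structural observation is that item~(a) never exceeds $m$ openings for an honest client, since an honest $\Cl_i$ submits valid shares to every honest verifier and can therefore be flagged only by malicious clients, of which there are at most $m$.

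I would then migrate from the real view to a $u_i$-independent view through a short sequence of hybrids. In the first hybrid, replace every ciphertext exchanged between two honest parties by an encryption of $0$; indistinguishability follows from IND-CPA security of \textsf{AE} together with the fact that the pairwise key $sk_{ij}$ between two honest clients is hidden from the coalition by security of the key-agreement protocol. In the second hybrid, replace the proof shares held by the malicious verifiers and all broadcast summaries $\sigma_{ji}$ with the output of the SNIP zero-knowledge simulator $\mathrm{Sim}_\pi$; this is legitimate because the malicious verifiers for $\Cl_i$'s proof form a \emph{proper} subset of the $n-1$ verifiers---since $|\Cl_M|<\lfloor\frac{n-1}{3}\rfloor<n-1$, at least one honest verifier always remains---so SNIP's zero-knowledge property applies and yields summaries consistent with $\Vl(u_i)=1$. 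In the final hybrid, replace the $\le m$ shares of $u_i$ and $\pi_i$ seen by the coalition with uniformly random field elements and replace the check strings with correspondingly simulated commitments. The share replacement is information-theoretically undetectable by Fact~1 (any $m$ Shamir shares are independent of the secret), and the check-string replacement is justified by the hiding of Feldman's commitments. Chaining the hybrids shows that the coalition's view across Rounds~1--3 is computationally indistinguishable ($\equiv_C$) from the simulator's output, which is precisely the claim.

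I expect the \textbf{main obstacle} to be the careful treatment of the Feldman check strings $\Psi_{u_i},\Psi_{\pi_i}$ in the final hybrid. Unlike the Shamir shares, whose secrecy is information-theoretic via Fact~1, the commitments in $\Psi$ are only \emph{computationally} hiding, so their replacement must be argued under the hardness of discrete logarithms and---crucially---must remain jointly consistent with the simulated shares and with $\mathrm{Sim}_\pi$'s output, since the three simulators cannot be invoked independently without verifying that every opened share still passes \textsf{SS.verify} against the simulated check string. A secondary point requiring care is bounding the number of clear-text share openings across the full flagging-and-dispute logic of Round~3(i): I would close this using the pigeon-hole reasoning already present in the protocol description, which guarantees $|\textsf{Flag}[i]|\le m$ for an honest $\Cl_i$ and hence that Fact~1's threshold is never breached.
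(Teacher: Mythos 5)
Your proposal is sound and relies on exactly the ingredients the paper uses, but it is structured differently from the paper's own proof of this lemma. The paper's argument (App.~\ref{app:proof2}) is deliberately concise: it first establishes your key counting observation --- honest clients never flag $\Cl_i$ because they receive valid shares, the server cannot lie about flags since they are logged on $\Bl$, so at most $m$ shares of $u_i$ and $\pi_i$ are ever opened --- and then invokes Fact~1 (any $m$ Shamir shares are independent of the secret) together with the observation that every reconstruction of $\sigma_i=(w^{out}_i,\lambda_i)$ necessarily involves at least one honest verifier, so the zero-knowledge property of SNIP closes the argument. Your explicit hybrid simulation is essentially the structure of the paper's proof of the \emph{main theorem} (App.~\ref{app:security}), which subsumes this lemma; so you have effectively merged the two. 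The one place where your route as written would run into trouble is the final hybrid, and you correctly identified it as the main obstacle: Feldman check strings are not hiding commitments that can be simulated independently of the shares (the check string contains $g^{c_0}=g^{s}$, so it information-theoretically determines the secret), hence ``replace the shares with random values and the check strings with simulated commitments'' cannot be justified by a generic hiding property. The paper resolves this exactly where you suspected a resolution was needed: its simulator ($\textsf{Hyb}_3$ in App.~\ref{app:security}) samples a fresh random input $s_i$ with $\Vl(s_i)=1$ and runs the \emph{honest} sharing, check-string, and proof-generation algorithms on it, so shares, Feldman commitments, and SNIP summaries are jointly consistent by construction; indistinguishability then follows from the discrete-log assumption underlying the VSS, the perfect secrecy of degree-$(m+1)$ Shamir sharing, and SNIP's zero knowledge. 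Adopting that single change would make your hybrid chain go through without the unresolved consistency issue.
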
 The proof uses the fact that only $m$ shares of $\Cl_i$, which correspond to the $m$ malicious clients, can be revealed (see App. \ref{app:proof2}).
\begin{lemma}\vspace{-0.2cm}
In Round $4$, for an honest client $\Cl_i\in \Cl_H$, \name~reveals nothing about $u_i$ except whatever can be learned from the aggregate.\vspace{-0.2cm}
\end{lemma}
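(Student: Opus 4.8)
The plan is to finish the simulation-based privacy argument begun in Lemma~\ref{lemma:5}. That lemma supplies a simulator $\mathrm{Sim}$ reproducing the joint view of $\Cl_M\cup\Ser$ through Round~3 from only $\Cl_H$ and the adversary's next-message function $\Omega_{\Cl_M\cup\Ser}$; I would extend that simulator to Round~4 and show the extension needs nothing beyond the honest aggregate $\Ul_H$. The only fresh messages the adversary observes in Round~4 are the aggregate shares $\{(i,\Ul_i)\}$ that the honest clients $\Cl_i\in\Cl_H$ send to $\Ser$ (the malicious clients' aggregate shares are produced by $\Omega$ itself, and by Fact~2 robust reconstruction makes any corrupted ones irrelevant to the recovered value). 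If some honest client raises a successful dispute so that everyone aborts, no such shares are sent and the round is trivially simulatable, so I would focus on the non-abort case.

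The key structural observation is the linearity of Shamir's scheme. Each honest share $\Ul_i=\sum_{\Cl_j\in\Cl\setminus\CA}u_{ji}$ is the evaluation at point $i$ of the aggregate polynomial $P(X)=\sum_{\Cl_j\in\Cl\setminus\CA}p_j(X)$, where $p_j$ is $\Cl_j$'s degree-$\le m$ sharing polynomial; hence $\deg P\le m$ and $P(0)=\Ul$. By Lemma~\ref{lemma:3}, $\Ul=\Ul_H+\sum_{\Cl_i\in\bar{\Cl}}u_i$, and since the adversary controls the clients in $\bar{\Cl}$ it knows their updates, so $P(0)$ is computable from $\Ul_H$ together with the adversary's own state.

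The crux is a determinacy/counting step. From its Round~2 view the adversary already holds, for every honest $\Cl_j$, the $m$ shares $u_{ji}$ at the $m$ malicious indices $i$, and for every $\Cl_j\in\bar{\Cl}$ it holds all shares; summing, it can compute $P(i)$ at each of the $m$ malicious indices. A polynomial of degree at most $m$ is uniquely pinned down by $m+1$ evaluations, and here we have exactly those $m$ malicious-index values together with $P(0)=\Ul$. Thus $P$ is fully determined by data the simulator already possesses, and the honest aggregate shares $\{P(i):\Cl_i\in\Cl_H\}$ can be recovered by interpolation. I would therefore have $\mathrm{Sim}$ compute $\Ul$ from $\Ul_H$ and the malicious updates, interpolate the unique $P$ consistent with $P(0)=\Ul$ and its $m$ held shares, and emit $P(i)$ as the simulated Round-4 message of each honest $\Cl_i$; this matches the real transcript exactly, yielding a perfect simulation of Round~4 on input $\Ul_H$.

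I expect the main obstacle to be the bookkeeping showing that the $m$ individual shares the adversary retains from Round~2 leak nothing about any single honest $u_i$ beyond the aggregate: the determinacy step shows the honest Round-4 shares add no information once $\Ul_H$ is fixed, but I must still invoke Fact~1 to rule out that those $m$ per-client shares combine with $\Ul$ to expose an individual $u_i$. Composing this Round-4 simulator with the Lemma~\ref{lemma:5} simulator for Rounds~1--3 then gives a single $\mathrm{Sim}(\Ul_H,\Cl_H,\Omega_{\Cl_M\cup\Ser})$ whose output is computationally indistinguishable from $\mathrm{Real}_\Pi$, establishing the privacy bound of Eq.~\ref{eq:SAVI:privacy} and completing the privacy half of Theorem~\ref{thm:main}.
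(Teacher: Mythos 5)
Your proposal is correct and, in substance, takes the same route as the paper: although the in-text proof of this lemma is only a one-line appeal to Lemma~\ref{lemma:3} and Fact~2, the paper's formal privacy proof (App.~\ref{app:security}, hybrid $\textsf{Hyb}_4$) simulates Round~4 exactly as you do --- computing the aggregate $\Ul$ from $\Ul_H$ and the adversary's own well-formed updates, interpolating the unique degree-$m$ polynomial pinned down by the $m$ malicious-index shares together with its value at $0$, and emitting the honest clients' Round-4 shares from that polynomial, with indistinguishability following from the perfect secrecy of Shamir's scheme. Your determinacy/counting step, the handling of the abort case, and the composition with the Rounds~1--3 simulator all match the paper's hybrid argument.
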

\begin{proof}\vspace{-0.2cm}
In Round $4$, from Lemma \ref{lemma:3} and Fact 2,  the information revealed is either the aggregate or $\bot$. \vspace{-0.2cm}
\end{proof}

\vspace{-0.2cm}\section{\textsf{EIFF\texorpdfstring{\MakeLowercase{e}}{e}L}~Optimizations}\label{sec:opt}

\subsection{Probabilistic Reconstruction}\label{sec:opt:new}
The Gao's decoding algorithm alongside the use of verifiable secret sharing guarantees that the correct secret will be recovered (with probability one). However,  we can improve performance at the cost of a small probability of failure. \vspace{-0.2cm}
\\\\\textbf{Verifying Secret Shares.} As discussed in Sec. \ref{sec:complexity}, verifying the validity of the secret shares is the dominating cost for client-side computation. To reduce this cost, we propose an optimization where the validation of the shares corresponding to the proof \scalebox{0.9}{$\pi_i=\big( h_i, (a_i,b_i,c_i)\big)$} can be eliminated. Specifically, we propose the following changes to  Round 3:\vspace{-0.2cm}
\squishlist\item Each client $\Cl_i$ skips verifying the validity of the shares $(i,\pi_{ji})$ for  $\Cl_j \in \Cl_{\setminus i}$. \item Let \scalebox{0.9}{$e=|\CA|$}. The server \scalebox{0.9}{$\Ser$} samples two sets of clients \scalebox{0.9}{$P_1, P_2$} from \scalebox{0.9}{$\Cl \setminus \{\Cl_i\cup\CA\}$} of size at least \scalebox{0.9}{$3m-2e+1$} (\scalebox{0.9}{$P_1, P_2$} can be overlapping) and performs Gao's decoding on both the sets to obtain polynomials $p_1$ and $p_2$. The server accepts the  \scalebox{0.9}{$w^{out}_{i}$} (\scalebox{0.9}{$\lambda_i$}) only iff \scalebox{0.9}{$p_1\!=\!p_2$ }and  \scalebox{0.9}{$p_1(0)=p_1(0)=1 (p_1(0)=p_1(0)=0)$}. The cost of this step is \scalebox{0.9}{$O(n^2\log^2 n\log\log n)$} which is less than verifying the shares of \scalebox{0.9}{$\pi_i$} when \scalebox{0.9}{$m<n\ll d$} (improves runtime by \scalebox{0.9}{$2.3\times$}, see Table \ref{tab:E2E}). \vspace{-0.2cm}\squishend
 Note that a \scalebox{0.9}{$[n,k, n\!-\!k+1]$} Reed-Solomon error correcting code can correct up to \scalebox{0.9}{$\lfloor \frac{n-k -l}{2}\rfloor$} errors with $l$ erasures. Thus, with \scalebox{0.9}{$m\!-\!e$} malicious clients, only \scalebox{0.9}{$3m \!-\! 2e\!+\!1$} shares are sufficient to correctly reconstruct the secret for honest clients.  Since, the random sets $P_1$ and $P_2$ are not known, a malicious client with more than \scalebox{0.9}{$m \!-\! e$} invalid shares can cheat only with probability at most $\nicefrac{1}{\binom{3m-2e+2}{n-e}}$. We cannot extend this technique for the secret shares of the update $u$, because, unlike the value of the digests \scalebox{0.9}{($w^{out}\!=\!1,\lambda\!=\!0$)}, the final aggregate is unknown and needs to be reconstructed from the shares.
 
 \textit{Improvement.} Eliminates verification of check strings for the proof $\pi_i$ which reduces time by $2.3\times$ (Table \ref{tab:E2E}).
  \\\textit{Cost.} Additional \scalebox{0.9}{$\nicefrac{1}{\binom{3m-2e+2}{n-e}}$}
  probability of failure where \scalebox{0.9}{$e=|C^*|$ }.

\textbf{Robust Reconstruction.} In case \scalebox{0.9}{$m\leq \sqrt{n}-2$}, the robust reconstruction mechanism can be optimized as follows. 
Let \scalebox{0.9}{$q=m-|\CA|$} be the number of malicious clients that remain undetected.
The server $\Ser$ partitions the set of clients in $\Cl\setminus \CA$ into at least $q+2$ disjoint partitions, $P=\{P_1,\cdots, P_{q+2}\}$ each of size $m+1$. Let $p_j(x)=c_{j,0} + c_{j,1}x+c_{j,2}x^2+ \cdots +c_{j,m}x^{m}$ represent the polynomial corresponding to the $m+1$ shares of partition $P_j$. Recall that recovering just $p_j(0)=c_{j,0}$ suffices for a typical Shamir secret share reconstruction. However, now, the server $\Ser$ recovers the entire polynomial $p_j$, \emph{i.e.}, all of its coefficients  $\{c_{j,0}, c_{j,1},
\cdots, c_{j,q}\}$ for all $q+2$ partitions. Based on the pigeon hole principle, it can be argued that at least two of the partitions $(P_l,P_k \in P)$ will consist of \textit{honest} clients only. Hence, we must have at least two polynomials $p_l$ and $p_k$ that match and the value of the secret is their constant coefficient $p_l(0)$.
Note that the above mentioned optimization of skipping verifying the shares of the proof can be applied here as well.
A malicious client can cheat (\emph{i.e.}, make the server $\Ser$ accept even when \scalebox{0.9}{$w^{out}_i\neq 1 \vee \lambda_i\neq 0$} or reject the proof for an honest client) only if they can  manipulate the shares of at least two partitions which must contain at least \scalebox{0.9}{$2(m+1)-q$} honest clients.  Since the random partition $P$ is not known to the clients, this can happen only with probability $\nicefrac{1}{\binom{2(m+1)-q}{n-m-1}}$. 

\textit{Improvement.} Reduces the number of polynomial interpolations. \\
\textit{Cost.} Additional \scalebox{0.9}{$\nicefrac{1}{\binom{2(m+1)-q}{n-m-1}}$} probability of failure where \scalebox{0.9}{$q=m-|\CA|$ }.

\vspace{-0.2cm}
\subsection{Crypto-Engineering Optimizations}\label{sec:opt:general}

\textbf{Equality Checks.} The equality operator $=$ is relatively complicated to implement in an arithmetic circuit. To circumvent this issue, we replace any validation check of the form \scalebox{0.9}{$\Phi(u)=c_1\vee \Phi(u)=c_2\vee \cdots $} \scalebox{0.9}{$\vee\Phi(u)=c_k$} in the output nodes of \scalebox{0.9}{$\Vl(\cdot)$}, where \scalebox{0.9}{$\Phi(\cdot)$} is some arithmetic function, by an output of the form \scalebox{0.9}{$(\Phi(u)-c_1)\times\cdots\times(\Phi(u)-c_k)$}. Recall that in \name, the honest clients have well-formed inputs that satisfy \scalebox{0.9}{$\Vl(\cdot)$} by definition. Hence, this optimization does not violate the privacy of honest, which is our security goal.

\textit{Improvement.} Reduces the circuit size $|\Vl|$.
\\\textit{Cost.} No cost.

\textbf{Proof Summary Computation.} In addition to being a linear secret sharing scheme, Shamir's scheme is also multiplicative: given the shares of two secrets $(i,z_i)$ and  $(i,v_i)$, a party can locally compute $(i,s_{i})$ with $s\!=\!z\cdot v$. However, if the original shares correspond to a polynomial of degree $t$, the new shares represent a polynomial of degree $2t$. Hence, we do not rely on this property for the multiplication gates of \scalebox{0.9}{$\Vl(\cdot)$} as it would support only limited number of  multiplications. However, if \scalebox{0.9}{$m\!<\!\frac{n-1}{4}$}, we can still leverage the multiplicative property to generate shares of the random digest \scalebox{0.9}{$\lambda_i=f_i(r)\cdot g_i(r)=h_i(r)$} locally (instead of using Beaver's triples). 

\textit{Improvement.}  Saves a round of communication and reduces the number of robust reconstructions for $\lambda_i$ from three to just one (details in App. \ref{app:background}).\\\textit{Cost.} No cost. 


\textbf{Random Projection}.
As shown in Table \ref{tab:complexity}, both communication and computation grows linearly with the data dimension $d$.  Hence, we rely on the random projection \cite{WH} technique for reducing the dimension of the updates. Specifically, we use the fast random projection using Walsh-Hadamard transforms \cite{Ailon06}.

\textit{Improvement.} Reduces the data dimension which helps both computation and communication cost. \\
\textit{Cost.} Empirical evaluation (Sec. \ref{sec:eval:models}) shows that the efficacy of $\Vl(\cdot)$ is still preserved.
\begin{figure*}[h]
    \begin{subfigure}{0.24\linewidth}
    \centering \includegraphics[width=0.9\linewidth]{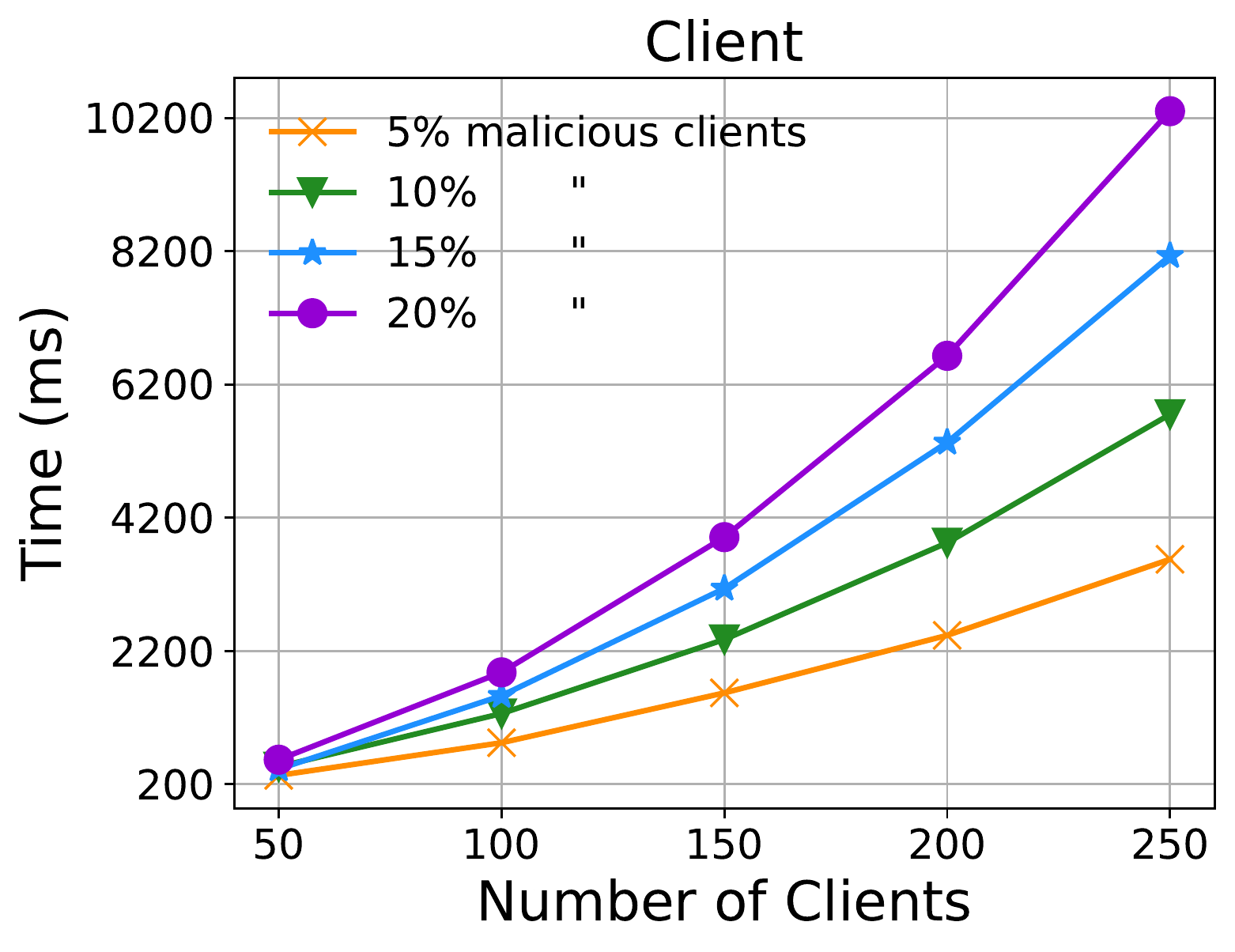}  \vspace{-0.2cm} 
        \label{fig:comp:client:n}\end{subfigure}
         \begin{subfigure}{0.24\linewidth}
        \centering
         \includegraphics[width=0.9\linewidth]{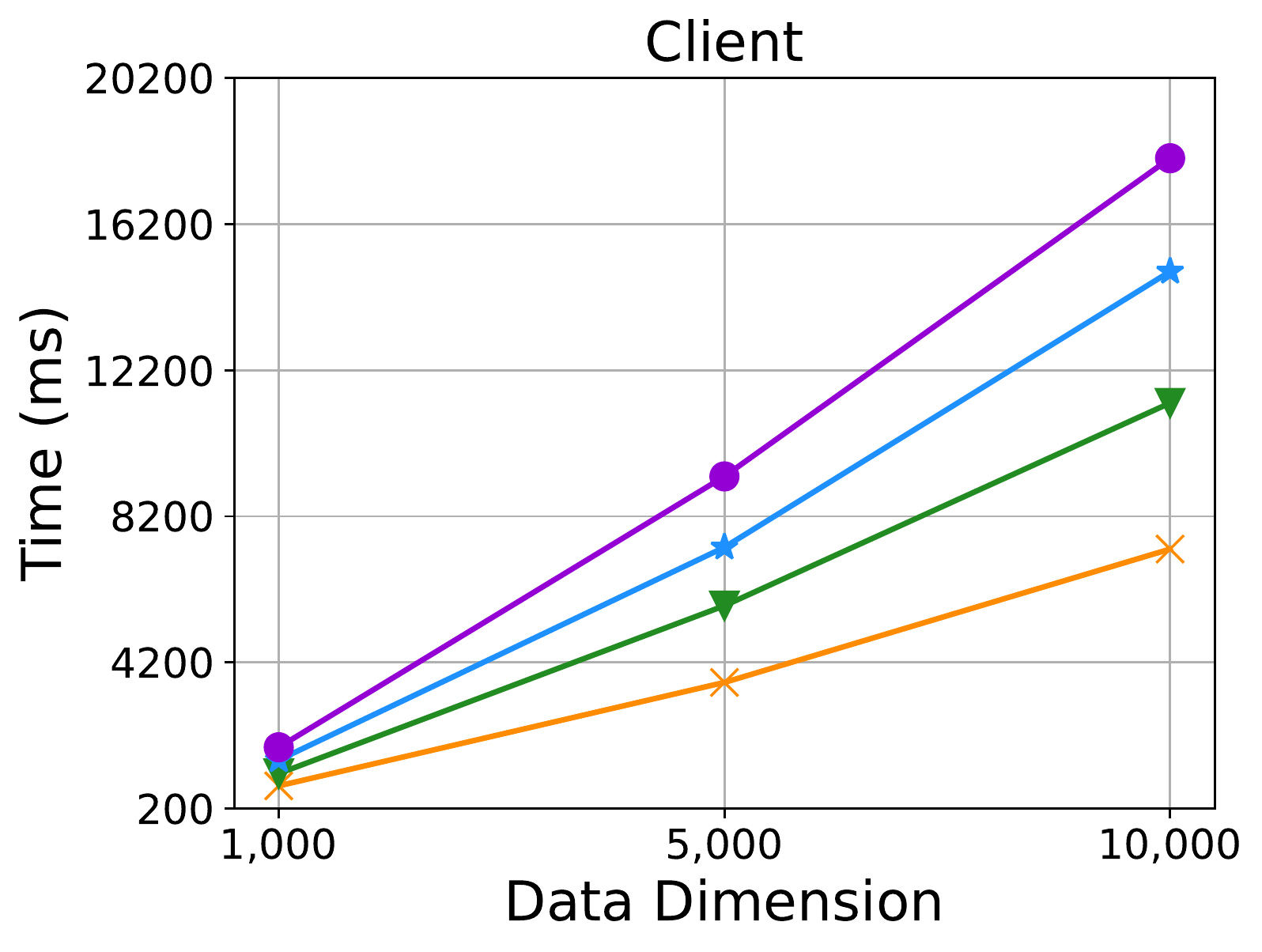}
   \vspace{-0.2cm}    
        \label{fig:comp:client:d}
    \end{subfigure}
                 \begin{subfigure}{0.24\linewidth}
    \centering \includegraphics[width=0.9\linewidth]{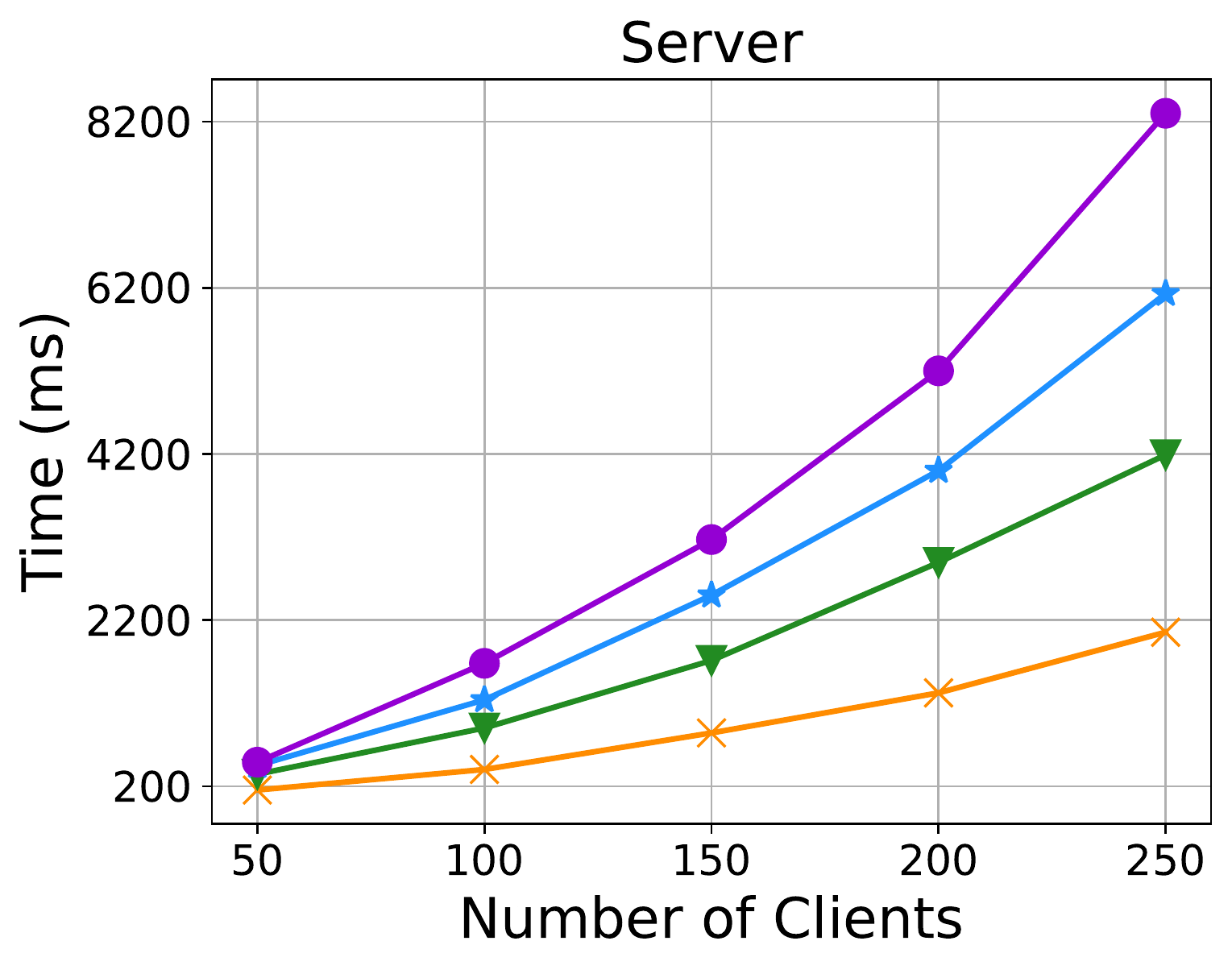}   \vspace{-0.2cm}
        \label{fig:comp:server:n}\end{subfigure}
         \begin{subfigure}{0.24\linewidth}
    \centering \includegraphics[width=0.9\linewidth]{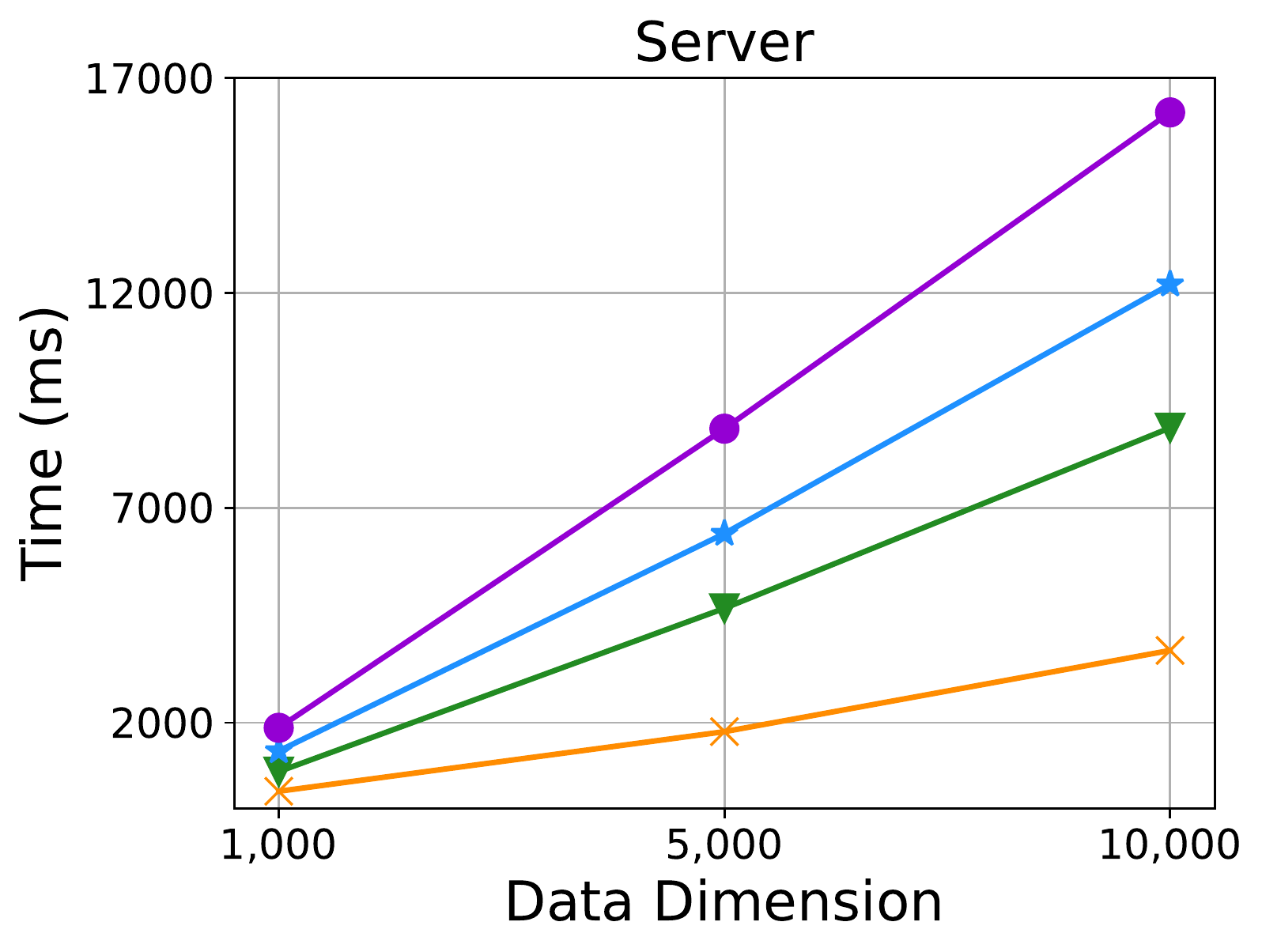}   \vspace{-0.2cm}
        \label{fig:comp:server:d}\end{subfigure}
\vspace{-0.2cm}
   \caption{Computation cost analysis of \name. The \textbf{left} two plots show the runtime of a single client client in milliseconds as a function of: (left)
 the number of clients $n$ and  (right) dimensionality of the updates $d$. The \textbf{right} two plots show the runtime of the server as a function of the same variables. The results demonstrate that performance decays quadratically in $n$, and linearly in $d$. }
   \label{fig:comp}\vspace{-0.3cm}
\end{figure*}
\begin{figure*}
  \begin{subfigure}{0.24\linewidth}
    \centering \includegraphics[width=0.9\linewidth]{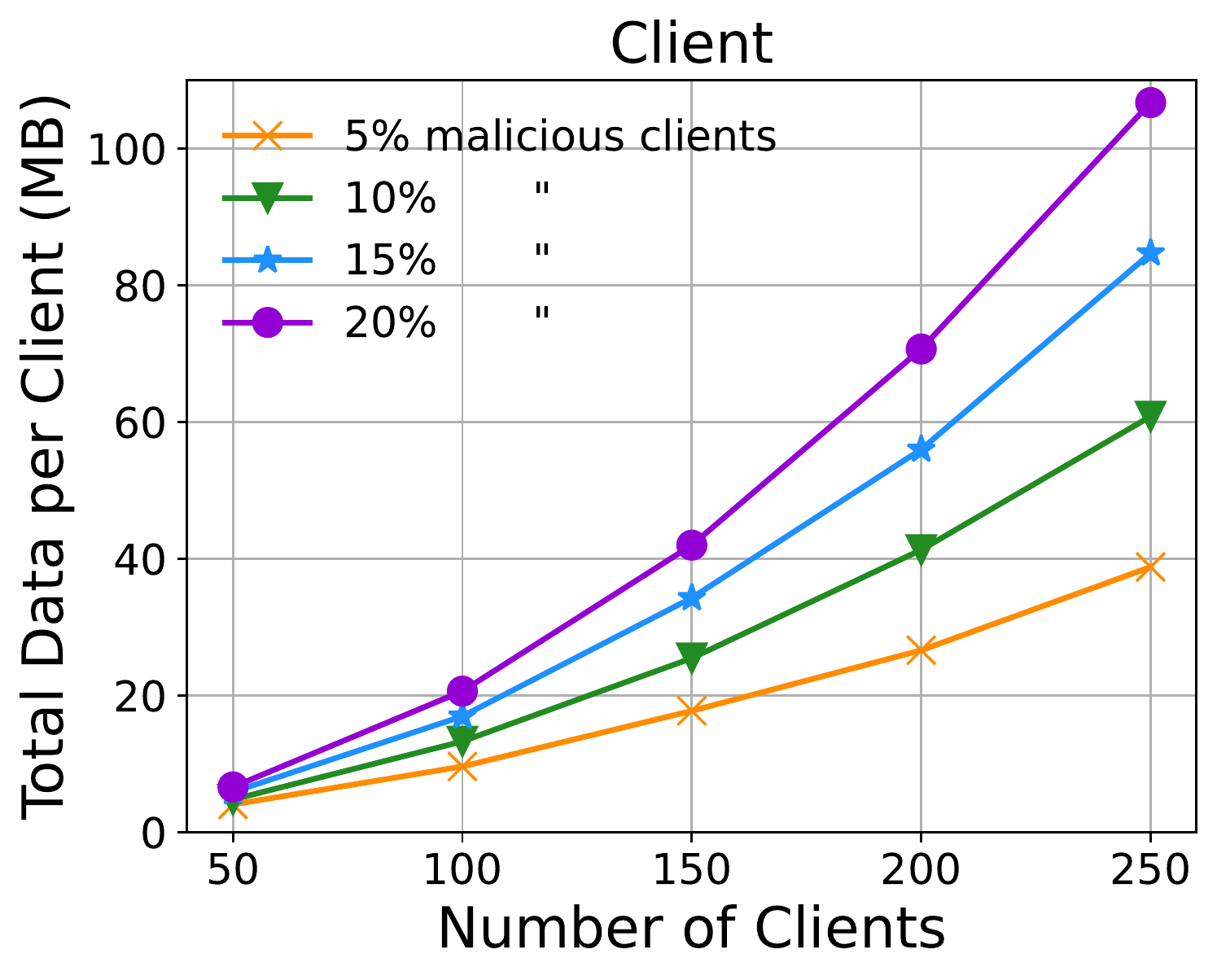}   
        \label{fig:BW:client:d}\end{subfigure}
    \begin{subfigure}{0.24\linewidth}
        \centering
         \includegraphics[width=0.9\linewidth]{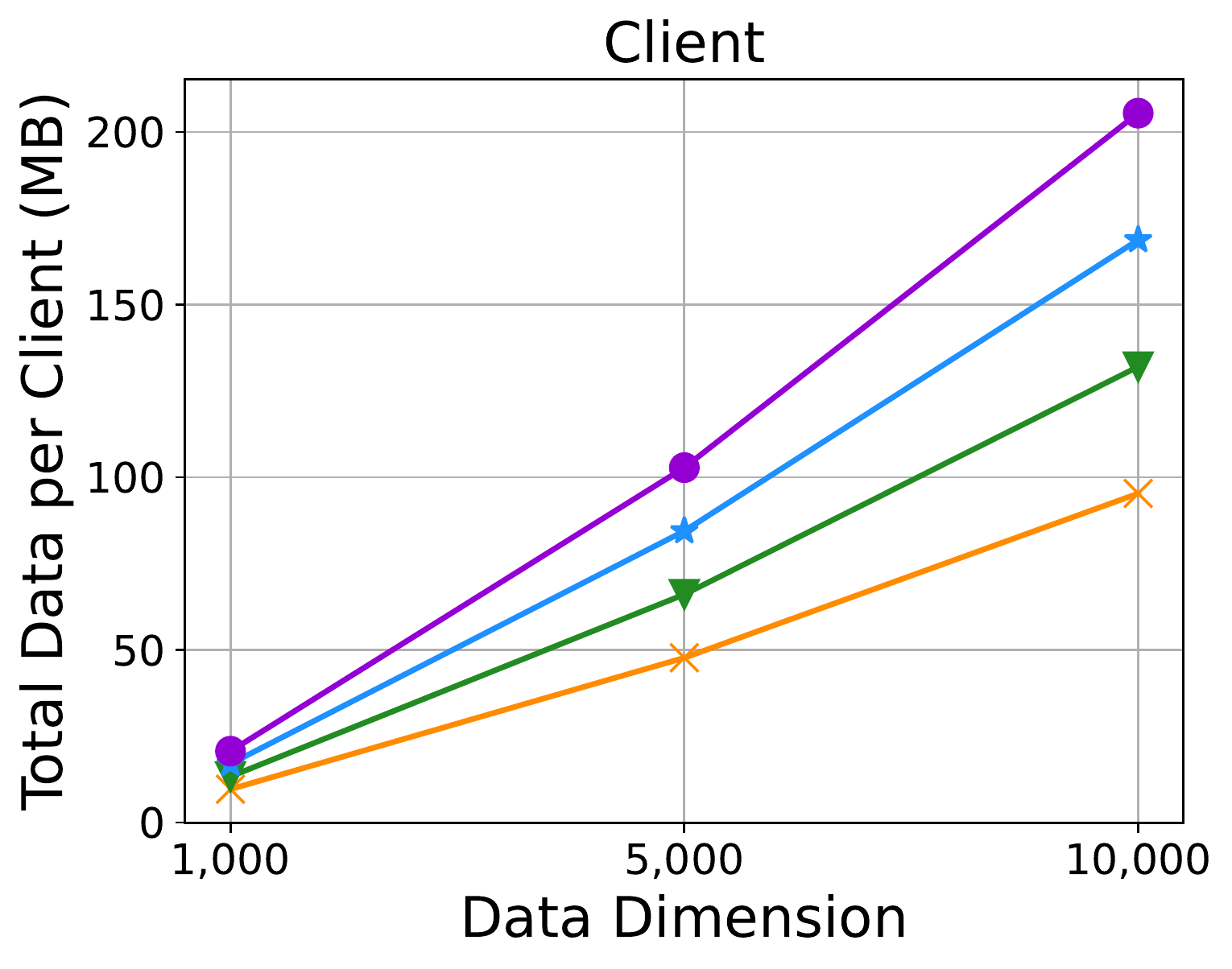}
        \label{fig:BW:client:n}
    \end{subfigure}
         \begin{subfigure}{0.24\linewidth}
    \centering \includegraphics[width=0.9\linewidth]{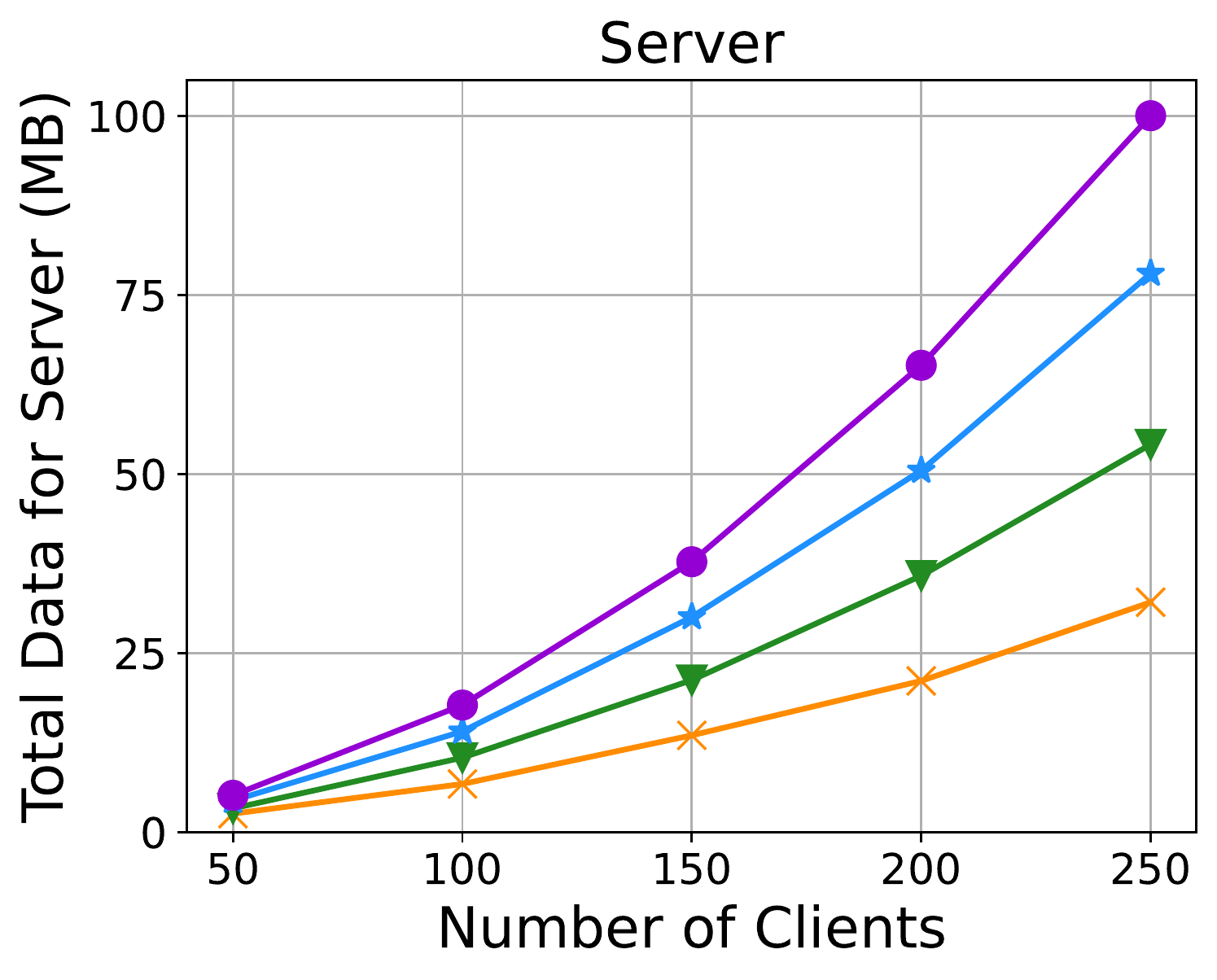}   
        \label{fig:BW:server:d}\end{subfigure}
         \begin{subfigure}{0.24\linewidth}
    \centering \includegraphics[width=0.9\linewidth]{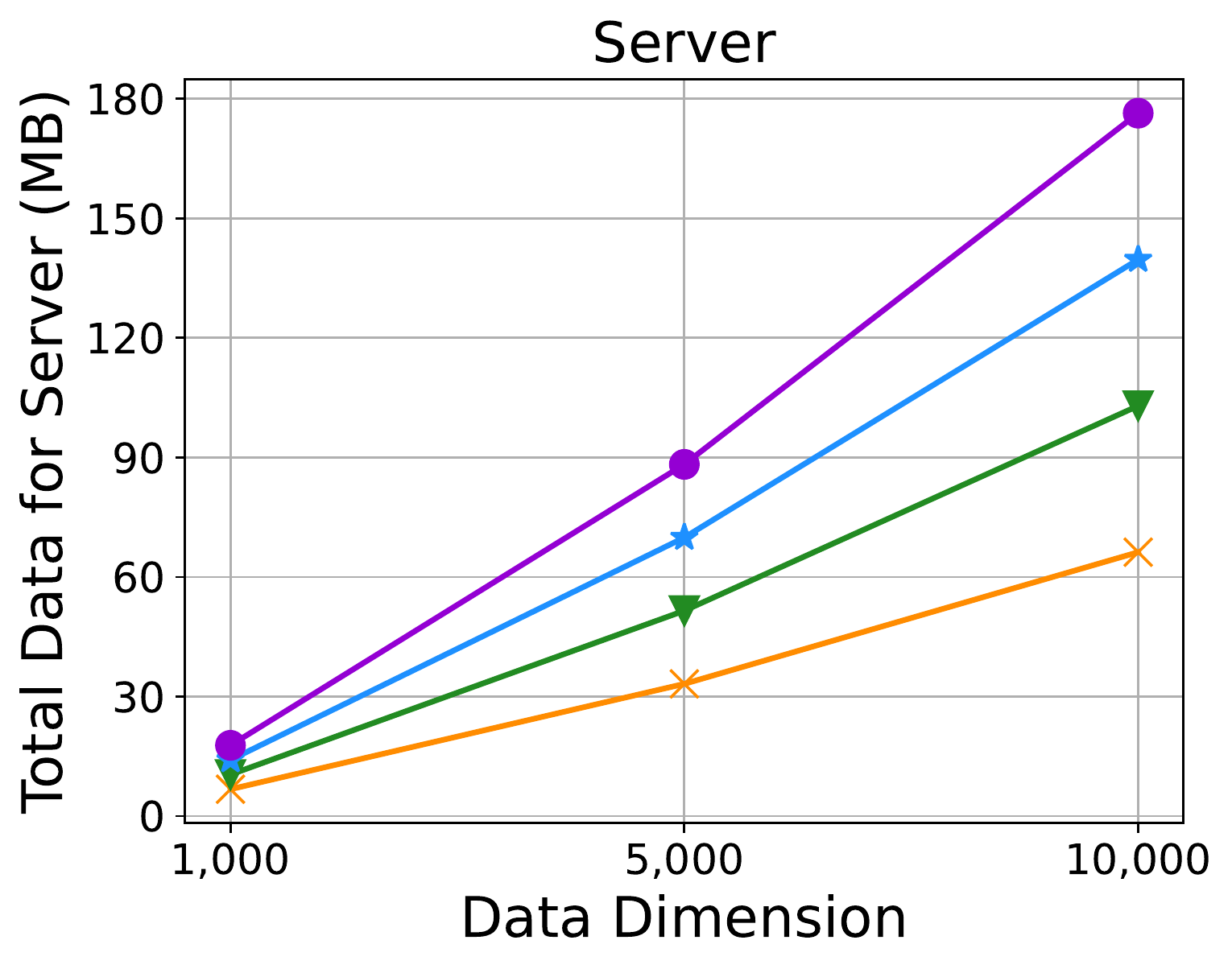}   
        \label{fig:BW:server:n}\end{subfigure}
   \vspace{-0.5cm}\caption{Communication cost analysis of \name. The \textbf{left} two plots show the amount of communication (in MB) for each client as a function of: (left)
 the number of clients $n$ and  (right) dimensionality of the updates $d$. The \textbf{right} two plots show the the amount of communication (in MB) for the server as a function of the same variables. The results show communication increases quadratically in $n$, and linearly in $d$. }
   \label{fig:BW}\vspace{-0.3cm}
\end{figure*}
\section{Experimental Evaluation}\label{sec:evaluation}


\subsection{Performance Evaluation.}\label{sec:evaluation:perf}
In this section, we analyze the performance of \name.\vspace{-0.2cm}\\\\
\textbf{Configuration.} 
We run experiments on two Amazon EC2 \textsf{c5.9large} instances with Intel Xeon Platinum 8000 processors. 
To emulate server-client communication, we use two instances in the US East (Ohio) and US West (Oregon) regions, with a round trip time of 21 ms. 
We implemented \name~in Python and  C++ using NTL library~\cite{NTL}. 
We use AES-GCM for encryption, a $56$-bit prime field $\Fl$ and probabilistic quantization~\cite{quant}. For key agreement, we use elliptic curve Diffie-Hellman~\cite{DH76} over the NIST P-256 curve. Unless otherwise specified, the default settings are $d\!=\!1K$, $n\!=\!100$, $m\!=\!10\%$ and $|\Vl(\cdot)|\approx4d$. We report the mean of 10 runs for each experiment. The rejection probability $(negl(\kappa))$ is dominated by  $\nicefrac{2\Ml-2}{|\Fl|}$
 (soundness error of SNIP, Sec. \ref{sec:system:block}).  $\Ml < 4d < 40K$ in our evaluation so the failure probability is of the order of $O(10^{-12})$.
\\\\\textbf{Computation Costs.}
Fig. \ref{fig:comp} presents \name's runtime.  We vary the number of malicious clients between $5\%$-$20\%$ of the number of clients. 
We observe that per-client runtime of \name~is low: it is $1.3s$ if $m=10\%$, $d=1K$, and $n=100$. 
The runtime scales quadratically in $n$ because a client has $O(mnd)$ computation complexity (see Table \ref{tab:complexity}) and $m$ is a linear function of $n$. As expected, the runtime increases linearly with $d$. A client takes around $11s$ when $d=10K$, $n=100$, and $m=10\%$. 
The runtime for the server is also low: the server completes its computation in about $1s$ for $n=100$, $d=1K$, and $m=10\%$. 
The server's runtime also scales quadratically in $n$  due to the $O(mnd)$ computation complexity (Table \ref{tab:complexity}). 
The runtime increases linearly with $d$.

\noindent In Fig. \ref{fig:breakdown}, we break down the runtime per round. 
We observe that: Round 1 (announcing public information) incurs negligible cost for both clients and the server; and Round 3 (verify proof) is the costliest round for both clients and the server where the dominating cost is verifying the validity of the shares (Sec. \ref{sec:complexity}). Note that the server has no runtime cost for Round 2 since the proof generation only involves clients.

Table \ref{tab:E2E} presents our end-to-end performance which contains the runtimes of a client, the server and the communication latencies.  For instance, the end-to-end runtime for \scalebox{0.9}{$n=100$, $d=1K$} and \scalebox{0.9}{$m=10\%$} is $\sim2.4s$. We also present the impact of one of our key optimizations -- eliminating the verification of the secrets shares of the proof -- which cuts down the costliest step in \name~and improves the performance by $2.3\times$. Additionally, we compare \name's performance with 
BREA~\cite{so2020byzantine}, which is a Byzantine-robust secure aggregator. \name~differs from BREA in two key ways: $(1)$ \name~is a general framework for per-client update integrity checks whereas BREA implements the multi-Krum aggregation algorithm~\cite{blanchard2017machine} that considers the entire dataset to determine the malicious updates (computes all the pairwise distances between the clients and then, detects the outliers), and $(2)$ BREA has an additional privacy leakage as it reveals the values of all the pairwise distances between clients. Nevertheless, we choose BREA as our baseline because, to the best of our knowledge, this is the only prior work that: $(1)$ detects and removes malformed updates, and $(2)$ works in the malicious threat model with $(3)$ a single server (see Table \ref{tab:compare_work}, Sec. \ref{sec:related_work}). We observe that \name~outperforms BREA and that the improvement increases with $n$. For instance, for $n\!=\!250$, \name~is $18.5\times$ more performant than BREA. This is due to BREA's complexity of \scalebox{0.9}{$O(n^3\log^2n\log\log n +mnd)$}, where the \scalebox{0.9}{$O(n^3)$} factor is due to each client partaking in the computation of the \scalebox{0.9}{$O(n^2)$} pairwise distances. 
\begin{table}\small
\centering
\resizebox{0.78\columnwidth}{!}{\begin{tabular}{lccc}
\toprule &  & \multicolumn{2}{c}{\bf Improvement over}
\\\cmidrule(lr){3-4} \textbf{\# Clients} ($n$) & \textbf{Time} (ms)&  Unoptimized \name & BREA~\cite{so2020byzantine}\\
 \midrule
50 & 1,072& 2.3$\times$& 2.5$\times$\\
100 &  2,367 & 2.3$\times$& 5.2$\times$  \\
150 & 4,326 & 2.3$\times$& 7.8$\times$ \\
200 &  6,996 & 2.3$\times$& 12.8$\times$ \\
250  & 10,389 & 2.3$\times$& 18.5$\times$ 
\\
   \bottomrule
\end{tabular}}
\caption{End-to-end time for a single iteration of \name~with $d\!=\!1000$ and $m\!=\!10\%$ malicious clients, as a function of the number of clients, $n$. We also compare it with a variant of \name~without optimizations, and with BREA~\cite{so2020byzantine}. }\label{tab:E2E}
\vspace{-0.9cm}\end{table}
\begin{figure}\centering \includegraphics[width=0.5\linewidth]{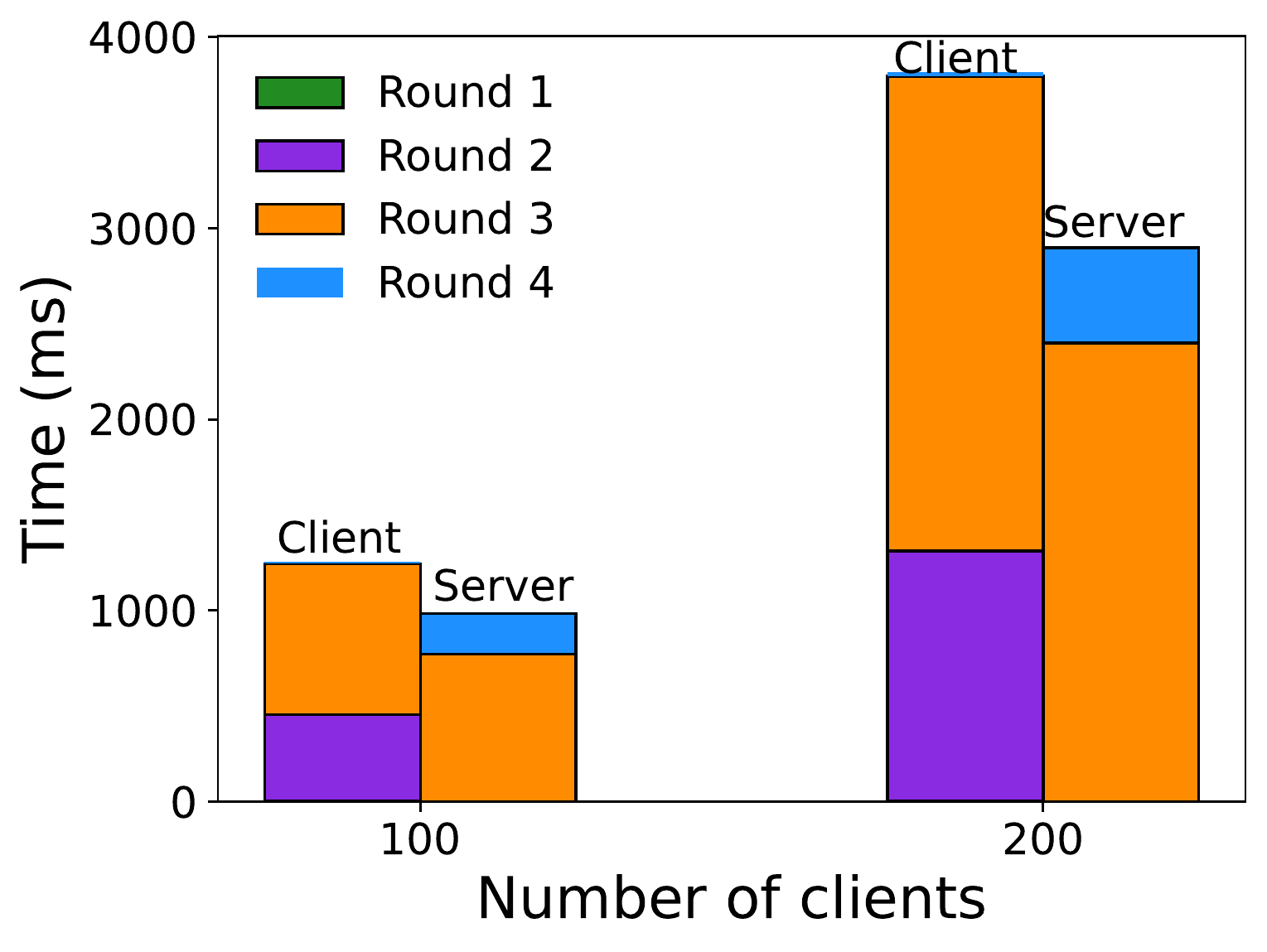}
\vspace{-0.4cm}\caption{Computation cost per round in \name.}\label{fig:breakdown} \vspace{-0.6cm}\end{figure}
\begin{figure*}
    \begin{subfigure}{0.24\linewidth}
        \centering
         \includegraphics[width=0.9\linewidth]{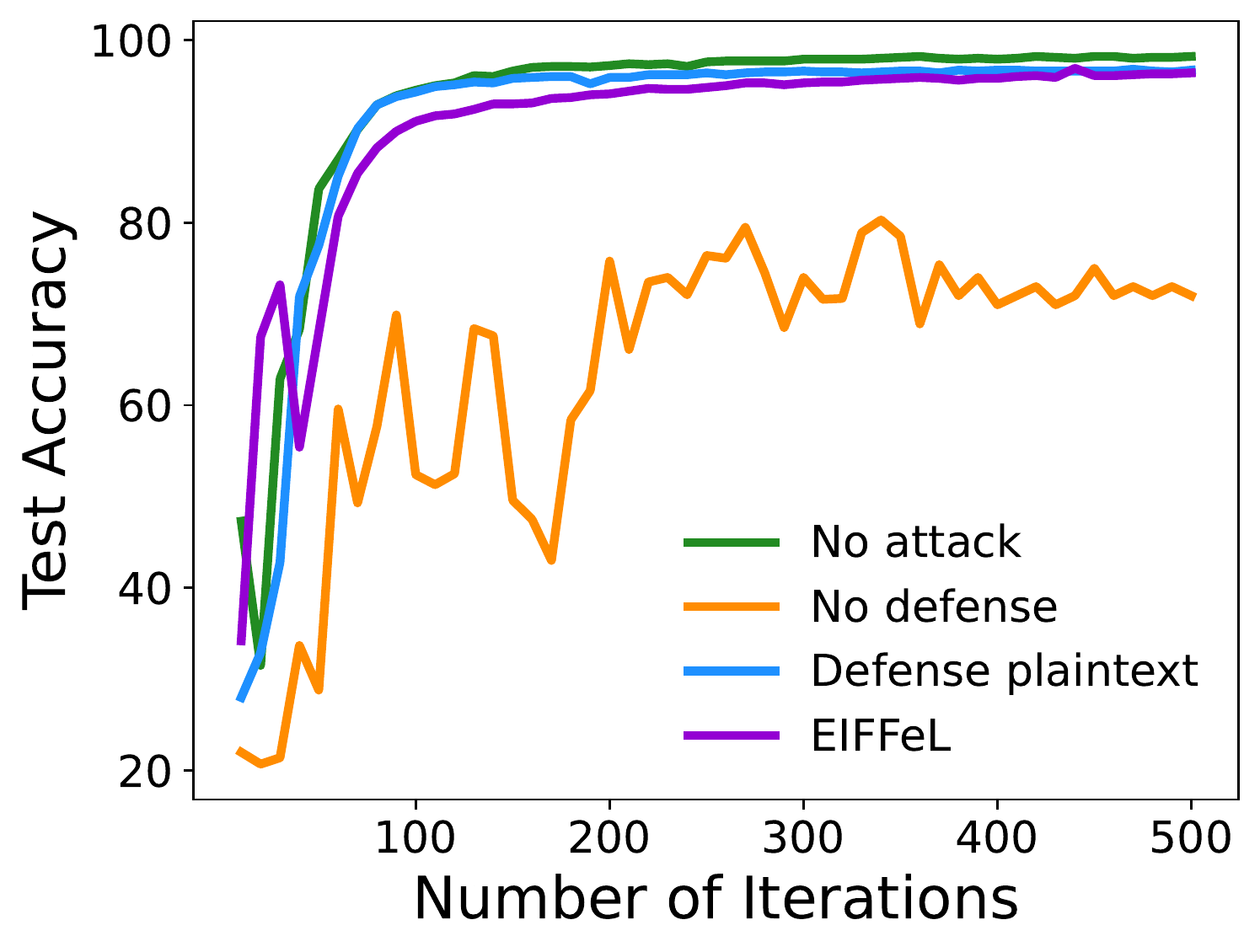}\vspace{-0.1cm}
       \caption{MNIST: Sign flip attack with norm ball validation predicate (defense).}
        \label{fig:MNIST:Ball}
    \end{subfigure}
    \begin{subfigure}{0.24\linewidth}
    \centering \includegraphics[width=0.9\linewidth]{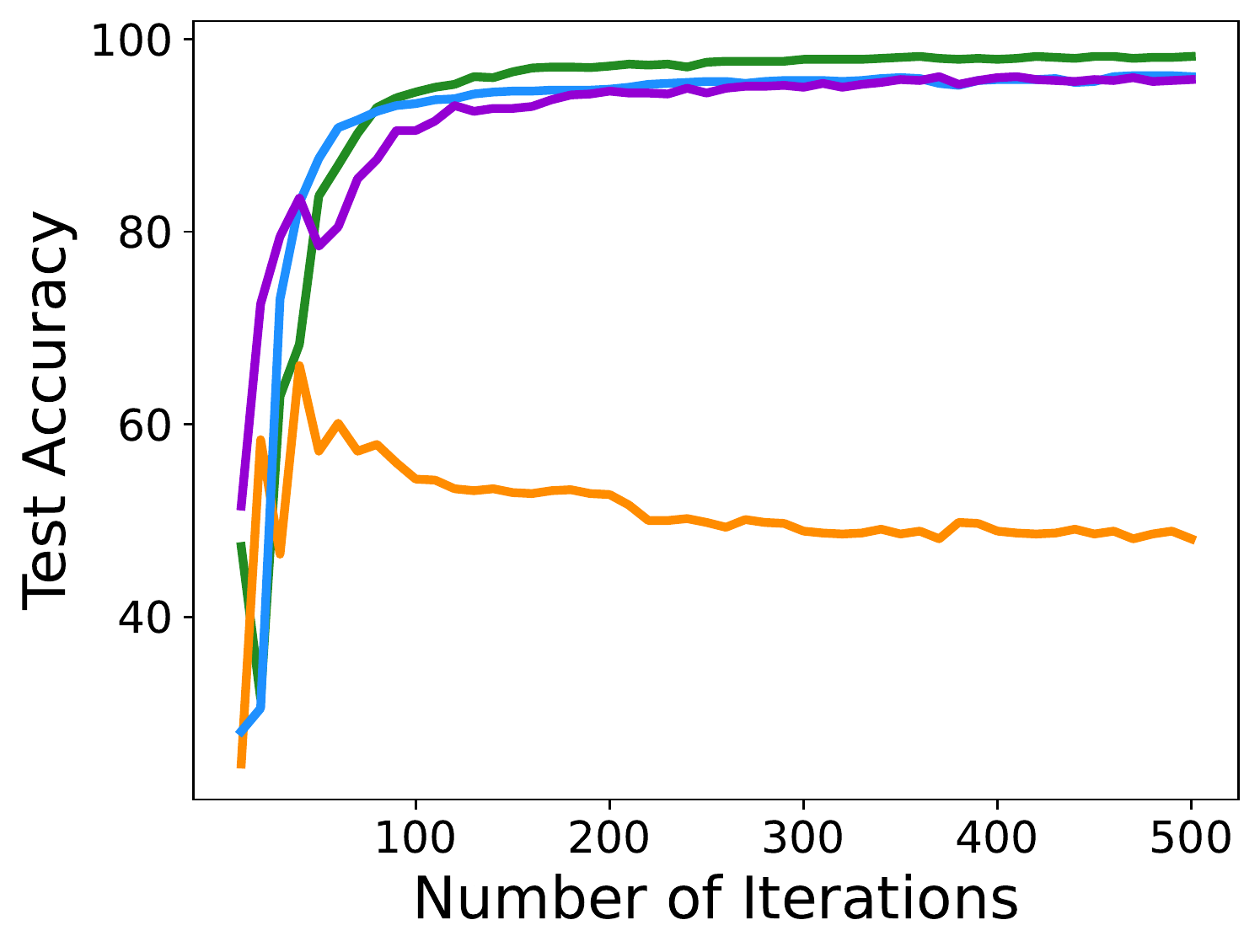}   \vspace{-0.1cm}
 \caption{MNIST: Scaling attack and cosine similarity validation predicate.}
        \label{fig:MNIST:Cosine}\end{subfigure}
         \begin{subfigure}{0.24\linewidth}
    \centering \includegraphics[width=0.9\linewidth]{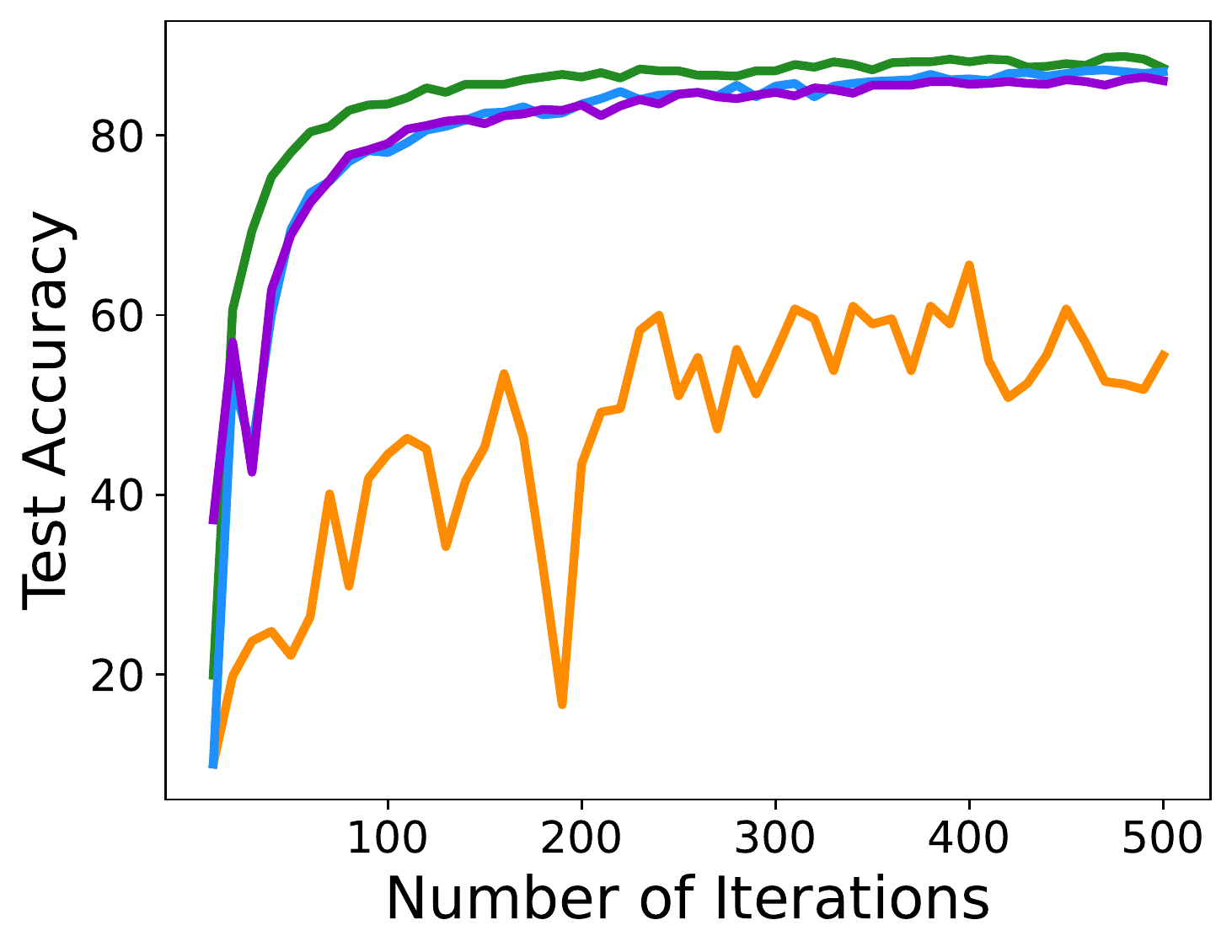}   \vspace{-0.1cm}
 \caption{FMNIST: Additive noise attack with Zeno++ validation predicate.}
        \label{fig:FMNIST:Zeno}\end{subfigure}
         \begin{subfigure}{0.24\linewidth}
    \centering \includegraphics[width=0.9\linewidth]{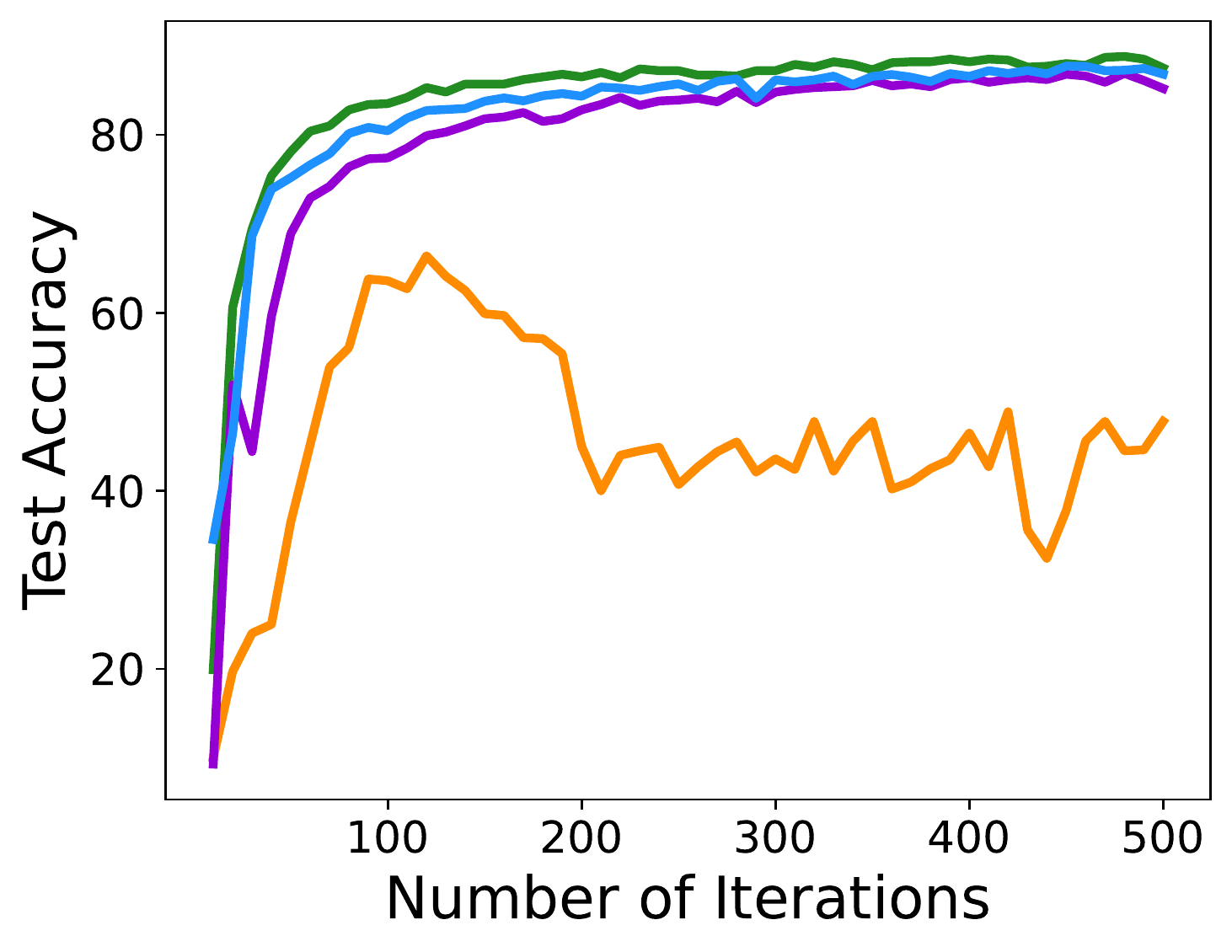}  \vspace{-0.1cm}
 \caption{FMNIST: Sign flip attack with norm ball validation predicate.}
        \label{fig:FMNIST:Ball}\end{subfigure}\\    \begin{subfigure}{0.24\linewidth}
        \centering
         \includegraphics[width=0.9\linewidth]{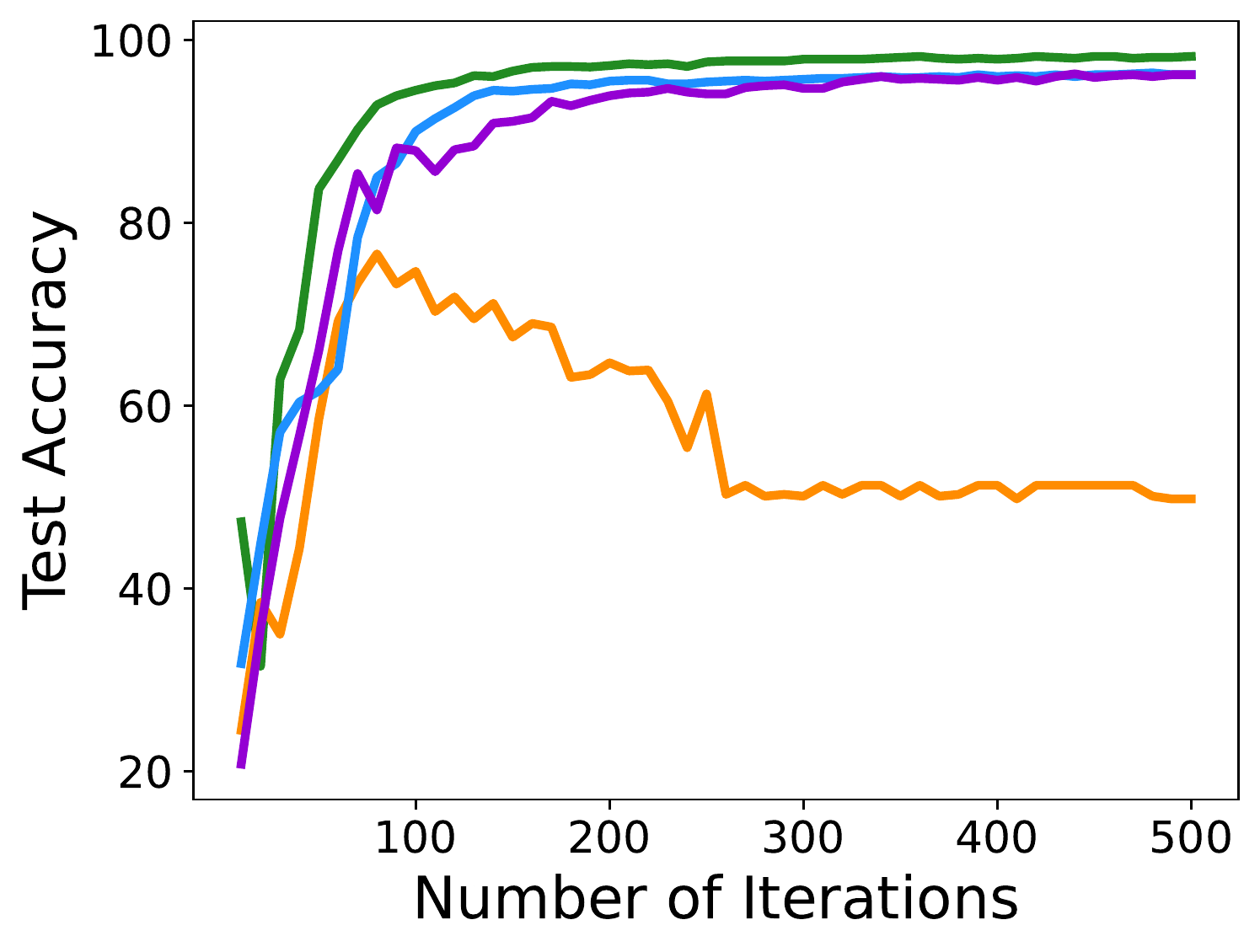}\vspace{-0.1cm}
       \caption{MNIST: Min-Max attack with Zeno++ validation predicate.}
        \label{fig:MNIST:Zeno}
    \end{subfigure}
         \begin{subfigure}{0.24\linewidth}
    \centering \includegraphics[width=0.9\linewidth]{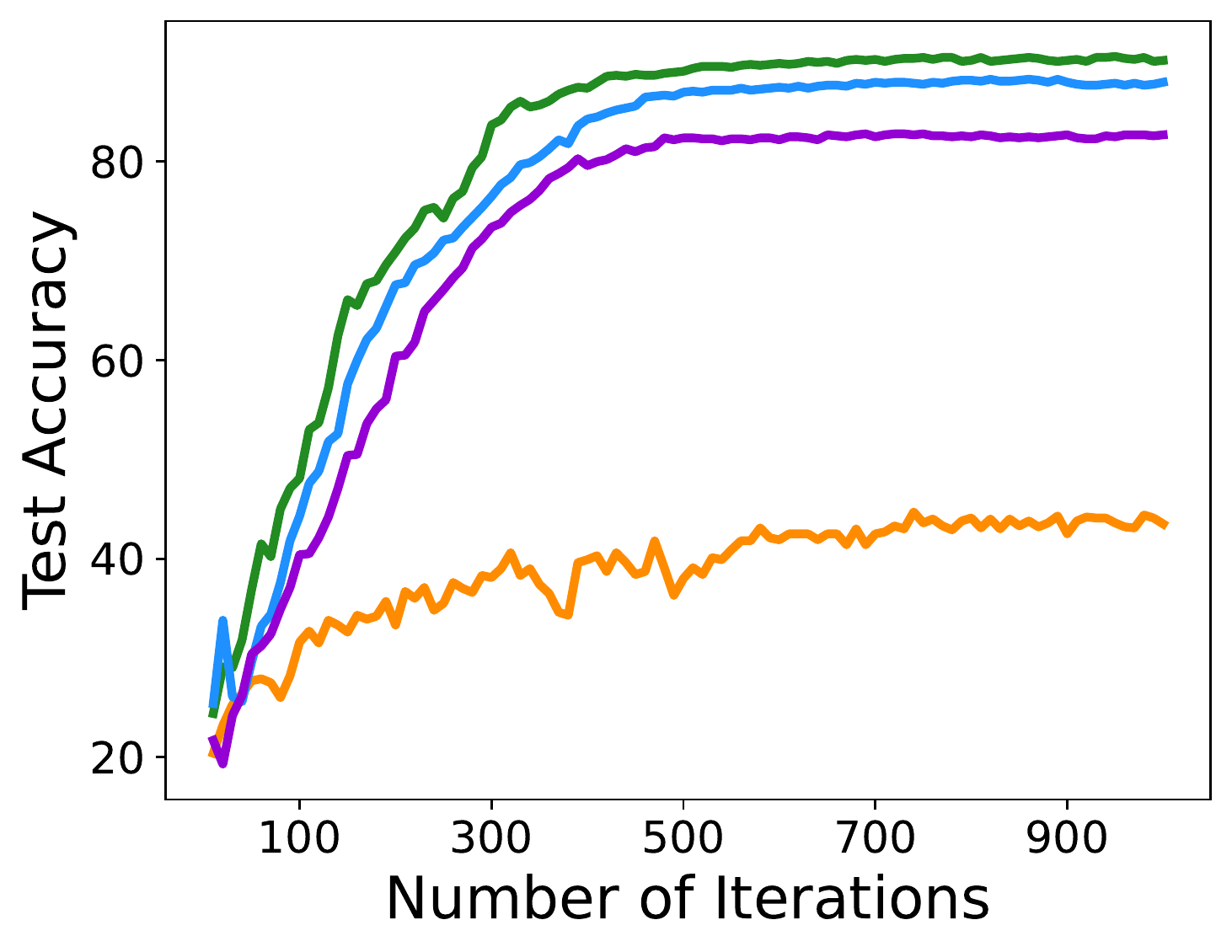}   \vspace{-0.1cm}
 \caption{CIFAR-10: Min-Sum attack with cosine similarity validation predicate.}
        \label{fig:CIFAR:Cosine}\end{subfigure}
          \begin{subfigure}{0.24\linewidth}
    \centering \includegraphics[width=0.9\linewidth]{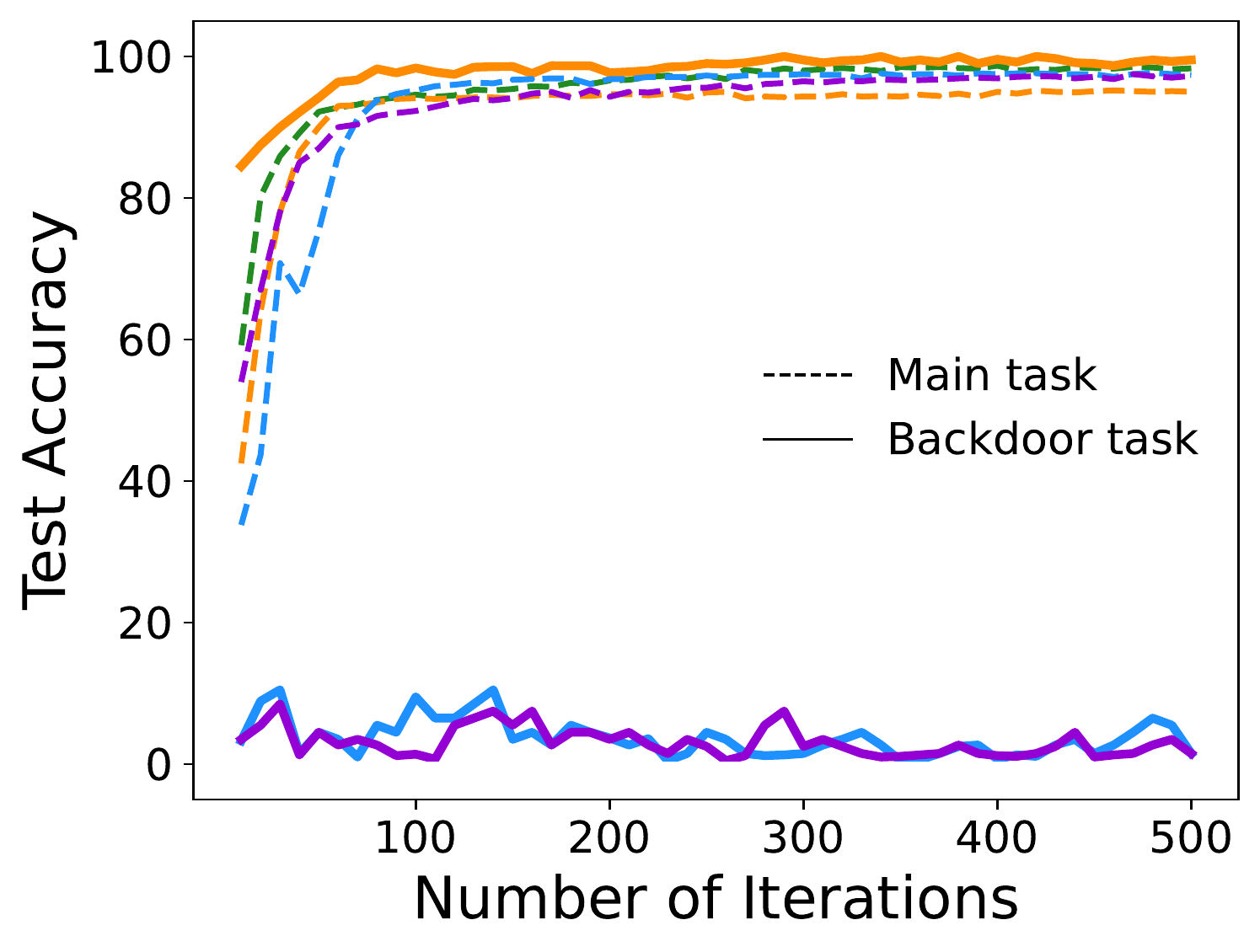}   \vspace{-0.1cm}
 \caption{EMNIST: Backdoor Attack-1 with norm bound validation predicate.}
        \label{fig:EMNIST:Norm}\end{subfigure}
         \begin{subfigure}{0.24\linewidth}
    \centering \includegraphics[width=0.9\linewidth]{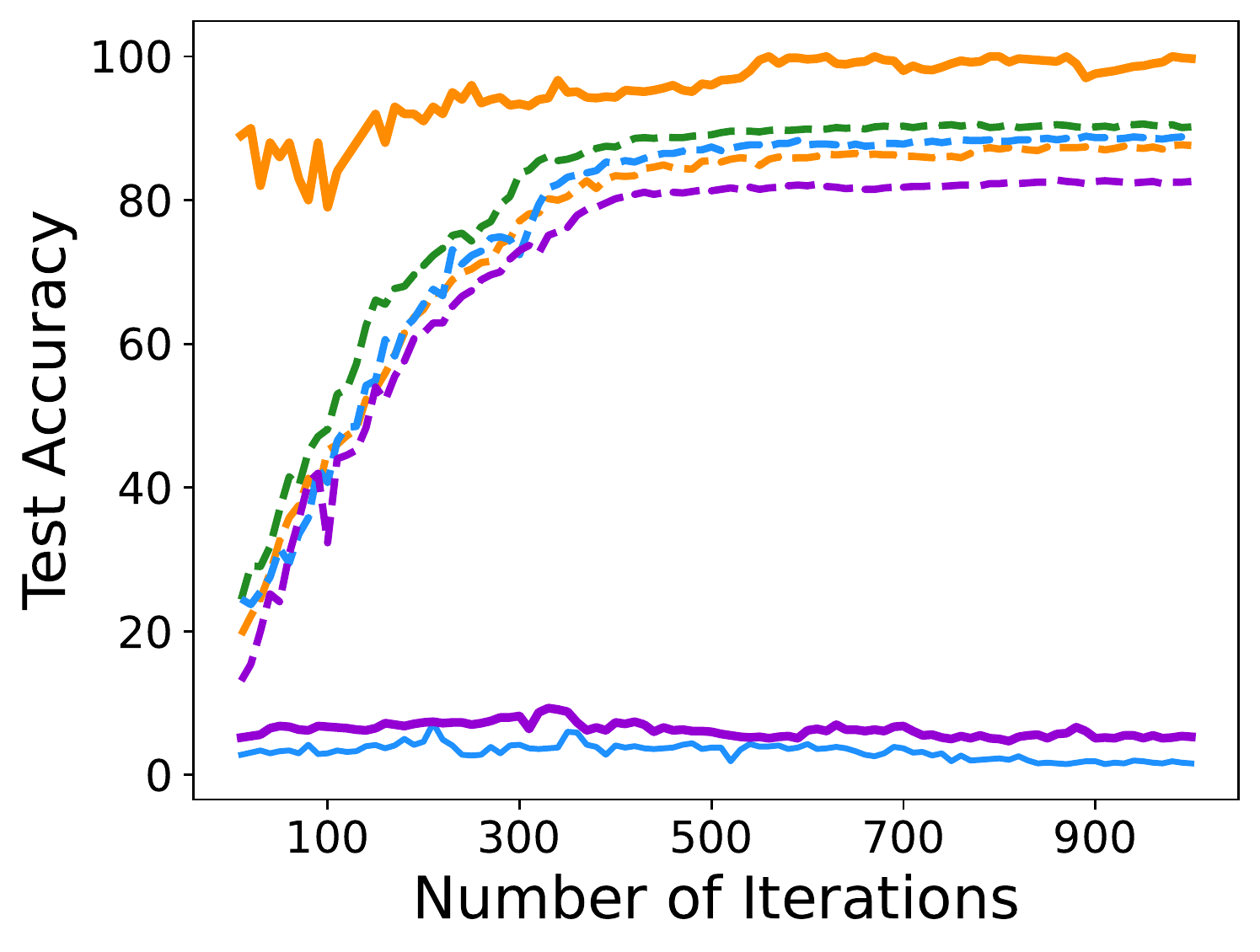}  \vspace{-0.1cm}
 \caption{CIFAR-10: Backdoor Attack-2 with norm bound validation predicate.}
        \label{fig:CIFAR:Norm2}\end{subfigure}
   \caption{Accuracy analysis of \name. Test accuracy is shown as a function of the FL iteration for different datasets and attacks.}
   \label{fig:model}\vspace{-0.3cm}
\end{figure*}
\vspace{-0.2cm}\\\\
\textbf{Communication Cost.} 
Fig. \ref{fig:BW} depicts the total data transferred by a client and the server. The communication complexity is \scalebox{0.9}{$O(mnd)$} for a single client and for the server. Hence, the total communication increases quadratically with $n$ and linearly with $d$, respectively. We observe that \name~has acceptable communication cost. For instance, the total data consumed by a client is $132$MB for the configuration \scalebox{0.9}{$n=100, d=10K, m=10\%$}. 
This is equivalent to streaming a full-HD video for $26s$~\cite{data}. Since most clients partake in FL training iterations infrequently, this communication is acceptable. \vspace{-0.2cm}\\\\\textbf{Note.} Recall, we assume the size of the validation predicate to be \scalebox{0.9}{$|\Vl| = O(d)$} since \scalebox{0.9}{$\Vl(\cdot)$} defines a function on the input which is \scalebox{0.9}{$d$}-dimensional. This assumption is validated by the state-of-the-art predicates tested in Sec. \ref{sec:eval:models}. The above experiments use \scalebox{0.9}{$|\Vl|\approx 4d$}. Hence, the overall complexity (App. \ref{sec:complexity}) is dominated by the \scalebox{0.9}{$O(mnd)$} term and does not depend on the validation predicate. 
\vspace{-0.1cm}\subsection{Integrity Guarantee Evaluation}\label{sec:eval:models}
In this section, we evaluate \name's efficacy in ensuring update integrity on real-world datasets.\\
\textbf{Datasets.} We evaluate \name~on three image datasets:
\squishlist\vspace{-0.2cm}
\item \textit{MNIST}~\cite{MNIST} is a digit classification dataset of $60K$ training images and $10K$ test images with ten classes. 
\item \textit{EMNIST}~\cite{emnist} is a writer-annotated handwritten digit classification dataset with $\sim 340K$ training and $\sim 40K$ testing images.
\item \textit{FMNIST}~\cite{FMNIST} is 
identical to MNIST in terms number of classes, and number of training and test images. \item \textit{CIFAR-10}~\cite{CIFAR10} contains RGB images with ten object classes. It has $50K$ training and $10K$ test images. 

\squishend\vspace{-0.2cm}
 \textbf{Models.} We test \name~with three classification models:
\squishlist \vspace{-0.2cm}
\item \textit{LeNet-5} \cite{Lecun1998} has five layers and $60K$ parameters, and  is used to experiment on MNIST and EMNIST.
\item For FMNIST, we use a five-layer convolutional network with $70K$ parameters and a similar architecture as LeNet-5.
\item We use \textit{ResNet-20} \cite{He2016DeepRL} with $20$ layers and $273K$ parameters for CIFAR-10.  
\squishend \vspace{-0.2cm}
\textbf{Validation Predicates.} 
To demonstrate the flexibility of \name, we evaluate four validations predicates, which represent the current \textit{state-of-the-art} defenses against data poisoning, as follows:
\squishlist\vspace{-0.3cm}
\item \textit{Norm Bound}~\cite{Sun2019CanYR}. This method checks whether the \scalebox{0.9}{$\ell_2$}-norm of a client update is bounded: \scalebox{0.9}{$
\Vl(u) = \mathbb{I}[ \lVert u\rVert_2 < \rho ]
$}
where \scalebox{0.9}{$\mathbb{I}[\cdot]$} is the indicator function and the threshold \scalebox{0.9}{$\rho$} is computed from the public dataset $\DP$.
\item \textit{Norm Ball}~\cite{steinhardt2017certified}. This method checks whether a client update is within a spherical radius from $v$ which is the gradient update computed from the clean public dataset $\DP$: \scalebox{0.9}{$\Vl(u)=\mathbb{I}\big[\lVert u-v\Vert_2\leq \rho \big]$}
where radius $\rho$ is also computed from $\DP$.
\item \textit{Zeno++}~\cite{Xie20} compares the client update with a loss gradient $v$ that is computed on the public dataset $\DP$: \scalebox{0.9}{$
\Vl(u) = 
\mathbb{I}[\gamma\langle v,u\rangle - \rho||u||_2$} \scalebox{0.9}{$\geq -\gamma\epsilon]
$}
where $\gamma$, $\rho$ and $\epsilon$ are threshold parameters also computed from  $\DP$ and $u$ is $\ell_2$-normalized to have the same norm as $v$.
\item \textit{Cosine Similarity}~\cite{Cao2021FLTrustBF,bagdasaryan2018backdoor}. This method compares the cosine similarity between the client update $u$ and the global model update of the last iteration $u'$: \scalebox{0.9}{$\Vl(u)=\mathbb{I}\big[\frac{\langle u,u'\rangle}{\lVert u\rVert_2 \lVert u'\rVert_2}< \rho \big]$} where $\rho$ is computed from $\DP$ and $u$ is $\ell_2$-normalized to match norm of $u'$.
\vspace{-0.2cm}\squishend

\textbf{Poisoning Attacks.}
To test the efficacy of \name's implementations of the four validation predicates introduced above, we test it against seven poisoning attacks:
\vspace{-0.2cm} 
\squishlist
\item \textit{Sign Flip Attack}~\cite{Damaskinos2018AsynchronousBM}. In this attack, the malicious clients flip the sign of their local update: \scalebox{0.9}{$\hat{u}=-c\cdot u, c\in \mathbb{R}_+$}.
\item \textit{Scaling Attack}~\cite{bhagoji2019analyzing} scales a local update to increase its influence on the global update: \scalebox{0.9}{$\hat{u}=c\cdot u, c \in \mathbb{R}_+$}.
\item \textit{Additive Noise Attack}~\cite{Li2019RSA} adds Gaussian noise to the local update: \scalebox{0.9}{$\hat{u}=u+\eta, \eta \sim \mathcal{N}(\sigma,\mu)$}.
\item \textit{Min-Max Attack}~\cite{Shejwalkar21} sets all the malicious updates to be: \scalebox{0.9}{$\mathrm{argmax}_{\gamma}$} \scalebox{0.9}{$\max_{i \in [n]}||\hat{u}-u_i||_2\leq \max_{i,j\in [n]}||u_i-u_j||_2; \hat{u}=\frac{1}{n}\sum_{i=1}^n u_i + \gamma\cdot u^p$}, \\where $u^p$ is a dataset optimized perturbation vector. Here, the adversary is assumed to have access to the benign (well-formed) updates of \textit{all} clients.  This attack finds the malicious gradient
whose maximum distance from a benign gradient is less than
the maximum distance between any two benign gradient. 
\item \textit{Min-Sum Attack}~\cite{Shejwalkar21} sets all the malicious updates to be: \scalebox{0.9}{$\mathrm{argmax}_{\gamma}$} \scalebox{0.9}{$\sum_{i \in [n]}||\hat{u}-u_i||_2\leq \max_{i\in [n]}\sum_{j\in[n]}||u_i-u_j||_2; \hat{u}=\frac{1}{n}\sum_{i=1}^n u_i + \gamma\cdot u^p$}, where $u^p$ is a dataset optimized perturbation vector. Here, the adversary is assumed to have access to the benign updates of \textit{all} clients.  
This attack finds the malicious gradient such that the
sum of its distances from all the other gradients is less than the sum of distances of any benign gradient from other benign
gradients. 
\item \textit{Backdoor Attack-1}~\cite{Sun2019CanYR} classifies the digit seven as the digit one for EMNIST.  
\item \textit{Backdoor Attack-2}~\cite{bagdasaryan2018backdoor} classifies images of green cars as birds for CIFAR-10.
\squishend
\textbf{Configuration.} We use the same  configuration as before. We implement the image-classification models in PyTorch. We randomly select 10K samples from each training set as the public dataset $\DP$ and train on the remaining samples. EMNIST is collected from 3383 clients with $\sim100$ images per client.
For all other datasets, the training set is divided into 5K subsets to create the local dataset for each client. For each training iteration, we sample the required number of data subsets out of these 5K subsets. 

\noindent\textbf{Results.}
Fig. \ref{fig:model} shows the accuracy of different image-classification models in \name. We set $n=100$ and $m=10\%$, and use random projection to project the updates to a dimension $d$ of 1K (MNIST, EMNIST), 5K (FMNIST), or 10K (CIFAR-10). For the two backdoor attacks, we consider $m=5\%$. Our experiment assesses how the random projection affects the efficacy of the integrity checks.  We observe that for MNIST (Figs. \ref{fig:MNIST:Ball}, \ref{fig:MNIST:Cosine} and \ref{fig:MNIST:Zeno}), EMNIST (Fig. \ref{fig:EMNIST:Norm}) and FMNIST (Fig. \ref{fig:FMNIST:Zeno} and \ref{fig:FMNIST:Ball}), \name~achieves performance comparable to a baseline that applies the defense (validation predicate) on the plaintext. 
In most cases, the defenses retain their efficacy even after random projection. This is because they rely on computing inner products and norms of the update; these operations preserve their relative values after the projection with high probability~\cite{WH}.
We observe a drop in accuracy ($\sim7\%$) on CIFAR-10 (Figs. \ref{fig:CIFAR:Cosine} and \ref{fig:CIFAR:Norm2}) as updates for ResNet-20 with 273K parameters are projected to 10K.
The end-to-end per-iteration time $(m=10\%)$ for MNIST, EMNIST, FMNIST, and CIFAR-10 is $2.4s$ (Table \ref{tab:E2E}), $2.4s$, $10.7s$, and $20.5s$, respectively. 
The associated communication costs for the client are $13.3$MB,  $13.3$MB, $65.8$MB, and $132$MB (Fig. \ref{fig:BW}). Additional evaluation results are presented in Fig. \ref{fig:app:eval} (App. \ref{app:eval}).

\section{Discussion} \label{sec:discussion}
In this section, we discuss possible avenues for future research (additional discussion in App. \ref{app:discussion}).

\textbf{Handling Higher Fraction of Malicious Clients.} For \scalebox{0.9}{$\lfloor \frac{n-1}{3}\rfloor < m$} \scalebox{0.9}{$< \lfloor \frac{n-1}{2} \rfloor$} (honest majority), the current implementation of \name~can detect but not remove malformed inputs (Gao's decoding algorithm returns \scalebox{0.9}{$\bot$} if  \scalebox{0.9}{$m>\lfloor \frac{n-1}{3}\rfloor$}). Robust reconstruction in this case could be done via Guruswami-Sudan list decoder \cite{GS}. We do not do so in \name~because the reconstruction might fail  sometimes.

\noindent \textbf{Handling Client Dropouts.}  In practice, clients might have only sporadic access to connectivity and so, the protocol must be robust to clients dropping out. \name~can already accommodate malicious client dropping out -- it is straightforward to extend this for the case of honest clients as well. 

\noindent\textbf{Identifying All Malicious Clients.} Currently, \name~identifies a partial list of malicious clients. To detect all malicious clients, one can use: $(1)$ \textsf{PVSS} to identify all clients who have submitted at least one invalid share, and $(2)$ decoding algorithms such as Berlekamp-Welch~\cite{Blahut1983} that can detect the location of the errors from the reconstruction.  We do not use them in \name~as they have higher computation cost. 

\noindent\textbf{Reducing Client's Computation.} Currently, verifying the validity of the secret shares is the dominant cost for clients. This task can be offloaded to the server $\Ser$ by using a publicly verifiable secret sharing scheme (\textsf{PVSS}) \cite{Schoenmakers99,Stadler96,Tang} where the validity of a secret share can be verified by any party. 
 However, typically \textsf{PVSS} employs public key cryptography (which is costlier than symmetric cryptography) which might increase the end-to-end running time.  
 
\noindent \textbf{Additional Defense Strategies.} \name~supports any defense strategy that can be expressed as a per-update anomaly detection mechanism (captured via the public validation predicate $\Vl(\cdot)$). A recent line of work~\cite{Jia2021,Levine22,panda22,Rosenfeld20,Wang20,Xie21} proposes a complimentary style of defense which involves inspecting the final aggregate and the resulting model. For instance, Sparsefed~\cite{panda22} is a state-of-the-art backdoor defense where the server selects the top $k$ dimensions of the final aggregate and updates the model only along those dimensions (others are set to zero). CRFL~\cite{Xie21} provides certified robustness against backdoor attacks by clipping and perturbing the final aggregate and performing parameter smoothing on the global model during testing.  Jia et al.~\cite{Jia2021} propose an ensemble learning mechanism where the server learns multiple global models on randomly selected subset of clients and takes majority vote among the global models for test-time prediction. Such defenses can be immediately integrated with \name~since the server has access to the final aggregate and the updated global model in the clear.

\noindent \textbf{Towards poly-logarithmic complexity.}  Currently,  dominant term in the complexity is $O(mnd)$ which results in a $O(n^2)$ dependence on $n$ (since we consider $m$ is a fraction of $n$). This can be reduced to $O(n\log^2 nd)$ by using the techniques from \cite{Bell2020}. A detailed discussion is presented in App. \ref{app:discussion}. 


\section{Related Work}
\label{sec:related_work}

\setlength{\textfloatsep}{2pt}
\begin{table}[ht]
\centering
\caption{Comparison of \name~with Related Work}\vspace{-0.3cm}
\resizebox{\columnwidth}{!}{
\begin{tabular}{lccccc}
\toprule
    \multirow{2}{*}{Work} & Malicious & Single & Removes  & Arbritrary \\
   & Threat Model & Server & Malformed Inputs & Integrity Checks\\
\midrule
He et.al~\cite{he2020secure} & \textcolor{red}{$\times$} &  \textcolor{red}{$\times$} & \textcolor{red}{$\times$} &  \textcolor{red}{$\times$} \\
FLGuard~\cite{nguyen2021flguard} & \textcolor{red}{$\times$} & \textcolor{red}{$\times$} & \textcolor{red}{$\times$} & \textcolor{red}{$\times$}  \\
RoFL~\cite{burkhalter2021rofl}  & \textcolor{red}{$\times$}  & \textcolor{green}{\checkmark} & \textcolor{red}{$\times$} & \textcolor{red}{$\times$} \\
BREA*~\cite{so2020byzantine} & \textcolor{green}{\checkmark} & \textcolor{green}{\checkmark} & \textcolor{green}{\checkmark} & \textcolor{red}{$\times$} \\
\rowcolor{aliceblue} \name (Our) & \textcolor{green}{\checkmark} & \textcolor{green}{\checkmark} & \textcolor{green}{\checkmark} & \textcolor{green}{\checkmark} \\
\bottomrule
\end{tabular}
}
\footnotesize
    {*Has additional privacy leakage}
\label{tab:compare_work}
\end{table}

\textbf{Secure Aggregation.} Prior work has addressed the problem of (non-Byzantine) secure aggregation in FL~\cite{bonawitz2017practical,Bell2020,Dream,so2021turboaggregate}. A popular approach is to use pairwise random masking to protect the local updates~\cite{bonawitz2017practical,Dream}. Advancements have been made in the communication overhead \cite{Konecn2016FederatedLS, Bonawitz2019FederatedLW, so2021turboaggregate}.

\noindent\textbf{Robust Machine Learning.}
A large number of studies have explored methods to make machine learners robust to Byzantine failures \cite{bagdasaryan2018backdoor,bhagoji2019analyzing,kairouz2019advances}.
Many of these robust machine-learning methods require the learned to have full access to the training data or to fully control the training process~\cite{cretu2008casting,gu2017badnets,liu2018finepruning,shen2019learning,steinhardt2017certified,wang2019neural} which is infeasible in FL.
Another line of work has focused on the development of estimators that are inherently robust to Byzantine errors~\cite{blanchard2017machine,chen2018draco,pan2020justin,rajput2019detox,yin2019byzantine}.
In our work, we target a set of methods that provides robustness by checking per-client updates ~\cite{blanchard2017machine,fung2018mitigating,shen2016auror}.

\noindent \textbf{Verifying Data Integrity in Secure Aggregation.}
Table \ref{tab:compare_work} compares \name~with prior work. There are three key differences between RoFL~\cite{burkhalter2021rofl} and \name{}: $(1)$ RoFL is designed only for range checks with $\ell_2$ or $\ell_{\infty}$ norms. Specifically, RoFL uses Bulletproofs which is especially performant for range proofs (range proofs can be aggregated where one can prove that $n$ commitments lie within a given range by providing only an additive $O(log(n))$ group elements over the length of a single proof). RoFL's performance is primarily based on this aspect of Bulletproof and all of its optimizations work only for range proofs. As such RoFL cannot support any other checks with the same performance as currently reported in the paper. By contrast, \name{} is a general framework that supports arbitrary validation predicates with good performance. 
$(2)$ RoFL is susceptible to DoS attacks because it \textit{only} detects malformed updates and aborts if it finds one. Specifically, the recovery of the final aggregate in RoFL requires a step of nonce cancellation that involves all the inputs by design. Hence, even if one of the input is invalid, the final aggregate will be ill-formed. By contrast, \name~is a \SAIV~protocol that  detects and removes malformed updates in every round.
$(3)$ RoFL assumes an honest-but-curious server, whereas \name~considers a  malicious threat model.  
BREA~\cite{so2020byzantine} also removes  outlying updates but, unlike \name, it leaks pairwise distances between inputs. 
Alternative solutions~\cite{nguyen2021flguard,he2020secure} for distance-based Byzantine-robust
aggregation uses two non-colluding servers in the semi-honest threat model, which is incompatible with FL. 

\section{Conclusion}\label{sec:conclusion}
Practical FL settings need to ensure both the privacy and integrity of model updates provided by the clients.
In this paper, we have formalized these goals in a new protocol, \SAIV, that securely aggregates \textit{only} well-formed inputs (\emph{i.e.}, updates). 
To demonstrate the feasibility of \SAIV, we have proposed \name: a system that efficiently instantiates a \SAIV~protocol. 



\bibliographystyle{plain}
\bibliography{references}
\clearpage

\section{Appendix}
\begin{table}\small
\caption{Notations}
\centering
\resizebox{0.95\columnwidth}{!}{\begin{tabular}{cl}
\toprule Symbol & Description\\
 \midrule
$n$ & Total number of clients\\
$m$ & Number of malicious clients\\
$\mathcal{S}$ & Server\\
 $\Cl_i$ & $i$-th client \\
$D_i$ & Private dataset of $\Cl_i$\\
$\Cl$ & Set of all $n$ clients\\
$\Cl_H$ & Set of $n-m$ honest clients\\
$\Cl_M$ & Set of $m$ malicious clients\\
$\Vl(\cdot)$ & Validation predicate\\
$\mathcal{M}$ & Global model to be trained\\
$u_i$ & Local update (gradient) of client $\Cl_i$\\
$\mathcal{U}$ & Aggregate update\\
$\Cl_{\Vl}$ & Set of clients  such that for all $\Cl_i \in \Cl_{\Vl}$, $\Vl(u_i)=1$
 \\
 $\mathcal{U}_\Vl$ & Aggregate of valid updates only $\mathcal{U}_\Vl = \sum_{\Cl_i\in \Cl_\Vl}u_i  $\\
$\kappa$ & Security parameter\\
$(i,s_i)$ & $i$-th Shamir's secret share for a secret $s\in \Fl$ \\
$\Psi$ & Check string for the verifiable secret sharing\\
$pp$ & Public parameters of the cryptographic protocols\\
$pk$ & Public key\\
$sk$ & Secret key\\
$sk_{ij}$ & Shared secret key between clients $\Cl_i$ and $\Cl_j$\\
$\mathcal{P}$ & Prover in the SNIP protocol\\
$\mathcal{V}_i$ & $i$-th verifier in the SNIP protocol\\
$\pi$ & SNIP proof \\
$h/f/g$ & Polynomials generated by $\Pro$ for the construction of $\pi$\\
 $(a,b,c)$ & Beaver's triplet generated by $\Pro$ for the construction of $\pi$\\
$[s]_i$ & $i$-th additive secret share for a secret $s \in \Fl$\\
 $w^{out}$ &  Value of the output wire of the circuit $\Vl(\cdot)$\\
 $\sigma$ & Proof summary broadcasted by the verifiers\\
$\Fl$ & A prime field \\
$\Ml$  & Number of multiplication gates in $\Vl(\cdot)$\\
$\Bl$ & Public bulletin board\\
$\Cl_{\setminus i}$ &Set of all clients except $\Cl_i$, $\Cl_{\setminus i} = \Cl\setminus \Cl_i$\\
$\CA$ & List of malicious clients maintained by $\Ser$ in \name\\
$(j,s_{ij})$ & Client $\Cl_j$'s (Shamir secret) share of client $\Cl_i$'s secret $s_i\in \Fl$ in \name\\
$\pi_i$ & Client $\Cl_i$'s proof in \name\\
$h_i/f_i/g_i$ & Polynomials generated by client $\Cl_i$ for the construction of $\pi_i$ in \name\\
$(a_i,b_i,c_i)$ & Beaver's triplet generated by client $\Cl_i$ for the construction of $\pi_i$ in \name\\
$w^{out}_i$ &  Value of the output wire of the circuit $\Vl(u_i)$ for client $\Cl_i$\\
$\Psi_{\pi_i}$ & Check string generated by client $\Cl_i$ for the shares of their proof $\pi_i$ \\

$\Psi_{u_i}$ & Check string generated by client $\Cl_i$ for the shares of their update $u_i$ \\
 $\sigma_{ji}$ & Client $\Cl_j$'s shar of the summary for client $\Cl_i$'s proof  in \name \\
   $\lambda_{ji}$ & Client $\Cl_j$'s share of the random digest for client $\Cl_i$'s proof  in \name \\
     $\lambda_{i}$ & Client $\Cl_i$'s random digest reconstructed from the shares $\{\lambda_{ji}\}, j \in \Cl_{\setminus i}$ \\   $\sigma_{i}$ & Client $\Cl_i$'s proof summary reconstructed from the shares $\{\sigma_{ji}\}, j \in \Cl_{\setminus i}$ \\\bottomrule
\end{tabular}}\label{tab:notations}
\end{table}
\subsection{Building Blocks Cntd.}\label{app:background}

\textbf{Arithmetic Circuit.} An arithmetic circuit, \scalebox{0.9}{$\mathcal{C}: \Fl^k\mapsto \Fl$}, represents a computation over a finite field $\Fl$. It can be represented by a directed acyclic graph (DAG) consisting of three types of nodes: $(1)$ inputs, $(2)$ gates and $(3)$ outputs.  
Input nodes have in-degree zero and out-degree one: the $k$ input nodes return input variables $\{x_1,\cdots,x_k\}$ with $x_i \in \Fl$.
Gate nodes have in-degree two and out-degree one; they perform either the $+$ operation (addition gate) or the $\times$ operation (multiplication gate).
Every circuit has a single output node with out-degree zero. 
A circuit is evaluated by traversing the DAG, starting from the inputs, and assigning a value in $\Fl$ to every wire until the output node is evaluated.

\textbf{Shamir's Secret Sharing Scheme.} The scheme is \emph{additive}, \emph{i.e.}, it allows addition of two secret shared values locally. Formally, for all $s,w \in \Fl$ and $Q\subseteq P, |Q|\leq t$: \begin{gather}s+w\leftarrow \textsf{SS.recon}(\{(i,s_i+w_i)_{i\in Q}\})\end{gather} Additionally, the scheme is a linear secret sharing scheme which means that any linear operations performed on the individual shares translates to operations performed on the secret, upon reconstruction. Specifically, for $Q \subseteq P, |Q|\geq t$ and $\alpha,\beta \in \Fl$: 
\begin{gather}\alpha s+\beta\larrow\textsf{SS.recon}(\{(i,\alpha s_i+\beta)_{i\in Q}\})\end{gather} This means that a party can perform linear operations on the secret \textit{locally}.

\textit{Verifiable Secret Shares.} To make the Shamir's Secret shares verifiable, we use Feldman's \cite{Feldman87} VSS technique.
Let $c(x)=c_0+c_1x+\cdots c_{t-1}x^{t-1}$ denote the  polynomial used in generating the shares where $c_0=s$ is the secret. 
The check string are the commitments to the coeffcients given by
\begin{gather}
\psi_i=g^{c_i} , i \in \{0,\cdots,t-1\}
\end{gather}
where $g$ denotes a generator of $\Fl$.  All arithmetic is taken
modulo $q$ such that $(p|q-1)$ where $p$ is the prime of $\Fl$.

For verifiying a share $(j,s_j)$, a party needs to check  whether $g^{s_j}=\prod_{i=0}^{t-1}\psi_i^{j^i}$.
The privacy of the secret $s=c_0$ is implied by the 
the intractability of computing discrete
logarithms~\cite{Feldman87}.

\textbf{Short Non-Interactive Proofs (SNIP).} Here, we detail the SNIP protocols. SNIP works in two stages as follows:

\textit{(1) Generation of Proof.} The prover $\mathcal{P}$ evaluates the circuit $\Vl(\cdot)$ on its input $x$ to obtain the value that
every wire in the circuit takes on during the computation
of $\Vl(x)$. Using these wire values,  $\mathcal{P}$ constructs 
three randomized polynomials $f$ , $g$, and $h$, which encode
the values of the input and output wires of each of the $\Ml$
multiplication gates in the computation of $\Vl(x)$.

Let us label the $M$  multiplication gates in the $\Vl(\cdot)$ circuit in the  topological order from inputs to outputs as $\{
1, \cdots , \Ml\}$. Let $u_t$ and $v_t$ denote the values on  the left and right input wires
of the $t$-th multiplication gate for $t \in [M]$. The prover $\mathcal{P}$ samples two values $u_0$ and $v_0$ uniformly at random from $\Fl$. $f$ and $g$ are defined to be the lowest degree polynomials such that $f(t) = u_t$ and $g(t) = v_t, \forall t \in [\Ml]$. Next, $h$ is defined as the polynomial
$h = f \cdot g$. The polynomials $f$ and $g$ has degree at most $M$
and the polynomial $h$ has degree at most $2\Ml$. It is easy to see that  $h(t)$ is
 the value of the output wire $(u_t\cdot v_t)$ of the $t$-th
multiplication gate in the $ \Vl(x)$ circuit since $h(t) = f(t) \cdot g(t) = u_t\cdot v_t, \forall t \in [\Ml]$. The prover $\mathcal{P}$ can construct the polynomials $f$ and $g$ using polynomial interpolation and can multiply them
produce $h = f \cdot g$. Additionally, $\mathcal{P}$ samples a single set of Beaver's multiplication triples \cite{Beaver91}: $(a,b,c)\in \Fl^3$ such that $a\cdot b = c \in \Fl$. Prover $\mathcal{P}$ constructs the proof share $[\pi]_i = \langle [f(0)]_i,[g(0)]_i,[h]_i,([a]_i,[b]_i,[c]_i)\rangle$ \footnote{Note that we omitted the terms $[f(0)]_i$ and  $[g(0)]_i$ from $\pi_i$ in Sec. \ref{sec:system:block}  for the ease of exposition.} for verifier $\mathcal{V}_i$ by splitting:
\begin{itemize} \item the random values $f(0) = u_0$ and $g(0) = v_0$, using
additive secret sharing, \item  the coefficients of $h$ (denoted by $[h]_i$,  and
\item the sampled Beaver's triplets $(a,b,c)$. \end{itemize}

The prover then sends the respective shares of the input and the proof $( [x]_i, [\pi]_{i})$ to each of the verifiers $\Ver_i$.

\textit{ (2) Verification of Proof.} Using  $[x]_i$, the share of the provers's  private value $x$, and $[f(0)]_i$, $[g(0)]_i$, and $[h]_i$, each verifier $\Ver_i$ can \textit{locally} (i.e., without communicating with the other verifiers/prover) produce shares $[f]_i$ and
$[g]_i$ of the polynomials $f$ and $g$ as follows:
\begin{itemize}
    \item $\Ver_i$ reconstructs a share of every wire for the $\Vl(x)$ circuit. This is possible since $\Ver_i$ has access to $(1)$ a share of each of the input wire values $([x]_i)$ and $(2)$ a  share of each wire value coming out of a multiplication gate $([h]_i(t), t \in [\Ml]$ is a share of the $t$-th such wire). Hence, $\Ver_i$  can derive all other wire value shares via affine operations on the wire value shares it already has.
    \item Using these wire value shares and shares of $f(0)$
and $g(0)$,  $\Ver_i$  uses polynomial
interpolation to construct $[f]_i$ and $[g]_i$
\end{itemize}

To verify that $\Vl(x)=1$ and hence, accept the input $x$, the verifiers need to check two things: 
\begin{itemize} \item check the consistency of $\Pro$'s computation of $\Vl(x)$, and \item check that the value of final output wire of the computation, $\Vl(x)$, denoted by $w^{out}$ is indeed $1$. \end{itemize}
For carrying out the above mentioned checks, the verifier $\Ver_i$ broadcasts a summary $\sigma_i=([w^{out}]_i,[\lambda]_i)$, where $[w^{out}]_i$ is $\Ver_i$'s share of the output wire and $[\lambda]_i$ is a share of a random digest that the verifier computes from the shares of the other wire values and the proof  share $\pi_i$. The details are discussed  as follows:

\textit{(2a) Checking the Consistency of $\Pro$'s Computation of $\Vl(x)$.} 
For honest provers and verifiers, the verifiers will now hold shares of polynomials $f$ , $g$, and $h$ such that $f \cdot g = h$.
In contrast, a malicious prover could have sent the
verifiers shares of a different polynomial $\hat{h}$ such that, for some
$t \in [M], hˆ(t)$ is not the value on the output wire in
the $t$-th multiplication gate of the $\Ver(x)$ circuit.
In this case, the verifiers end up reconstructing shares of polynomials $\hat{f}$ and $\hat{g}$ that might not be equal to $f$ and $g$. Then, we have $\hat{h}\neq  \hat{f} \cdot \hat{g}$ as explained below. Consider the least $t'$ for which $\hat{h}(t') \neq h(t')$. For all $t \leq t'$,
$\hat{f}(t) = f(t)$ and $g(t) = g(t)$, by construction. Since,

\begin{gather}
    \hat{h}(t') \neq h(t') = f(t') \cdot g(t') = \hat{f}(t)' \cdot \hat{g}(t'),
\end{gather}
it must be that $\hat{h}(t') \neq \hat{f}(t') \cdot \hat{g}(t')$, so $\hat{h} \neq \hat{f} \cdot \hat{g}.$ The verifiers can employ the above check using the Schwartz-Zippel randomized polynomial identity test~\cite{Schwartz80,Zippel79} as explained later in this section.

\textit{(2b) Output Verification.} In case all the verifiers are honest,  each $\Ver_i$
now holds a set of shares of the values of all the wires of the $\Vl(x)$ circuit. So to confirm that
$\Vl(x) = 1$, the verifiers need only broadcast their shares of
the output wire $w^{out}_i$. The verifiers can thus reconstruct its exact value from all the broadcasted shares $w^{out}=\sum_{i=1}^k [w^{out}]_i$ and check whether $w^{out}=1$, in which case  it must be that $\Vl(x) = 1$ (except with some small failure probability due to the polynomial identity test).

\textit{Polynomial Identity Test}. Recall that each verifier $\Ver_i$
holds shares $[\hat{f}]_i$, $[\hat{g}]_i$ and $[\hat{h}]_i$
of the  polynomials $\hat{f}$ , $\hat{g}$ and $\hat{h}$. Furthermore, it holds that $\hat{f} \cdot \hat{g} = \hat{h}$
if and only the set of the wire value shares, held by the verifiers, sum up to the internal
wire values of the $\Vl(x)$ circuit computation.
The verifiers now execute a variant of the Schwartz-Zippel randomized polynomial identity test 
to check whether this relation holds. The main idea of the test is that  if $\hat{f}(t) \cdot \hat{g}(t) \neq \hat{h}(t)$, then the polynomial
$t \cdot (\hat{f}(t) \cdot \hat{g}(t) - \hat{h}(t))$ is a non-zero polynomial of degree at
most $2\Ml + 1$. (The utility of multiplying the polynomial $\hat{f} \cdot \hat{g} - \hat{h}$ by $t$ is explained in the next paragraph)
 Such a polynomial can have at
most $2\Ml + 1$ zeros in $\Fl$, so for a $r \in \Fl$ chosen at random
and after evaluating $r \cdot (
\hat{f}(r) \cdot \hat{g}(r) - \hat{h}(r))$, the verifiers will detect
that $\hat{f } \cdot \hat{g} \neq \hat{h}$ with probability at least $1 \frac{2\Ml+1}{|\Fl|}$.

For the polynomial identity test, one of the
verifiers samples a random value $r \in \Fl$ and broadcasts it. Each verifier $\Ver_i$ can locally compute the shares $[\hat{f}(r)]_i$, $[\hat{g}(r)]_i$, and $[\hat{h}(r)]_i$ since polynomial evaluation
requires only affine operations. $\Ver_i$ then
applies a local linear operation to these last two shares to
obtain the shares $[r \cdot \hat{g}(r)]_i$ and $[r \cdot \hat{h}(r)]_i$.

\textit{Multiplication of Shares.} Note that  the verifiers need to securely multiply their shares  $[\hat{f}(r)]_i$ and $[r \cdot \hat{g}(r)]_i$
to get a share $[r \cdot
\hat{f}(r) \cdot \hat{g}(r)]_i$ without leaking anything
to each other about the values $\hat{f}(r)$ and $\hat{g}(r)$. This can be performed via the Beaver's MPC multiplication protocol (described later). Using this protocol, verifiers with access to one-time-use shares $([a]_i, [b]_i, [c]_i) \in \Fl^3$ of
random values such that $a \cdot b = c \in \Fl$ (“multiplication
triples”), can execute a
multi-party multiplication of a pair of secret-shared values.
For SNIPs, the prover $\mathcal{P}$ generates the multiplication triple on behalf of the verifiers and sends shares of
these values to each verifier. If $\mathcal{P}$ produces the shares
of these values correctly, then the verifiers can perform
a multi-party multiplication of shares to complete the
correctness check as discussed above. More importantly, 
we can ensure that even if $\mathcal{P}$ sends 
shares of an invalid multiplication triple,
the verifiers will still catch the cheating prover with high
probability. Let's assume that the cheating prover sends the
shares $([a]_i, [b], [c]i) \in \Fl^3$
such that $a \cdot b \neq c \in \Fl$.
Let $a \cdot b = (c + \alpha) \in \Fl$, for some constant
$\alpha > 0$. Executing the polynomial identity
test using the above triples will shift the result of the test by $\alpha$. So the verifiers will be effectively testing
whether the polynomial
\begin{gather}
\hat{Q}(t) = t \cdot (\hat{f}(t) \cdot \hat{g}(t) - \hat{h}(t)) + \alpha \end{gather}
is identically zero. Whenever $\hat{f} \cdot \hat{g} \neq \hat{h}$, it holds that
$t \cdot (\hat{f}(t) \cdot \hat{g}(t) - \hat{h}(t))$ is a non-zero polynomial. So, if
$\hat{f} \cdot \hat{g}\neq \hat{h}$, then $\hat{Q}(t)$ must also be a non-zero polynomial.
Note that the multiplying the term $"\hat{f}\cdot\hat{g}-\hat{h}"$ by $t$ ensures that whenever this expression is non-zero, the
resulting polynomial $\hat{Q}$ is guaranteed to be non-zero, even if $\hat{f}$ , $\hat{g}$, and $\hat{h}$ are
constants, and the prover chooses $\alpha$ adversarially. Since SNIP assumes honest verifiers, we may assume that the prover did not know
the random value $r$ while generating
its multiplication triple. This implies that $r$ is distributed
independently of $\alpha$ which means that we will catch a cheating prover with probability $1-\frac{2M+1}{|\Fl|}$.

\textbf{Beaver's MPC Multiplication Protocol.} SNIP uses  Beaver's 
multiplication triples as follows. 
A multiplication triple is a one-time-use triple of values
$(a, b, c) \in \Fl^3$, chosen at random subject to the constraint
that $a \cdot b = c \in \Fl$.  In SNIP,
computation, each verifier $\Ver_i$ holds a share $([a]_i, [b]_i, [c]_i) \in \Fl^3$ of the triple.
Using their shares of one such triple $(a, b, c)$, the verifiers
can jointly evaluate shares of the output of a multiplication
gate $yz$. To do so, each verifier  uses its shares $[y]_i$ and $[z]_i$
of the input wires, along with the first two components of
its multiplication triple to compute the following values:
\begin{gather}[d]_i = [y]_i - [a]_i
\\ [e]_i = [z]_i - [b]_i
\end{gather}
Each verifier $\Ver_i$  then broadcasts $[d]_i$ and $[e]_i$. Using the
broadcasted shares, every verifier can reconstruct $d$ and $e$
and can compute:
\begin{gather}[\lambda]_i = de/k + d[b]_i + e[a]_i + [c]_i
\end{gather}
Clearly, $\sum_{i=1}^k[\lambda]_i=yz$. Thus, this step requires a round of communication for the broadcast and three reconstructions for $d$, $e$ and $\lambda$.

For SNIPs on Shamir's secret shares, the verifier $\Ver_i$ compute the shares  $(i,\lambda_i)$ where $\lambda_i=de+db_i+ea_i+c_i$ which gives $yz\leftarrow \textsf{SS.recon}((i,\lambda_i))$.

As mentioned in Sec. \ref{sec:opt}, we can leverage the multiplicativity of Shamir's secret shares to generate $\lambda_i$ for client $\Cl_i$ locally. Specifically, each client can locally multiply the shares $(j,f_{ij})$ and $(j,g_{ij})$ to generate $(i,(f_j\cdot g_j)_i)$. In order to make the shares consistent, $\Cl_i$  multiplies the share of $(i,h_{ji})$ with $(i,z_i)$ where $z=1$ (these can be generated and shared by the server $\Ser$ in the clear). In this way, $\Cl_j$ can locally generate a share of the digest $(j,d_{ij})$ that correspond to a polynomial of degree $2m$. Since $m<\frac{n-1}{4}$, this optimization is still compatible with robust reconstruction. In this way, we save one round of communication and require only one reconstruction for $\lambda_i$ instead of three.

\subsection{Complexity Analysis}\label{sec:complexity}
We present the complexity analysis of \name~in terms of the number of clients $n$, number of malicious clients $m$ and data dimension $d$ (Table \ref{tab:complexity}).

\textbf{Computation Cost.}
Each client $\Cl_i$'s computation cost can be broken into six components: $(1)$ performing \scalebox{0.9}{$n\!-\!1$} key agreements -- \scalebox{0.9}{$O(n)$}; $(2)$ generating proof $\pi_i$ for \scalebox{0.9}{$\Vl(u_i)=1$} -- \scalebox{0.9}{$O(|\Vl|+\Ml\log\Ml)$}\footnote{We use standard discrete FFT for all polynomial operations \cite{FFT}.}; $(3)$ creating secret shares of the update $u_i$ and the proof $\pi_i$ -- \scalebox{0.9}{$O(mn(d+\Ml))$}\footnote{This uses the fact that the Lagrange coefficients can be pre-computed~\cite{Lagrange}.}; $(4)$ verifying the validity of the received shares -- \scalebox{0.9}{$O(mn(d+\Ml)$};  $(5)$ generating proof digest for all other clients -- \scalebox{0.9}{$O(n|\Vl|)$}; and $(6)$ generating shares of the final aggregate -- \scalebox{0.9}{$O(nd)$}. Assuming \scalebox{0.9}{$|\Vl|$} is of the order of \scalebox{0.9}{$O(d)$}, the overall computation complexity of each client $\Cl_i$ is \scalebox{0.9}{$O(mnd)$}. \\The server $\Ser$'s computation costs can be divided into three parts: $(1)$ verifying the validity of the flagged shares -- \scalebox{0.9}{$O(md\min(n,m^2))$}; $(2)$ verifying the proof digest for all clients -- \scalebox{0.9}{$O(n^2\log^2n\log\log n)$}; and $(3)$ computing the final aggregate -- \scalebox{0.9}{$O(dn\log^2n\log\log n)$}. Hence, the total computation complexity of the server is \scalebox{0.9}{$O\big((n+d)n\log^2n\log \log n$} \scalebox{0.9}{$+md\min(n,m^2)\big)$}.

\textbf{Communication Cost.}  The communication
cost of each client $\Cl_i$ has seven components: $(1)$ exchanging keys with all other clients -- \scalebox{0.9}{$O(n)$}; $(2)$ receiving \scalebox{0.9}{$\Vl(\cdot)$} -- \scalebox{0.9}{$O(|\Vl|)$}; $(3)$ sending encrypted secret shares and check strings for all other clients -- \scalebox{0.9}{$O(n(d+\Ml)+md)$}; $(4)$ receiving encrypted secret shares and check strings from all other clients -- \scalebox{0.9}{$O(n(d+\Ml)+mnd)$}; $(5)$ sending proof digests for every other client -- \scalebox{0.9}{$O(n)$}; $(6)$ receiving the list of corrupt clients $\Cl$ -- \scalebox{0.9}{$O(m)$}; and $(7)$ sending the final aggregate -- \scalebox{0.9}{$O(d)$}. Thus, the communication complexity for every client is \scalebox{0.9}{$O(mnd)$}. \\
The servers communication costs include: $(1)$ sending the validation predicate -- \scalebox{0.9}{$O(|\Vl|)$}; $(2)$
receiving check strings and secret shares from flagged clients -- \scalebox{0.9}{$O(md\min(n,m^2))$}; $(3)$ receiving proof digests -- \scalebox{0.9}{$O(n^2)$}; $(4)$ sending the list of malicious clients -- \scalebox{0.9}{$O(m)$}; and $(5)$ receiving the shares of the final aggregate -- \scalebox{0.9}{$O(nd)$}. Hence, the overall communication complexity of the server is \scalebox{0.9}{$O(n^2+md\min(n,m^2))$}. 
 The total number of one-way communication is $12$ and $9$ for the clients and server, respectively, \textit{independent} of the complexity of the validation predicate.

 \subsection{Proof for Lemma \ref{lemma:4}}\label{app:proof1}
 
 \begin{proof}In Round 3, the proof corresponding to a  client $\Cl_i$ is verified iff it has submitted valid shares for the $n\!-\!m\!-\!1$ honest clients $\Cl_H\setminus \Cl_i$. This is clearly true if $\Cl_i$ is honest. If $\Cl_i$ is malicious, \emph{i.e.}, it submitted at least one invalid share: 
\squishlist
\item \textit{Case 1:} \scalebox{0.9}{$|\textsf{Flag}[i]|\geq m+1$}. It is clear that $\Cl_i$ has submitted an invalid share to at least one honest client and, hence, is removed from the rest of the protocol. 
\item \textit{Case 2:} \scalebox{0.9}{$|\textsf{Flag}[i]|\leq m$}. All honest clients in $\Cl_H$ will be flagging $\Cl_i$. Hence, $\Cl_i$ either has to submit the corresponding valid shares or be removed from the protocol. 
\squishend
Given $n\!-\!m\!-\!1$ valid shares, using Fact 2, we know that \name~reconstructs the proof summary for $\Cl_i$ correctly. Eq. \ref{eq:soundness} then follows from the soundness property of SNIP. 
\end{proof}
\subsection{Proof for Lemma \ref{lemma:5}}\label{app:proof2}
\begin{proof}
In Round 2, observe that the shares $(j,u_{ij}), (j,\pi_{ij})$ for each client $\Cl_j \in \Cl_{\setminus i}$ are encrypted with the pairwise secret key and distributed. Hence, a collusion of $m$ malicious clients (and the server $\Ser$)\footnote{The server does not have access to any share of its own in \name.} can access \textit{at most} $m$ shares of any honest client $\Cl_i\in \Cl_H$. This is true even in Round 3 where:
\squishlist\item A malicious client might falsely flag $\Cl_i$.
\item No honest client in $\Cl_H\setminus \Cl_i$ will flag $\Cl_i$ since they would be receiving valid shares (and their encryptions) from $\Cl_i$.
\item $\Ser$ cannot lie about who flagged who, since everything is logged publicly on the bulletin $\Bl$.
\squishend
Thus, only $m$ shares of $\Cl_i$ can be revealed which correspond to the $m$ malicious clients.
\\Since at least $m+1$ shares are required to recover the secret, any instantiation of the SNIP verification protocol (\emph{i.e.}, reconstruction of the values of $\sigma_i=(w^{out}_i,\lambda_i)$) requires at least one \textit{honest} client to act as the verifier. Hence, at the end of Round 3, from Fact 1 and the zero-knowledge property of SNIP, the only information revealed is that $\Vl(u_i)=1$.
\end{proof}
\subsection{Security Proof}\label{app:security}

\begin{theorem} Given a public validation predicate \scalebox{0.9}{$\Vl(\cdot)$}, security parameter \scalebox{0.9}{$\kappa$}, set of $m$ malicious clients $\Cl_M, \lfloor m<\frac{n-1}{3}\rfloor$ and a malicious server $\Ser$, there exists a probabilistic polynomial-time (P.P.T.) simulator $\textsf{Sim}(\cdot)$ such that: \begin{gather*}\textsf{Real}_{\name}\big(\{u_{\Cl_H}\}, \Omega_{\Cl_M\cup\Ser}\big)\equiv_C\textsf{Sim}\big(\Omega_{\Cl_M\cup \Ser},\Ul_H,\Cl_H\big)\\\mbox{where }\Ul_H=\sum_{\Cl_i\in \Cl_H}u_i.\end{gather*}  $\{u_{\Cl_H}\}$ denotes the input of all the honest clients, $\textsf{Real}_\name$ denotes a random variable representing the joint view of all the parties in \name's execution,  $\Omega_{\Cl_M\cup\Ser}$ indicates a polynomial-time algorithm implementing
the “next-message” function of the parties in $\Cl_M\cup\Ser$, 
and $\equiv_C$ denotes computational indistinguishability.\end{theorem}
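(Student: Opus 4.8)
The plan is to prove this by a standard simulation-based argument in the malicious model: I will construct a PPT simulator $\textsf{Sim}$ that, given only $\Ul_H$, $\Cl_H$, and black-box access to the adversary's next-message function $\Omega_{\Cl_M\cup\Ser}$, produces a transcript computationally indistinguishable from $\textsf{Real}_{\name}$. Since $\textsf{Sim}$ never sees the honest updates $\{u_{\Cl_H}\}$, establishing indistinguishability shows that the joint view of $\Cl_M\cup\Ser$ leaks nothing about any individual honest input beyond $\Ul_H$ and the fact that each is well-formed. The simulator is built round by round, internally running honest-client code on faked inputs while letting $\Omega$ drive the corrupt parties.

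First, in the setup and Round 1, all messages (public KA parameters, honest public keys, the published $\Vl(\cdot)$) are input-independent, so $\textsf{Sim}$ generates them exactly as in the real protocol. In Round 2, for each honest $\Cl_i$, $\textsf{Sim}$ must produce the $m$ Shamir shares of $u_i$ and $\pi_i$ that the corrupt clients will decrypt, the ciphertexts addressed to other honest clients, and the Feldman check strings. For the first, $\textsf{Sim}$ samples the corrupt parties' shares uniformly; by Fact 1 (Shamir's $m+1$ threshold) these $m$ shares are distributed identically to the real ones regardless of the secret. For the ciphertexts between honest clients, $\textsf{Sim}$ encrypts dummy values; indistinguishability follows from IND-CPA security of the authenticated-encryption scheme, whose pairwise keys between two honest clients are, by KA security under DDH, pseudorandom and unknown to $\Omega$. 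For the check strings, $\textsf{Sim}$ interpolates a degree-$m$ polynomial through the faked shares and an arbitrary constant term and commits to its coefficients; under the discrete-log hardness on which Feldman VSS rests, these commitments hide the constant term, hence are indistinguishable from the real ones.

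In Round 3, $\textsf{Sim}$ runs the honest clients' verification and flagging logic against whatever $\Omega$ posts, reproducing the public flagging pattern, the server's list $\CA$, and any shares that honest clients are forced to reveal in the clear. The key point (Lemma \ref{lemma:5}) is that no honest client flags another honest client and the bulletin $\Bl$ logs who flagged whom, so at most $m$ shares of any honest $\Cl_i$ are ever exposed. For the proof summaries $\sigma_{ji}=((i,w^{out}_{ji}),(i,\lambda_{ji}))$ of an honest prover, $\textsf{Sim}$ invokes the SNIP zero-knowledge simulator: since at most $m<n-1$ verifiers are corrupt, every SNIP instantiation contains at least one honest verifier, so the ZK property yields a simulator producing the corrupt verifiers' summary shares consistent with $\Vl(u_i)=1$ and nothing more. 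In Round 4, $\textsf{Sim}$ must output honest aggregate shares so that robust reconstruction yields the value the adversary expects, namely $\Ul_H$ plus the contributions of the well-formed corrupt updates. Because $\textsf{Sim}$ both knows $\Ul_H$ and can read the corrupt shares and updates off $\Omega$, it interpolates consistent honest aggregate shares; and if $\Omega$ causes the server to wrongly place an honest client on $\CA$, $\textsf{Sim}$ reproduces the successful public dispute and the resulting abort (Lemma \ref{lemma:3}).

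Indistinguishability is then argued by a hybrid sequence that replaces, one primitive at a time, the real honest messages with the simulated ones: $H_0$ is the real execution; $H_1$ swaps honest-to-honest ciphertexts for encryptions of zero (justified by KA/DDH followed by AE IND-CPA); $H_2$ replaces real Shamir shares and Feldman check strings with the faked ones (justified by Fact 1 and the hiding of Feldman commitments under DL); $H_3$ replaces each honest proof summary by the SNIP-simulated one (justified by SNIP zero knowledge); and $H_4$ fixes the aggregate shares using $\Ul_H$, which is precisely $\textsf{Sim}$'s output. I expect the main obstacle to be not any single primitive but maintaining global consistency across the adversary's adaptive deviations in Round 3: the same honest share can surface through three channels --- as a ciphertext, as a Feldman-verifiable value, and, if flagged, as a cleartext reveal --- and $\textsf{Sim}$ must keep all of these mutually consistent, and consistent with the later aggregate and dispute transcripts, while knowing only $m$ shares per honest secret and only $\Ul_H$. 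Handling the malicious server's attempts to exclude honest clients (forcing disputes or aborts) without ever learning an honest update is the most delicate part of the argument.
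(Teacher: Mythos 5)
Your proposal is correct and follows the same overall strategy as the paper: a round-by-round simulator plus a hybrid argument over exactly the same primitives (KA/DDH for pairwise keys, authenticated encryption for honest-to-honest ciphertexts, perfect secrecy of $(m+1)$-threshold Shamir sharing, hiding of Feldman check strings under discrete log, SNIP zero knowledge), capped by the same final trick of interpolating the honest aggregate shares through $p^*(0)=\Ul_H$ and the corrupt parties' shares. The one genuine internal difference is how the honest clients' Round 2--3 messages are faked. You fake each component separately --- uniform corrupt shares, Feldman commitments to a polynomial with an arbitrary constant term, and summaries produced by invoking the SNIP zero-knowledge simulator --- and then, as you yourself flag at the end, you must argue that these independently faked pieces (ciphertext, check string, potential cleartext reveal, later aggregate share) remain mutually consistent under adaptive adversarial behavior. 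The paper's $\textsf{Hyb}_3$ dissolves this difficulty with a cleaner device: for each honest $\Cl_i$ the simulator samples a dummy input $s_i\in\Fl$ with $\Vl(s_i)=1$ and then runs the \emph{honest protocol code} on $s_i$, so shares, check strings, proofs, summaries, reveals, and dispute transcripts are all genuinely computed from one consistent fake input and cross-channel consistency holds by construction; indistinguishability is then a single reduction to the disjunction of the underlying assumptions. Your approach buys modularity (one hybrid per primitive, and an explicit use of SNIP's ZK simulator rather than citing ZK as one of several reasons), but the paper's buys exactly the consistency you identify as the delicate part. One further small gap: you invoke only IND-CPA hiding for the honest-to-honest ciphertexts, whereas the paper's $\textsf{Hyb}_2$ additionally uses the authenticity (IND-CCA) guarantee of the authenticated encryption to let the simulator abort if $\Omega_{\Cl_M\cup\Ser}$ ever forges a validly decrypting message on behalf of an honest client; without that step your simulator's flagging logic in Round 3 is not guaranteed to match the real execution against such injection attacks.
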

\begin{proof}We prove the theorem by a standard hybrid argument. Let $\Omega_{\Cl_M\cup \Ser}$ indicate the polynomial-time algorithm that denotes
the “next-message” function of parties in $\Cl_M\cup \Ser$. That is, given a
party identifier $c \in \Cl_M\cup \Ser$, a round index $i$, a transcript $T$ of all
messages sent and received so far by all parties in 
$\Cl_M\cup \Ser$, joint
randomness $r_{\Cl_M\cup \Ser}$ for the corrupt parties’ execution, and access
to random oracle $O$, $\Omega_{\Cl_M\cup \Ser}(c, i, T, r_{\Cl_M\cup \Ser}$) outputs the message for
party $c$ in round $i$ (possibly making several queries to $O$ along
the way).  We note that $\Omega_{\Cl_M\cup \Ser}$ is thus effectively
choosing the inputs for all corrupt users.

We will define a simulator \textsf{Sim} through a series of
(polynomially many) subsequent modifications to the real
execution $\textsf{Real}_\name$, so that the views of $\Omega_{\Cl_M\cup\Ser}$ in any two subsequent executions are computationally indistinguishable. 
\begin{enumerate}\item$\textsf{Hyb}_0$: This random variable is distributed exactly as the
view of $\Omega_{\Cl_M\cup\Ser}$ in $\textsf{Real}_{\name}$, the joint view of the parties $\Cl_M\cup\Ser$
in a real execution of the protocol.
\item $\textsf{Hyb}_1$:
In this hybrid, for 
any pair of honest clients $\Cl_i,\Cl_j \in \Cl_H$, the simulator changes the key from $\textsf{KA.agree}(pk_j,sk_i)$ to  a uniformly random key. We use Diffie-Hellman key exchange protocol in \name.
The DDH assumption~\cite{DH76}  guarantees that this hybrid is indistinguishable from the previous one. 
also be able to break the DDH.
\item $\textsf{Hyb}_2$: This hybrid is identical to $\textsf{Hyb}_1$
, except additionally, \textsf{Sim} will abort if $\Omega_{\Cl_M\cup \Ser}$ succeeds to deliver, in
round 2, a message to an honest client $\Cl_i$
on behalf of another honest client $\Cl_j$, such that $(1)$ the
message is different from that of  $\textsf{Sim}$, and $(2)$ the message does
not cause the decryption  to fail. Such a message would directly
violate the IND-CCA security
of the encryption scheme.

\item $\textsf{Hyb}_3$: In this round, for every honest party in $\Cl_H$, \textsf{Sim} samples $s_i\in \Fl$ such that $\Vl(s_i)=1$ and replaces all the shares and the check strings accordingly. This allows the server to compute the $\sigma_i=(w^{out}_i,\lambda_i)$ such that $w^{out}_i=1 \wedge \lambda_i=0$  for all honest clients in the same way as in the previous hybrid. An adversary noticing any difference would break $(1)$ the computational discrete logarithm assumption used by the VSS~\cite{Feldman87}, OR  $(2)$ the \textsf{IND-CCA} guarantee of the encryption scheme, OR $(3)$ the information theoretic perfect secrecy of Shamir's secret sharing scheme with threshold $m+1$,  OR $(4)$ zero-knowledge property of SNIP.

\item $\textsf{Hyb}_4$: In this hybrid, \textsf{Sim} uses $\Ul_H$ to compute the following polynomial. Let $(j,S_j)$ represent the share of $\sum_{i\in \Cl_H}{s_i}$ for a malicious client $\Cl_j\in\Cl\setminus\Cl_H$ where $s_i$ denotes the random input \textsf{Sim} had sampled for $\Cl_i\in \Cl_H$ in $\textsf{Hyb}_3$. \textsf{Sim} performs polynomial interpolation to find the $m+1$-degree polynomial $p*$ that satisfies  $p*(0)=\Ul_H$ and $p(j)=S_j$. Next, for all honest client, \textsf{Sim} computes the share for $\Ul=\Ul_H+\sum_{\Cl_j\in\bar{C}}u_j$ (Eq. \ref{eq:aggregate}) by using the polynomial $p*$ and the relevant messages from $\Omega_{\Cl_M\cup\Ser}$. Clearly, this hybrid is indistinguishable from the previous one by the perfect secrecy of Shamir's secret shares. This concludes our proof.
\end{enumerate}
\end{proof}
\subsection{Additional Evaluation Results}\label{app:eval}

In this section, we provide some additional evaluation results on model accuracy in Fig. \ref{fig:app:eval}. We use the same configuration as the one reported in Sec. \ref{sec:evaluation}. Our observations are in line with our discussion in Sec. \ref{sec:eval:models}. 
\subsection{Discussion Cntd.}\label{app:discussion}
Here, we present additional avenues of future work for \name.

\noindent\textbf{Revealing Malicious Clients.} In our current implementation, \name~publishes the (partial) list of malicious clients $\CA$. 
To hide the identity of malicious clients, we could include an equal number of honest clients in the list before publishing it, thereby providing those clients plausible deniability. 
We leave more advanced cryptographic solutions as a future direction.

\begin{figure*}[]
    \begin{subfigure}{0.24\linewidth}
        \centering
         \includegraphics[width=0.9\linewidth]{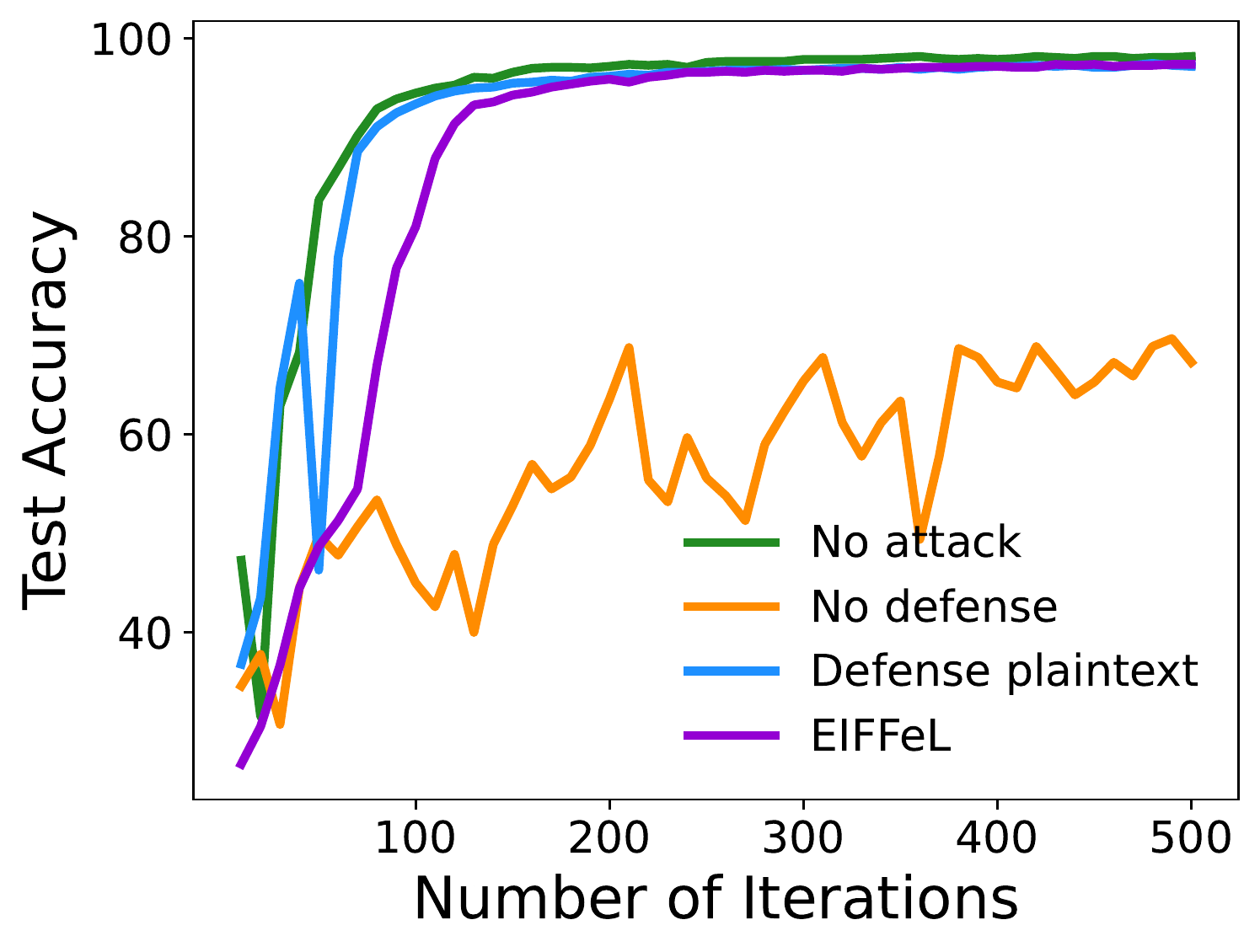}
       \caption{MNIST: Scaling attack with norm bound validation predicate.}
        \label{fig:MNIST:Norm}
    \end{subfigure}
    \begin{subfigure}{0.24\linewidth}
    \centering \includegraphics[width=0.9\linewidth]{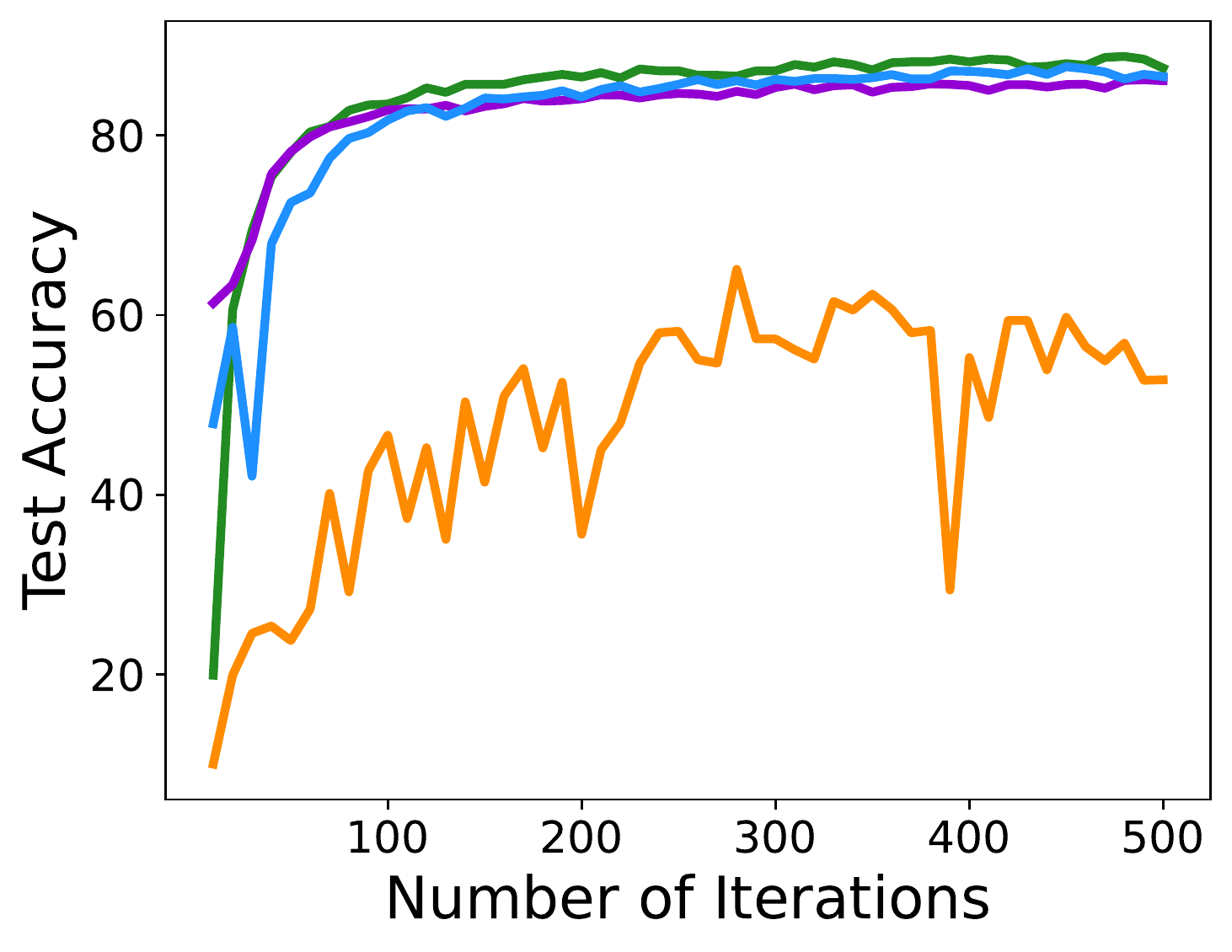}  
 \caption{FMNIST: Scaling attack with norm bound validation predicate.}
        \label{fig:FMNIST:Norm}\end{subfigure}
         \begin{subfigure}{0.24\linewidth}
    \centering \includegraphics[width=0.9\linewidth]{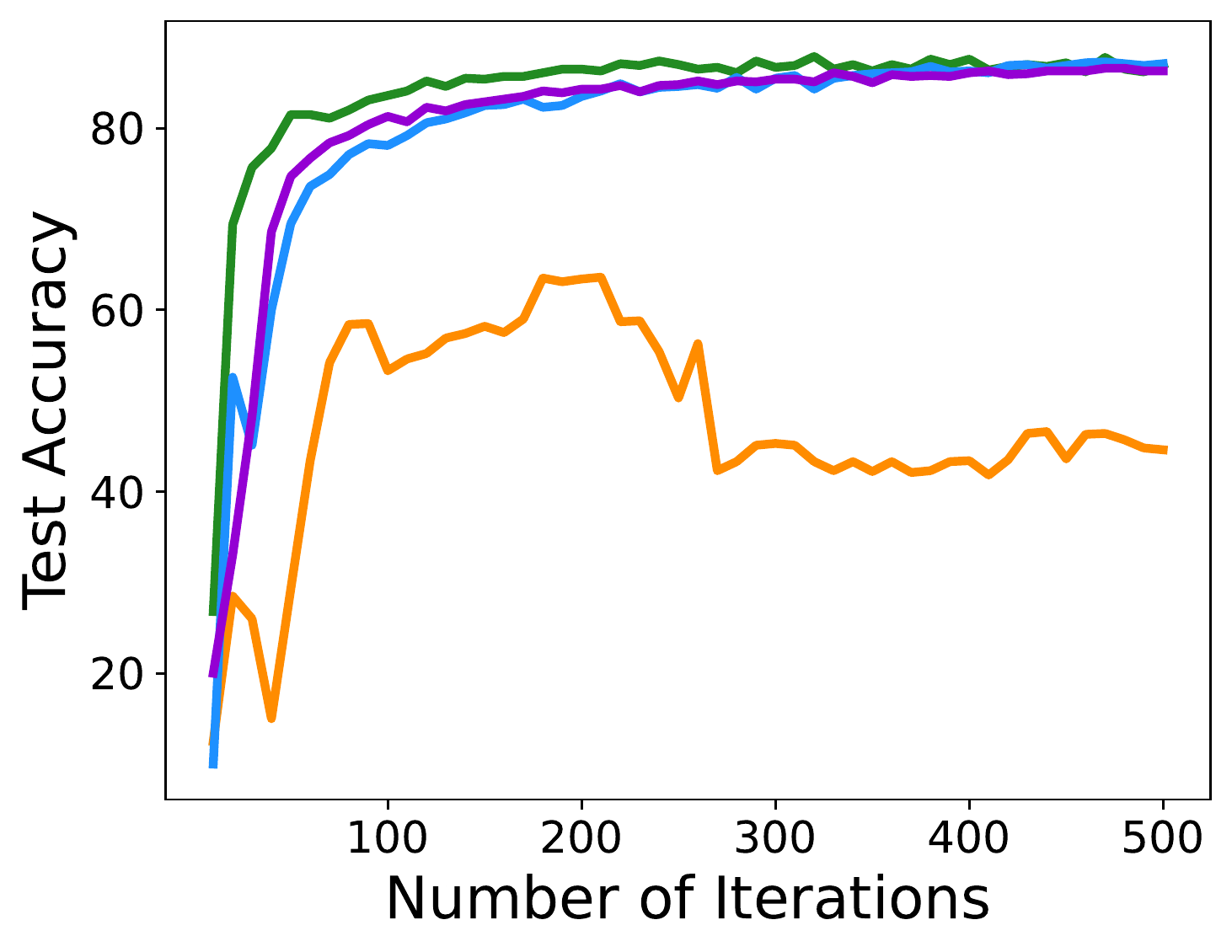}  
 \caption{FMNIST: Min-Sum attack with cosine similarity validation predicate. }
        \label{fig:FMNIST:Cosine}\end{subfigure}
  \begin{subfigure}{0.24\linewidth}
    \centering \includegraphics[width=0.9\linewidth]{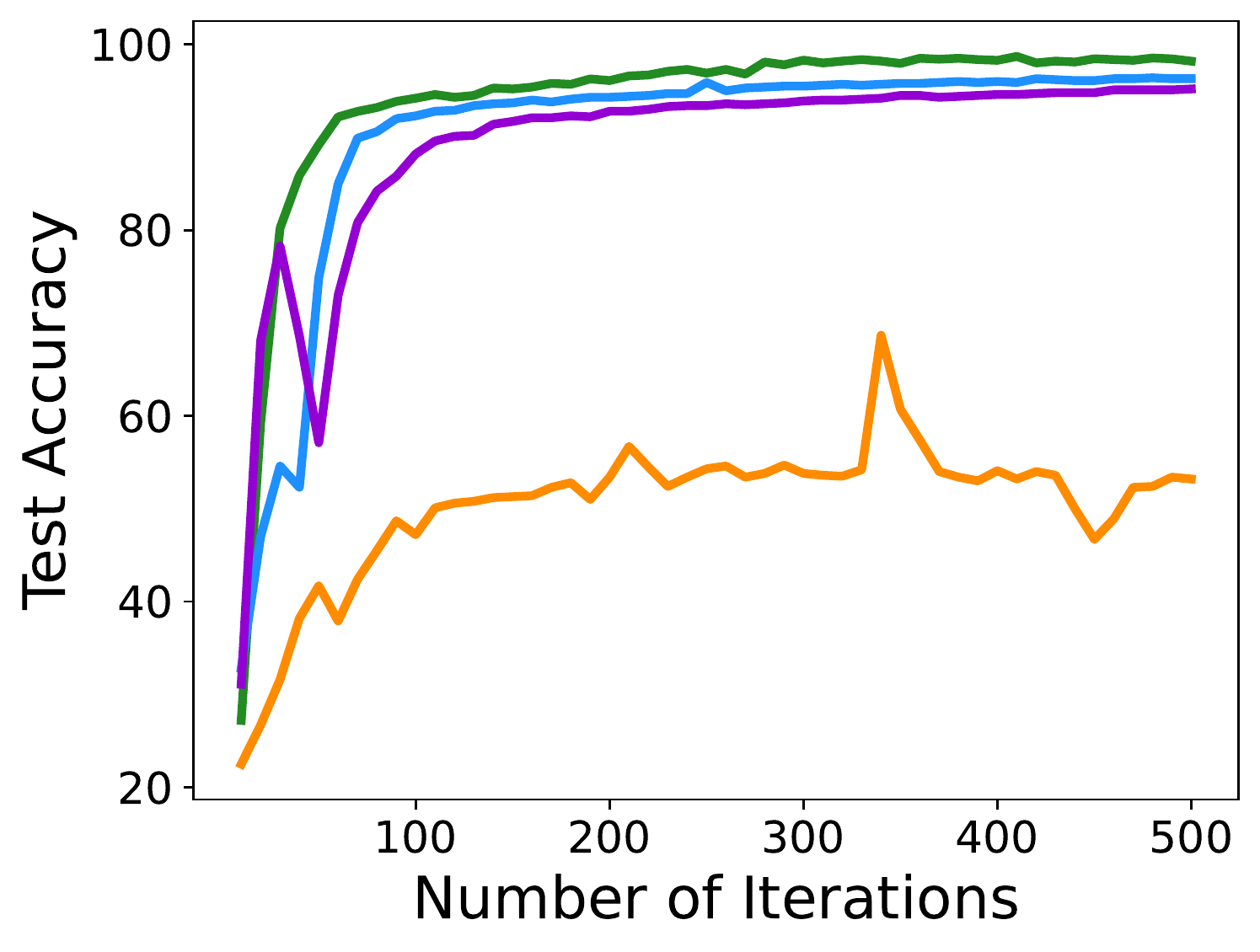}  
 \caption{EMNIST: Additive noise attack with Zeno++ similarity validation predicate.}\label{fig:EMNIST:Zeno}\end{subfigure} 
         \begin{subfigure}{0.24\linewidth}
    \centering \includegraphics[width=0.9\linewidth]{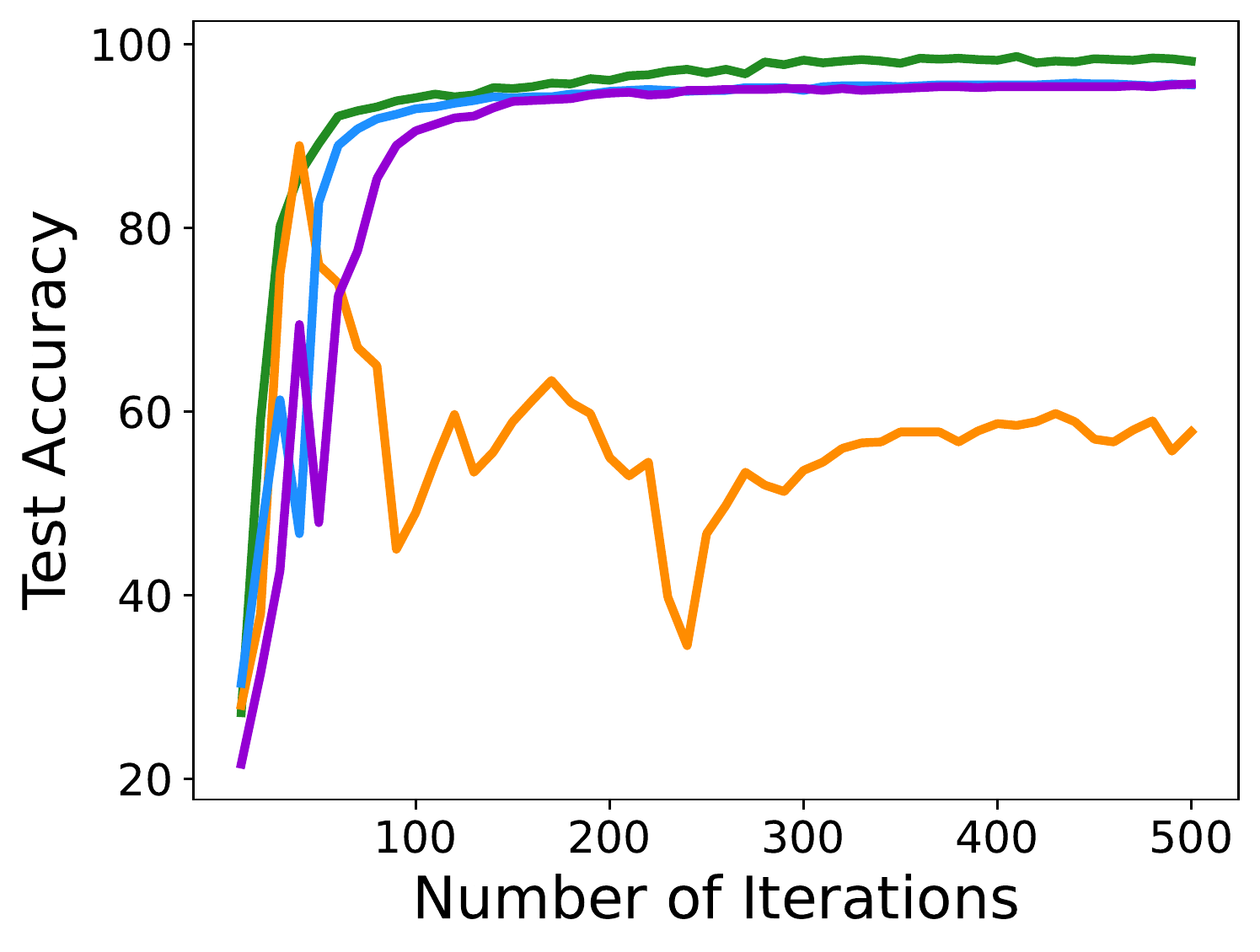}  
 \caption{EMNIST: Scaling attack with cosine similarity validation predicate.}
        \label{fig:EMNIST:Cosine}\end{subfigure}   \begin{subfigure}{0.24\linewidth}
        \centering
         \includegraphics[width=0.9\linewidth]{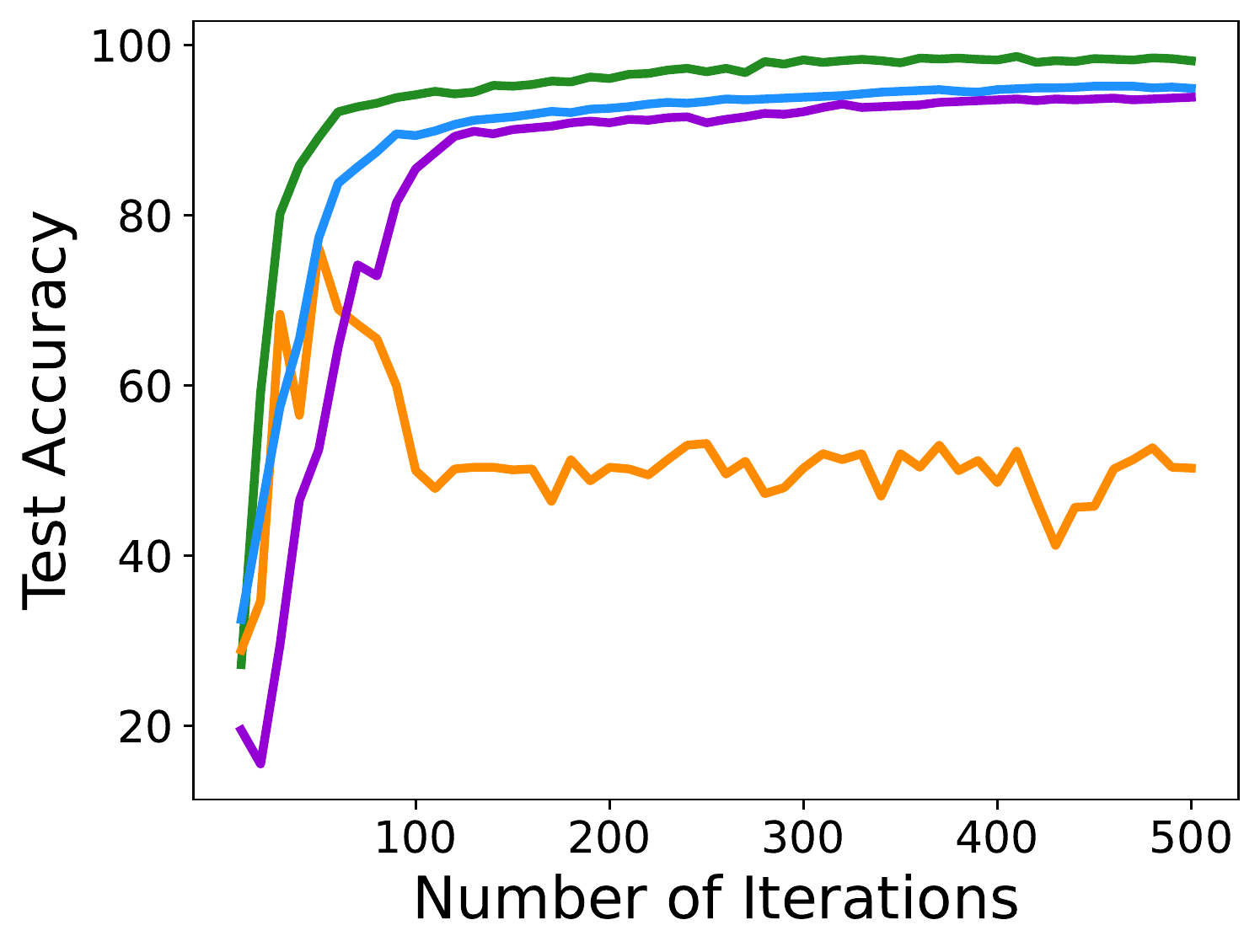}
       \caption{EMNIST: Sign flip attack with norm ball validation predicate.}
        \label{fig:EMNIST:Ball}
    \end{subfigure}
        \begin{subfigure}{0.24\linewidth}
    \centering \includegraphics[width=0.9\linewidth]{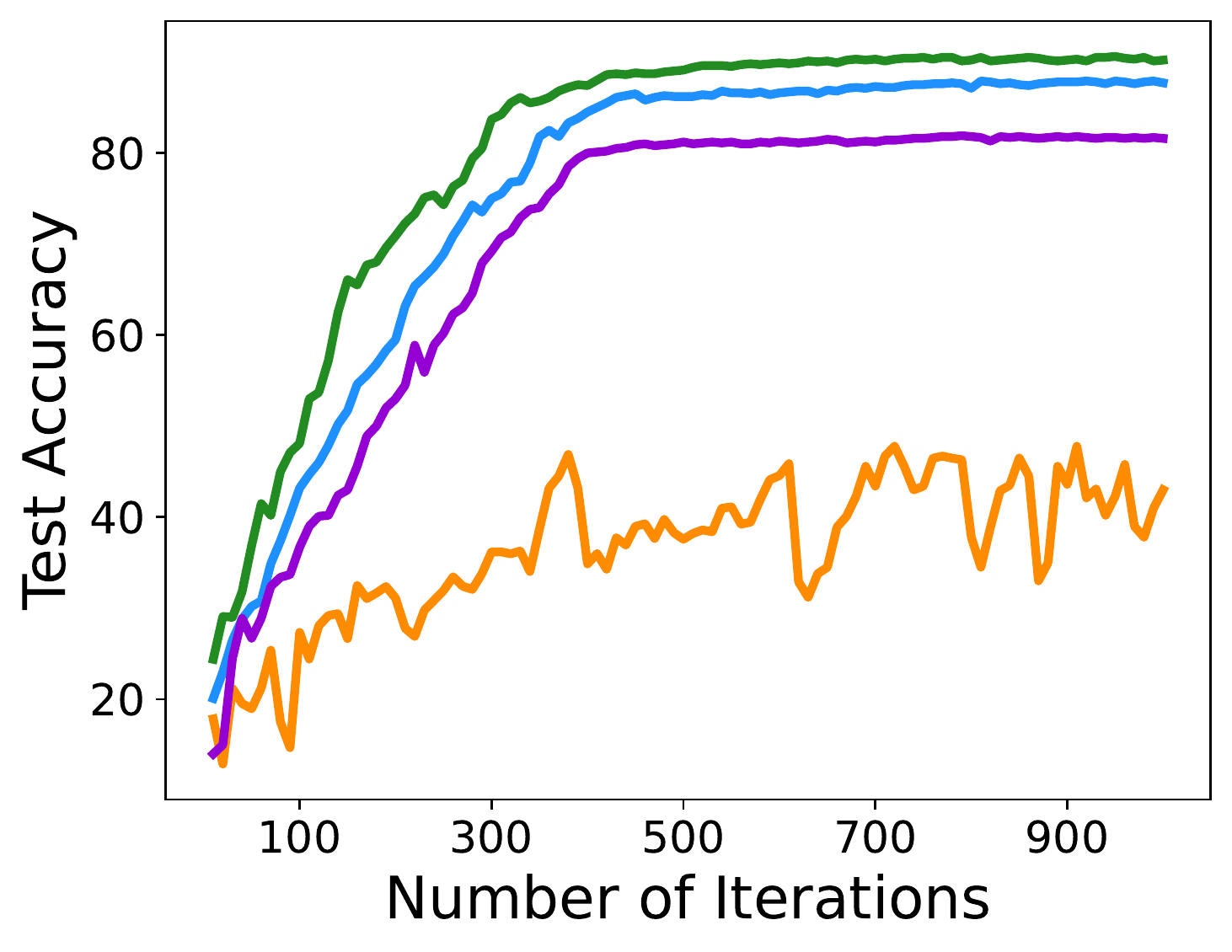}  
 \caption{CIFAR-10: Min-Max attack with Zeno++ validation predicate.}\label{fig:CIFAR:Zeno}
 \end{subfigure}
 \begin{subfigure}{0.24\linewidth}
    \centering \includegraphics[width=0.9\linewidth]{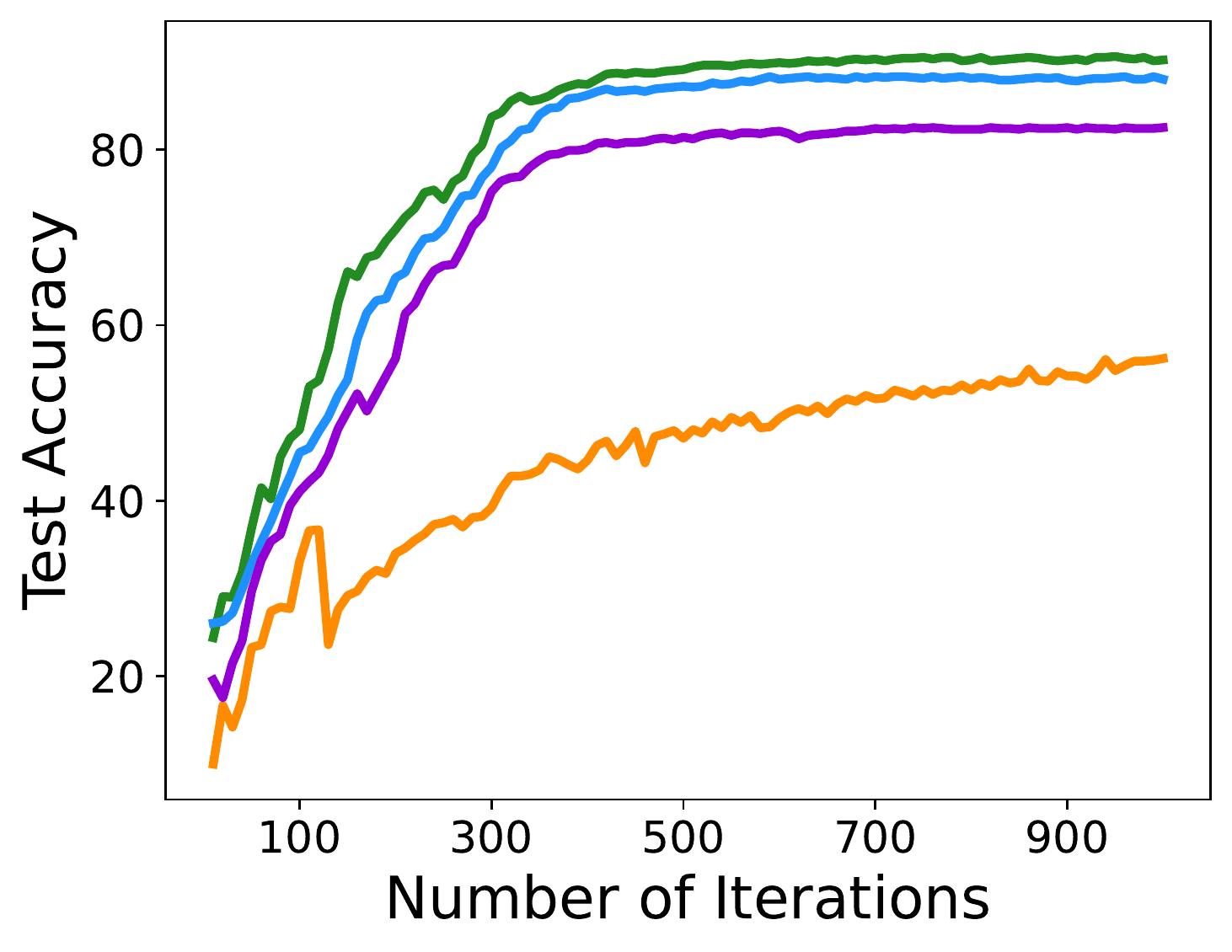}  
 \caption{CIFAR-10: Scaling attack with norm bound validation predicate}\label{fig:CIFAR:Norm}\end{subfigure}
 \\
          \begin{subfigure}{0.24\linewidth}
    \centering \includegraphics[width=0.9\linewidth]{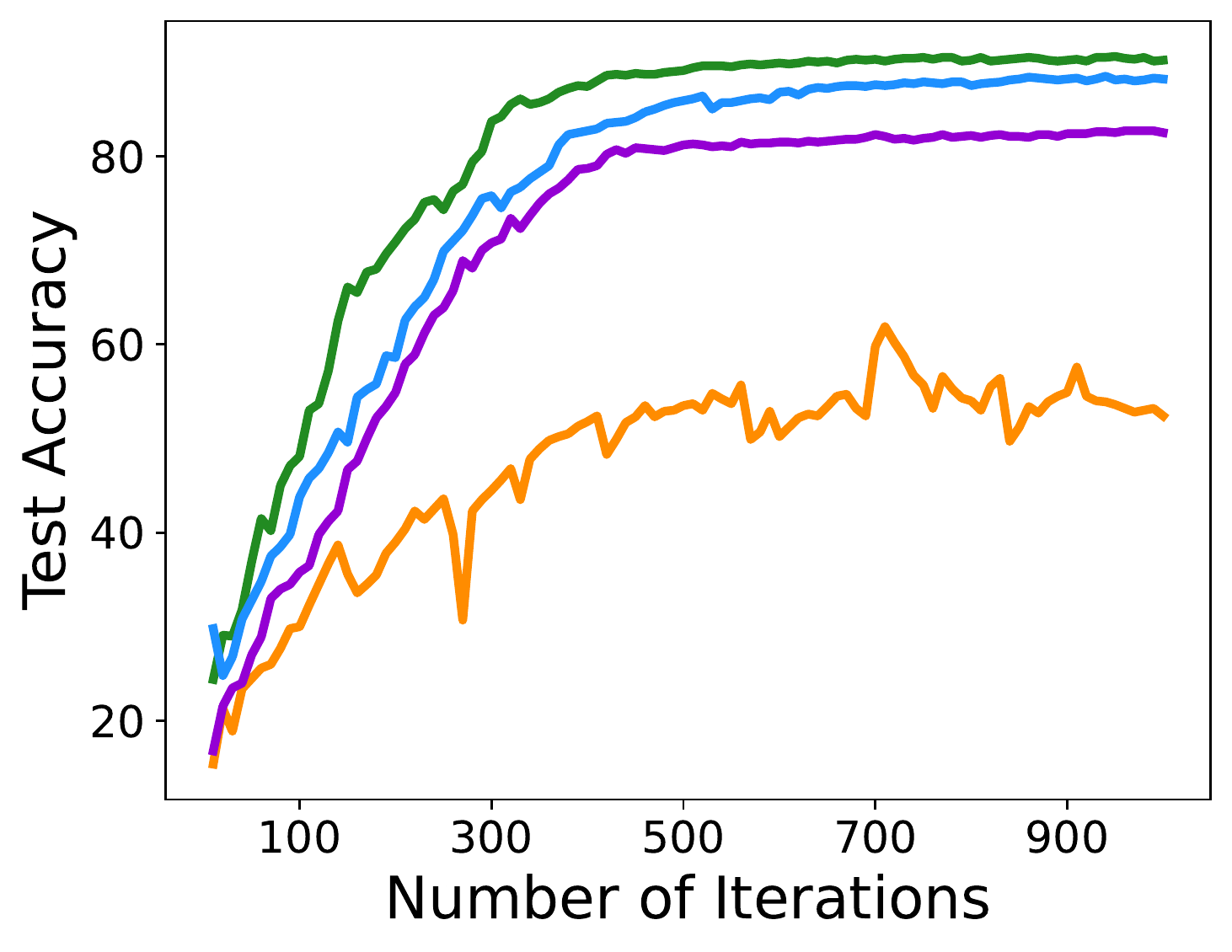}   
 \caption{CIFAR-10: Sign flip attack with norm ball validation predicate.}
        \label{fig:CIFAR:Ball}\end{subfigure}
        
   \caption{Accuracy analysis of \name~continued. Test accuracy is shown as a function of the FL iteration for different datasets and attacks.}
   \label{fig:app:eval}\vspace{-0.3cm}
\end{figure*}

\noindent\textbf{Private Validation Predicate.}  If $\Vl(\cdot)$ contains some secrets of the server $\Ser$, we can employ multiple servers where the computation of $\Vl(u)$ is done at the servers \cite{corrigan2017prio}. We leave a single-server solution of this problem for future work.

\noindent\textbf{Byzantine-Robust Aggregation.} In \name, the integrity check is done individually on each client update, independent of all other clients.
An alternative approach to compare the  local
model updates of \textit{all} the clients (via pairwise distance/ cosine similarity)  \cite{blanchard2017byzantine,blanchard2017machine,Cao2021FLTrustBF, Fang2020LocalMP,Mhamdi} and remove statistical outliers before using
them to update the global model. A general framework to support secure Byzantine-robust aggregations rules, such as above, is an interesting future direction.

\noindent\textbf{\scalebox{0.9}{$\Vl(\cdot)$} Structure.} If \scalebox{0.9}{$\Vl(\cdot)$} contains repeated structures, the $G$-gate technique~\cite{Boneh2019} can improve efficiency. 

\noindent\textbf{Complex Aggregation Rules.} 
\name~can be used for more complex aggregation rules, such as mode,  by extending SNIP with affine-aggregatable encodings (AFE)~\cite{corrigan2017prio}.

\noindent\textbf{Differential Privacy.} The privacy guarantees of \name~can be enhanced by using differential privacy (DP) to reveal a \textit{noisy} aggregate using techniques such as~\cite{Kairouz2021TheDD}. Adding DP would also provide additional robustness guarantees~\cite{sun2019backdoor,naseri2021local}. 

\noindent\textbf{Scaling \name.} Our experimental results in Sec. \ref{sec:evaluation} show that \name~has reasonable performance for clients sizes up to $250$. One way of scaling \name~for larger client sizes can be by dividing the clients into smaller subsets of size $\sim250$ and then running \name~for each of these subsets \cite{scale-FL}.

\noindent\textbf{Towards poly-logarithmic complexity.} Currently,  dominant term in the complexity is $O(mnd)$ which results in a $O(n^2)$ dependence on $n$ (since we consider $m$ is a fraction of $n$). This can be reduced to $O(n\log^2 nd)$ by using the techniques from \cite{Bell2020}.  Specifically, instead of having each client verify the proofs of all others (complete graph for the verification) we can follow the exact construction of the $k$-regular graph $G$ from \cite{Bell2020} such that $k=O(\log n$) and only neighbors in $G$ act as verifiers for each other. The exact steps are as follows:  \begin{enumerate} \item Each client $C_i$ generates $n$ shares. It sends the corresponding shares to its $k$-neighbors (according to graph G from [9])  and the verification can be done on them as described currently in \name. Note that these shares follow a $t$-out-of-$n$ scheme where $t<k$.
\item $C_i$ encrypts the shares for the non-neighbors using a threshold (denoted by $t_{enc}$) fully homomorphic encryption scheme such as BGV~\cite{BGV} (the threshold variant can be obtained using work such as \cite{Thresh1,Thresh2}). Note that this threshold, $t_{enc}>m$ is different from that of the secret shares. This encryption is necessary for ensuring data privacy since for the threshold of the shares we could have $t<m$.

\item For the aggregation step, first the clients check the validity of the shares of its non-neighbors  (this can be done via homomorphic multiplications as shown by Feldman~\cite{Feldman87}). Next, only the shares corresponding to the clients that (i) pass the first step of input verification,  and (ii) have valid shares, are aggregated. Note that the shares corresponding to neighbors (for verification) can be encrypted using the public key of the encryption scheme for this step.

\item Each client now has the ciphertext of its share of the aggregate (corresponding to $(i,U_{i})$ where $U_{i}=\sum_{C_j\in C\setminus C*}u_{ji}$ in the current EIFFeL protocol) which is sent to the server. The server performs the reconstruction directly on these ciphertexts (using their homomorphic property) and obtains the ciphertext of the final aggregate. This can then be decrypted with the help of the clients to obtain the final aggregate in the clear.
\end{enumerate}

\end{document}